\documentclass[onecolumn]{IEEEtran}
\usepackage{basicreq}

\usepackage[top=1.5in, bottom=1.25in, left=1.25in, right=1.25in]{geometry}

\def\L{\protect\Lambda}
\def\Ln{\Lambda^{(n)}}
\def\Dta{\Delta}
\def\cB{\mathcal{B}}
\def\bE{\mathbb{E}}

\def\hH{\widehat{H}}
\def\lba{\lambda}

\newcommand{\nchoosek}[2]{\begin{pmatrix}#1 \\#2 \end{pmatrix}}
\def\Supp{\mathrm{Supp}}
\def\Fp{\mathbb{F}_p}
\def\cG{\mathcal{G}}
\def\cA{\mathcal{A}}

\def\dta{\delta}
\def\Su{\mathbb{S}(\u)}
\def\tcol{}

\title{Some ``Goodness'' Properties of LDA Lattices}

%\author{
%Shashank~Vatedka,
%Navin~Kashyap
%\thanks{S.~Vatedka and N.~Kashyap (\{shashank,nkashyap\}@ece.iisc.ernet.in) are with the Department of Electrical Communication Engineering, Indian Institute of Science, Bangalore, India. }
%\thanks{This work has been accepted for presentation at the 2015 IEEE Information Theory Workshop (ITW).}
%}
% \author{\firstname{Shashank}~\surname{Vatedka}}
% % ¤¥áì à §¡¨¥­¨¥ ­  áâà®ª¨ ®áãé¥áâ¢«ï¥âáï  ¢â®¬ â¨ç¥áª¨ ¨«¨ ª®¬ ­¤®© \\
% \email{shashank@ece.iisc.ernet.in}
% \affiliation{%
% Department of ECE, Indian Institute of Science, Bengaluru, India-560012.
% }%
% \author{\firstname{Navin}~\surname{Kashyap}}
% \email{nkashyap@ece.iisc.ernet.in}
% \affiliation{%
% Department of ECE, Indian Institute of Science, Bengaluru, India-560012.
% }%
\author{Shashank Vatedka and Navin Kashyap
\thanks{This work was presented in part at the 2015 IEEE Information Theory Workshop at Jerusalem, Israel.}
\thanks{Shashank Vatedka and Navin Kashyap are with the Dept.\ of Electrical Communication Engineering,
Indian Institute of Science, Bengaluru, India.
Email: {$\{$shashank,nkashyap$\}$@ece.iisc.ernet.in}}}

\begin{document}

% \fontencoding {T1}
%     \fontfamily {ptm}
%     \fontseries {m}
%     \fontshape {n}
%     \fontsize {10pt} {13pt}
%     \linespread {1.0}
%     \selectfont

 \maketitle
 
 \begin{abstract}
 We study some structural properties of Construction-A lattices obtained from low density parity check (LDPC) codes over prime fields.
 Such lattices are called low density Construction-A (LDA) lattices, and permit low-complexity belief propagation decoding for 
 transmission over Gaussian channels. It has been shown that LDA lattices achieve the capacity of the power constrained additive white Gaussian noise (AWGN) channel  with closest lattice-point decoding,
 and simulations suggested that they perform well under belief propagation  decoding. We continue this line of work,
 and prove that these lattices are good for packing and mean squared error (MSE) quantization, and that their duals are good for packing.
 With this, we can conclude that codes constructed using nested LDA lattices can achieve the capacity of the power constrained AWGN channel, the capacity of the dirty paper channel,
 the rates guaranteed by the compute-and-forward protocol, and the best known rates for bidirectional relaying with perfect secrecy.
%   Recently, di Pietro et al.\ introduced a new class of lattices called Low-Density Construction-A (LDA) lattices which were based on Low Density Parity Check (LDPC) codes over prime fields.
%   These lattices permit polynomial-time message passing algorithms for decoding over additive white Gaussian noise channels, and it was shown that 
%   they are provably good under closest lattice point decoding. In this article, we carry on this line of work and prove that these lattices are
%   good for packing, mean squared error (MSE) quantization, and the duals are good for packing.
 \end{abstract}

 %\renewcommand{\rmdefault}{ptm}
 %%%%%%%%%%%%%%%%%%%%%%%%%%%%%%%%%%%%%%%%%%%%%%%%%%%%%%%%%%%%%%%%%%%%%%%%%%%%%%%%%%%%%
 %%%%%%%%%%%%%%%%%%%%%%%%%%%%%%%%%%%%%%%%%%%%%%%%%%%%%%%%%%%%%%%%%%%%%%%%%%%%%%%%%%%%%%
 \section{Introduction}\label{sec:intro}

 Nested lattice coding for communication over Gaussian networks has received considerable attention in recent times. 
 It has been shown~\cite{Erez04} that nested lattice codes with closest lattice-point decoding can achieve the capacity of the power constrained additive white Gaussian noise 
 (AWGN)
 channel. They are also known to achieve the capacity of the dirty-paper channel~\cite{Erez_dpaper05}. 
 Inspired by these results, they have been applied to design protocols for reliable communication over wireless Gaussian networks. 
 They have been used with much success for the interference channel~\cite{Bresler10,Vishwanath10}, the Gaussian bidirectional relay channel~\cite{Wilson,Nazer11}, and generalized to the problem of physical layer network coding~\cite{Baik,Nazer11}
 for multiuser Gaussian channels. 
 Nested lattice coding has also been used for security in wiretap channels~\cite{Belfiore10,Ling13} and bidirectional
 relay networks~\cite{HeYenerstrong,Vatedka13}.
 For a more comprehensive treatment of lattices and their applications in communication problems, see~\cite{Zamirbook}.
 
 Constructing lattices that have good structural properties is a problem that has been studied for a long time. 
 Poltyrev~\cite{Poltyrev} studied lattices
 in the context of coding for reliable transmission over the AWGN channel without power constraints, and showed that
 there exist lattices which are ``good'' for AWGN channel coding, i.e., achieve a vanishingly small probability of error for all
 sufficiently small values of the noise variance. 
 In addition to coding for the AWGN channel, lattices were also studied in prior literature in the context of several other problems such as sphere packing, sphere covering, and MSE quantization.
 In the sphere packing problem, we want to find an arrangement of non-intersecting spheres of a given radius that maximizes the average number of spheres packed per unit volume.
 On the other hand, the covering problem asks for an optimal covering of space by spheres of a given radius, that minimizes the average number of spheres per unit volume.
 In the MSE quantization problem, we want to find a minimal set of codewords which will ensure that the 
 average mean squared error/distortion is less than a specified quantity. The use of lattices to generate good sphere packings, sphere coverings, and quantizers is a well-studied problem~\cite{Conway,Zamirbook}.
 
 Finding lattices with good stuctural properties is of particular importance in designing lattice codes 
 that use nested lattice shaping for power constrained Gaussian channels. 
 A poorly designed shaping region leads to loss in transmission rates.
 It was shown in~\cite{Erez04} that using nested lattice codes,
 where the fine lattices are good for AWGN channel coding (in the sense of Poltyrev's definition) and the coarse lattices are good for MSE quantization, we can achieve the capacity
 of the power constrained AWGN channel. Furthermore, the rates guaranteed by~\cite{Wilson, Nazer11} for bidirectional relaying and the compute-and-forward protocol are achievable using nested lattices that satisfy
 the aforementioned properties.  It was shown that if in addition to the above properties, the duals of the coarse lattices
 are also good for packing, then a rate of $\frac{1}{2}\log_2 \text{SNR}-\log_2(2e)$ (where SNR denotes the signal-to-noise ratio) can be achieved with perfect (Shannon) secrecy over the bidirectional relay~\cite{Vatedka13}.
 
 Instead of studying arbitrary lattices, it is easier to study lattices that have a special structure, i.e., 
 lattices constructed by lifting a linear code over a prime field to $\R^n$. One such technique to obtain
 lattices from linear codes is Construction A~\cite{Conway}, where the lattice is obtained by tessellating
 the codewords of the linear code (now viewed as points in $\R^n$) across the Euclidean space. It was shown in~\cite{Erez05}
 that if we pick a linear code uniformly at random, then the resulting Construction-A lattice is
 asymptotically good for covering, packing, MSE quantization, and AWGN channel coding with high probability.
 
 The problem with general Construction-A lattices is the complexity of closest lattice-point decoding. There is no known
 polynomial-time algorithm for decoding Construction-A lattices obtained from arbitrary linear codes. A natural way of circumventing this is to restrict 
 ourselves to LDPC codes to construct lattices.
 We can then use low-complexity belief propagation (BP) decoders instead of the closest lattice-point decoder which has exponential complexity.
 Such lattices, termed low-density Construction-A (LDA) lattices, were introduced in in~\cite{lda_itw12}.
 Simulation results in~\cite{diPietrothesis,Tunali13} showed that these lattices perform well
 with BP decoding. While there is no formal proof that these lattices are good under BP decoding, it was proved in~\cite{lda_ita13} that LDA lattices are good for AWGN channel coding, and subsequently shown in~\cite{diPietrothesis,diPietro_journal} that nested LDA lattices achieve
 the capacity of the power constrained AWGN channel with
 closest lattice-point decoding. In this paper, we show that LDA lattices have several other goodness properties. 
 We will prove that a randomly chosen LDA lattice (whose parameters satisfy certain conditions) is good for packing and MSE quantization with probability tending to $1$ as $n\to\infty$.
 In addition, we will show that the dual of a randomly chosen LDA lattice is good for packing with probability tending to $1$ as $n\to\infty$. 
This means that the capacities of the power constrained AWGN channel and the dirty paper channel, the rates guaranteed by compute-and-forward framework~\cite{Nazer11},
and the rates guaranteed by~\cite{Vatedka13} for perfectly secure bidirectional relaying can all be achieved using nested LDA lattices 
(with closest lattice-point decoding). However, showing that the aforementioned results can all be achieved using belief propagation
decoding still remains an open problem.
% While it was shown that 
% % these lattices perform well under closest lattice point decoding~\cite{diPietrothesis}, other structural properties of these lattices have not been studied so far.
Even though other AWGN-good lattice constructions that permit low-complexity decoding algorithms have been proposed~\cite{Sommer, Yan13}, this is the first instance where such a class of lattices
have been shown to satisfy other goodness properties, and this is the main contribution of this work.

The rest of the paper is organized as follows: We describe the notation and state some basic definitions in the next two subsections. Section~\ref{sec:LDA_ensemble} describes the 
ensemble of lattices, and the main result is stated in Theorem~\ref{theorem:LDA_simultaneousgoodness}. Some preliminary lemmas are stated in Section~\ref{sec:prelim_lemmas}. 
This is then followed by results on the various goodness properties of 
lattices in the LDA ensemble. In Section~\ref{sec:channelcoding}, the goodness of these lattices for channel coding is described. This is followed by
Section~\ref{sec:LDApacking} on the packing goodness of LDA lattices. In Section~\ref{sec:LDA_MSEquantization}, we discuss sufficient conditions
for goodness of these lattices for MSE quantization. We then prove the goodness of the duals for  packing in Section~\ref{sec:LDA_dualpacking},
and conclude with some final remarks in Section~\ref{sec:remarks}. Some of the technical proofs are given in the appendices.

\section{Notation and Basic Definitions}
\subsection{Notation}\label{sec:notation}
The set of integers is denoted by $\Z$, and the set of reals by $\R$. For a prime number $p$, the symbol $\Fp$ denotes the field of integers modulo $p$.
Matrices are denoted by uppercase letters, such as $A$, and column vectors by boldface lowercase letters, such as $\u$. The $\ell^2$ (or Euclidean) norm of
a vector $\u$ is denoted by $\Vert \u \Vert$. 
The support of a vector $\u$ is the set of all coordinates of $\u$ which are not zero, and is denoted by $\mathrm{Supp}(\u)$.
If $\mathcal{A}$ is a finite set, then $|\mathcal{A}|$ is the number of elements in $\mathcal{A}$.
The same notation is used for the absolute value of a real number $r$ ($|r|$), but the meaning should be clear from the context. If $\cA$ and $\cB$ are two subsets of $\R^n$,
and $\alpha, \beta$ are real numbers, then $\alpha\cA+\beta\cB$ is defined to be $\{ \alpha \x+\beta\y:\x\in \cA, \y\in\cB \}$. Similarly, for  $\x\in \R^n$, 
we define $\x+\alpha \cB=\{\x+\alpha\y: \y\in\cB   \}$.

We define $\cB$ to be the (closed) unit ball in $n$ dimensions centered at $\0$.
For $\x\in \R^n$, and $r>0$,  the $n$ dimensional closed ball in $\R^n$ centered at $\x$ and having radius $r$ is denoted by $r\cB+\x:=\{ r\u+\x:\u\in \cB \}$.
We also define $V_n:=\text{vol}(\cB)$,  the volume of a unit ball in $n$ dimensions. 

For $0\leq a\leq 1$, $h_2(a):=-a\log_2 a-(1-a)\log_2(1-a)$ denotes
the binary entropy of $a$. 
If $f(n)$ is a sequence indexed by $n\in \{ 1,2,3,\ldots \}$, then we say that $f(n)=o(1)$ if $f(n)\to 0$ as $n\to\infty$.

\subsection{Basic Definitions}\label{sec:basicdefs}

\begin{figure}
\begin{center} 
% \setcaptionmargin{5mm}
% \onelinecaptionstrue
 %\resizebox{10cm}{!}{\input{Lattice_param.pdf_t}}
 \includegraphics[width=10cm]{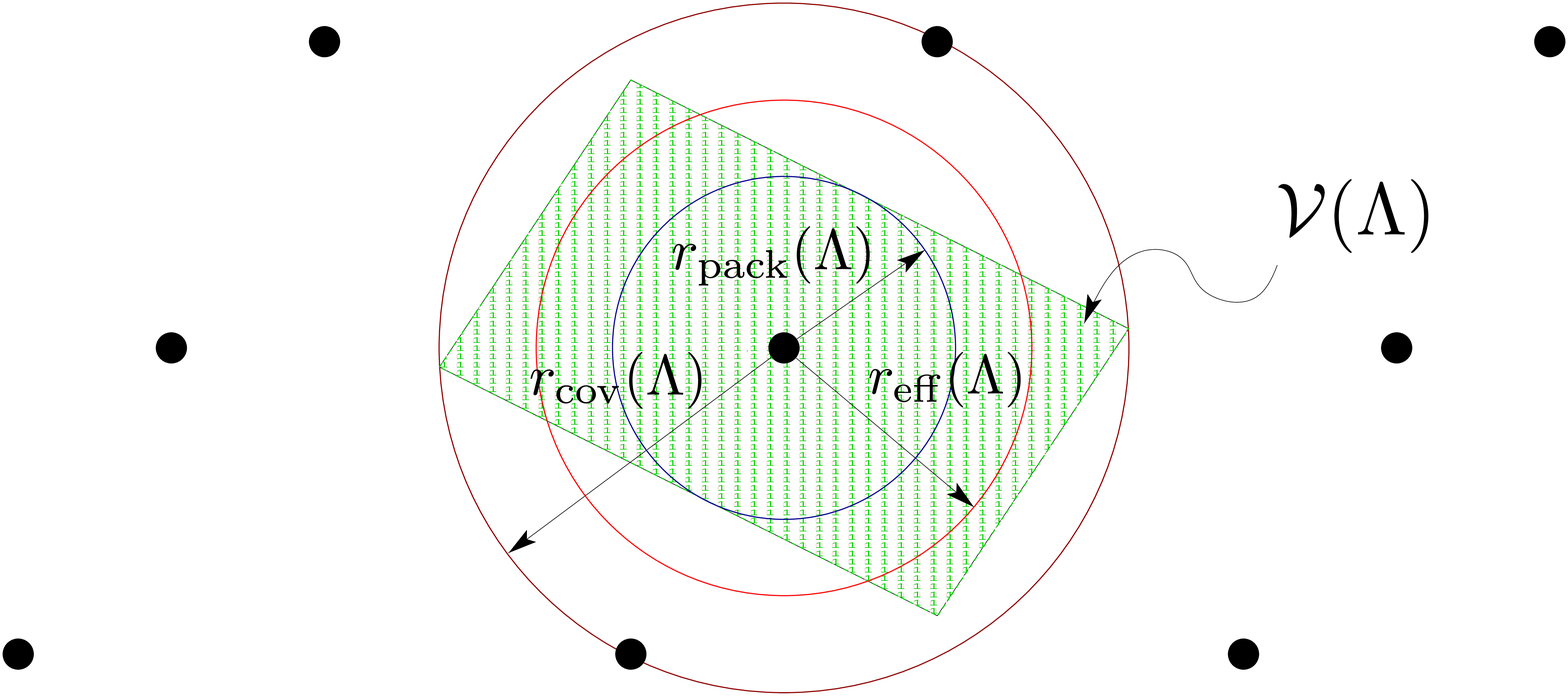}
% \captionstyle{normal}
\end{center}
\caption{Some important parameters of a lattice.}
\label{fig:latticeparam}
\end{figure}
%  \begin{figure*}[t!]
%  \setcaptionmargin{5mm}
%  %\onelinecaptionsfalse
%  % for a multi-line caption
%  \onelinecaptionstrue
%  % for a one-line caption
%  \includegraphics[width=10cm]{Lattice_param.eps}
%  \captionstyle{normal}
%  \caption{Some important parameters of a lattice.}
%  \label{fig:latticeparam}
%  \end{figure*} 

We will state some basic definitions related to lattices. The interested reader is directed to~\cite{Erez05,Zamirbook} for more details.
Let $A$ be a full-rank $n\times n$ matrix with real-valued entries. Then, the set of all integer-linear combinations of the columns of $A$ forms an additive group and
is called an $n$-dimensional lattice, i.e., $\L=A\Z^n:=\{ A \x:\x\in \Z^n \}$. 
The matrix $A$ is called a \emph{generator matrix} for $\L$. The \emph{dual lattice} of $\L$, denoted by $\L^{*}$, is defined as $\L^*:=\{\y\in \R^n:\x^{T}\y\in \Z,\; \forall \x\in \L\}$. 
If $A$ is a generator matrix for $\L$, then $(A^{-1})^T$ is a generator matrix for $\L^*$. 

The set of all points in $\R^n$ for which the zero vector is the closest lattice point (in terms of the $\ell^2$ norm), with ties decided according to a 
fixed rule, is called the \emph{fundamental Voronoi region},
and is denoted by $\cV(\L)$. The set of all translates of $\cV(\L)$ by points in $\L$ partitions $\R^n$ into sets called \emph{Voronoi regions}. 

The \emph{packing radius} of $\L$, $\rpack(\L)$, is the radius of the largest $n$-dimensional open ball
that is contained in the fundamental Voronoi region. The \emph{covering radius} of $\L$, $\rcov(\L)$, is the radius of the smallest closed ball that contains $\cV(\L)$. 
Let $\vol(\L)$ be the volume of the fundamental Voronoi region. 
Then, the \emph{effective radius} of $\L$ is defined to be the radius of the $n$-dimensional ball having volume $\vol(\L)$, and is denoted by $\reff(\L)$.
These parameters are illustrated for a lattice in two dimensions in Fig.~\ref{fig:latticeparam}.

If $\L,\L_0$ are $n$-dimensional lattices satisfying $\L_0\subset \L$, then $\L_0$ is said to be \emph{nested} within $\L$, or $\L_0$ is called a \emph{sublattice} of $\L$.
The lattice $\L$ is called the \emph{fine lattice}, and $\L_0$ is called the \emph{coarse lattice}.
The quotient group $\L/\L_0$ has
\[
    |\L/\L_0|=\frac{\text{vol}(\L_0)}{\text{vol}(\L)}
\]
elements, and the above quantity is called the \emph{nesting ratio}. This is equal to the number of points of $\L$ within $\cV(\L_0)$.

We now formally define the ``goodness'' properties that we want lattices to satisfy. A sequence of lattices, $\{\Ln \}$ (indexed by the dimension, $n$),
is \emph{good for packing} if\footnote{
%begin tcol
   \tcol{
The definition of packing goodness is derived from the best known lower bound of $1/2$ for the asymptotic value of $\rpack(\L)/\reff(\L)$ due to Minkowski~\cite{Minkowski} and Hlawka~\cite{Hlawka}. See~\cite{Erez05} for a discussion of the same, and~\cite{Rogersbook} for more details regarding the sphere packing problem.
   }%end tcol
} 
\[
 \limsup_{n\to\infty}\frac{\rpack(\Ln)}{\reff(\Ln)}\geq \frac{1}{2}.
\]

Lattices have been well-studied in the context of vector quantization, where the aim is to
obtain a codebook of minimum rate while ensuring that the average distortion (which is the mean squared error in this case)
is below a threshold.
The \emph{normalized second moment per dimension} of an $n$-dimensional lattice $\L$ is defined as
\begin{equation}
 {G}(\L)=\frac{1}{n\left(\text{vol}(\L) \right)^{1+2/n}}\int_{\cV(\L)}\Vert \y \Vert^2\: d\y.
\label{eq:Glambda}
\end{equation}
This is equal to the normalized second moment of a random variable (the error vector in the context of quantization) which is uniformly distributed
over the fundamental Voronoi region of $\L$, and we want this to be as small as possible.
The normalized second moment of any lattice is bounded from below by that of an $n$-dimensional sphere,
which is equal to $1/(2\pi e)$ (see e.g.,~\cite{Erez05}).
A sequence of lattices $\{\Ln\}$ is said to be \emph{good for MSE quantization} if ${G}(\Ln)\to\frac{1}{2\pi e}$ as $n\to\infty$.

We also want to use lattices to design good codebooks for reliable transmission over additive noise channels.
Classically, a lattice was defined to be good for AWGN channel coding~\cite{Erez05} if with high probability, the closest lattice-point decoder returned the 
actual lattice point that was transmitted over an AWGN channel without power constraints. This notion was made slightly more general in~\cite{Ordentlich}, 
using the notion of semi norm-ergodic noise:
\begin{definition}[\cite{Ordentlich}]
 A sequence of random vectors $\{\z^{(n)}\}$ (where $\z^{(n)}$ is an $n$-dimensional random vector) having  second moment per dimension $\sigma^2:=\frac{1}{n}\mathbb{E}[\Vert\z^{(n)}\Vert^2]$ for all $n$, is said to be semi norm-ergodic if
 for every $\delta>0$,
 \[
  \Pr[\z^{(n)}\notin (\sqrt{(1+\delta)n\sigma^2})\cB]\to 0\text{ as }n\to\infty.
 \]
 \label{def:seminorm_ergodic}
\end{definition}

As remarked in~\cite{Ordentlich}, any  zero-mean noise  whose components are independent and identically distributed (i.i.d.\ ) is semi norm-ergodic.
% Given a lattice $\L$, we define the lattice quantizer, $Q_{\L}:\R^n\to\L$, to be the function
% which maps a point in $\R^n$ to the closest (in terms of Euclidean distance) lattice point.
We say that a sequence of  lattices $\{ \Ln \}$ is \emph{good for coding in presence of semi norm-ergodic noise} if for every sequence of semi norm-ergodic noise vectors $\{\z^{(n)}\}$,  with second moment per dimension
equal to $\sigma^2:= \frac{1}{n}\mathbb{E}[\Vert\z^{(n)}\Vert^2]$, the probability that the lattice point closest to $\z^{(n)}$ is not $\0$ goes to zero as
$n\to\infty$, i.e.,
\[
 \Pr[\z^{(n)}\notin \cV(\Ln)]\to 0 \text{ as }n\to\infty,
\]
as long as $(\text{vol}(\Ln))^{2/n}>2\pi e \sigma^2$ for all sufficiently large $n$.
%begin tcol
\tcol{  Similarly, we say that a sequence of lattices $\{ \Ln \}$ is \emph{good for AWGN channel coding} if for every sequence of  noise vectors $\{\z^{(n)}\}$, with independent and identically distributed (i.i.d.) Gaussian components with mean $0$ and variance $\sigma^2$, the probability that the lattice point closest to $\z^{(n)}$ is not $\0$ goes to zero as
$n\to\infty$, i.e.,\footnote{Note that this is weaker than the definition used in e.g.,~\cite{Erez04,Erez05}, where $\Pr[\z^{(n)}\notin \cV(\Ln)]$ is required to go to zero exponentially in $n$.}   
\[
 \Pr[\z^{(n)}\notin \cV(\Ln)]\to 0 \text{ as }n\to\infty,
\]
as long as $(\text{vol}(\Ln))^{2/n}>2\pi e \sigma^2$ for all sufficiently large $n$.  }%end tcol
 
 An LDPC code can be defined by its parity check matrix, or by the corresponding edge-labeled Tanner graph~\cite{RichardsonBook}. 
 A $(\Dta_V,\Dta_C)$-regular bipartite graph $\cG=((V,C),\mathcal{E})$ is defined as an undirected bipartite graph with every left vertex (i.e., every vertex in $V$)
 having degree $\Dta_V$, and every right vertex (i.e., every vertex in $C$) having degree $\Dta_C$. 
 The vertices in $V$ are also called the variable nodes, and those in $C$ are called parity check (or simply, check) nodes.
 %begin tcol
\tcol{  The graph $\mathcal{G}$ is the Tanner graph of a binary linear code with parity check matrix $\hH$.
The matrix $\hH$ has entries from $\{ 0,1 \}$, and the $(i,j)$th entry is $1$ if and only if there is an edge in $\cG$ between $i$ and $j$. 
}%end tcol
 If $\mathcal{A}$ is a subset of $V$ (resp.\ $\mathcal{A}'\subset C$), then $N(\mathcal{A})$ is the neighbourhood of $\mathcal{A}$, defined as
 $N(\mathcal{A}):=\{ v\in C: (u,v)\in \mathcal{E} \text{ for some }u\in \mathcal{A}\}$ (resp.\ $N(\mathcal{A'}):=\{ u\in V: (u,v)\in \mathcal{E}\text{ for some }v\in \mathcal{A}' \}$).

 %%%%%%%%%%%%%%%%%%%%%%%%%%%%%%%%%%%%%%%%%%%%%%%%%%%%%%%%%%%%%%%%%%%%%%%%%%%%%%%%%%%%%%%%%%%
 \section{The Ensemble of LDA Lattices}\label{sec:LDA_ensemble}
 
 Throughout this paper, $\lba$ and $R$ are real numbers chosen so that $\lba>0$, and $1> R > 0$.
For $n\in \Z^+$, define $k:=\lceil nR\rceil$. 
For each $n\in\Z^+$, let $p$ (which is a sequence indexed by $n$) be the smallest prime number greater than or equal to $n^\lba$, and $\Fp$ denote the field of integers modulo $p$.
% For ease of notation, the symbol $\cB$ denotes $B_n(0,1)$.
% If $H$ is an $m\times n$ matrix, then $(H)_{i,j}$ is the $(i,j)$th entry in $H$.
% For a subset $\cS$ of the vertices of an undirected graph $\cG=(V,E)$, we define the vertex neighbourhood of $S$ to be $N(S):= \{ v\in V\backslash S: \exists u\in S, (v,u)\in E \}$.

We study the constant-degree LDA ensemble introduced in~\cite{lda_ita13,diPietrothesis}. 
Specifically, let $\mathcal{G}$ denote a $(\Dta_V,\Dta_C)$-regular bipartite graph ($\Dta_V<\Dta_C$), with $n$ variable nodes,  $\frac{n\Dta_V }{\Dta_C}$ check nodes, and satisfying $R=1-(\Dta_V/\Dta_C)$. 
% Let $V$ be the set of variable nodes, and $C$ denote the set of parity check nodes.
The graph $\cG$ is required to satisfy certain expansion properties, which are stated in the definition below.

\begin{definition}[\cite{diPietrothesis}, Definition 3.3]
 Let $A,\alpha,B,\beta$ be positive real numbers satisfying $1\leq\alpha<A$, and $\frac{1}{1-R}<\beta<\min\{ \frac{2}{1-R}, B \}$. 
 %begin tcol
 \tcol{  Let $\epsilon$ and $\vartheta$ be two small positive 
constants satisfying $\epsilon<(1-R)/A$ and $\vartheta<1/(B(1-R))$.
}%end tcol
The graph $\cG$ is said to be $(\alpha, A,\beta,B,\epsilon,\vartheta)$-good if 
\begin{enumerate}
 \item[(L1)] If $S\subset V$, and $|S|\leq \lceil \epsilon n\rceil$, then $|N(S)|\geq A|S|$.
 \item[(L2)] If $S\subset V$, and $|S|\leq \left\lceil \frac{n(1-R)}{2\alpha} \right\rceil$, then $|N(S)|\geq \alpha |S|$.
 \item[(R1)] If $T\subset C$, and $|T|\leq \vartheta n(1-R)$, then, $|N(T)|\geq B|T|$.
 \item[(R2)] If $T\subset C$, and $|T|\leq \frac{n(1-R)}{2}$, then $|N(T)|\geq \beta |T|$.
\end{enumerate}
\label{defn:goodgraphs}
\end{definition}
%begin tcol
\tcol{
We call an infinite sequence of $(\Delta_V,\Delta_C)$ graphs $\{\cG^{(n)}:n\in \mathcal{I}\subset \Z^+\}$ to be ``superexpanders'' with parameters $(\alpha,A,\beta,B,\epsilon,\vartheta)$ if for each $n$ in the index set $\mathcal{I}$, we have $\cG^{(n)}$ being  $(\alpha, A,\beta,B,\epsilon,\vartheta)$-good.
}%end tcol

%begin tcol
\tcol{
The term ``expander graph'' was first coined by Bassalygo and Pinsker~\cite{BassalygoPinsker}, and Bassalygo~\cite{Bassalygo81} subsequently showed that random graphs are expanders. 
Expander graphs were used to construct codes for the binary symmetric channel in~\cite{Sipser_expandercode}, and they have been used to solve various problems in coding theory and computer science.
A survey of expander graphs and their applications can be found in~\cite{Expandersurvey}.
 We remark that we require more expansion properties that what is typically used in the literature. In most works, only the left expansion properties, i.e., (L1) and (L2) are used~\cite{Bassalygo81}, or $\Delta_V$ is chosen to be equal to $\Delta_C$, and $\vartheta=\epsilon, A=B, \alpha=\beta$~\cite{Expandersurvey}. However, we want both the variable nodes and the check nodes to expand, hence the term ``superexpander''.
      }%end tcol

The following lemma by di~Pietro~\cite{diPietrothesis} asserts that a randomly chosen graph satisfies the above properties with high probability.

\begin{lemma}[\cite{diPietrothesis}, Lemma 3.3]
 Let $\cG$ be chosen uniformly at random from the standard ensemble~\cite[Definition~3.15]{RichardsonBook} of $(\Dta_V,\Dta_C)$-regular bipartite graphs with $n$ variable nodes.
Let $\epsilon$ and $\vartheta$ be positive constants.
% \begin{equation}
%  0<\epsilon<\frac{(1-R)(\Dta_V-A-1)}{A(\Dta_V-2+R)},
% \label{eq:epsilon_condns}
% \end{equation}
% \begin{equation}
%  0<\vartheta<\frac{\Dta_V-(B+1)(1-R)}{B(1-R)(\Dta_V-2+R)}.
% \label{eq:vartheta_condns}
% \end{equation}
If $\Dta_V$ satisfies
%begin tcol 
  \tcol{
\begin{align}
 \Dta_V > \max\Bigg\{ & \frac{h_2\left(\frac{1-R}{2\alpha}\right)+1-R}{h_2\left(\frac{1-R}{2\alpha}\right)-\frac{1}{2}h_2\left(\frac{1-R}{\alpha}\right)},R+2\alpha, A+1,\frac{h_2(\epsilon)+(1-R)h_2\left(\frac{A\epsilon}{1-R}\right)}{h_2(\epsilon)-\frac{A\epsilon}{1-R}h_2\left( \frac{1-R}{A} \right)}, &\notag \\
                           & \frac{1-R+h_2\left(\frac{\beta(1-R)}{2}\right)}{1-\frac{\beta(1-R)}{2}h_2\left(\frac{1}{\beta(1-R)}\right)}, \frac{(2+\beta R)(1-R)}{2-\beta(1-R)}, (1-R)(B+1), & \notag \\
                           & \frac{(1-R)h_2(\vartheta)+h_2(B\vartheta(1-R))}{h_2(\vartheta)-B\vartheta(1-R)h_2\left( \frac{1}{B(1-R)}\right)}, \frac{(A+1)(1-R)-A\epsilon (2-R)}{1-R-A\epsilon},&\notag\\
                           & \frac{B+1-\vartheta B(2-R)}{\frac{1}{1-R}-\vartheta B}\Bigg\} , &\label{eq:Dta_V_condns}
\end{align}
   }%end tcol
then the probability that $\cG$ is not $(\alpha, A,\beta , B,\epsilon,\vartheta)$-good tends to zero as $n\to\infty$.
\label{lemma:goodgraphs}
\end{lemma}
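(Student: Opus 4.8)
\begin{IEEEproof}
This is a cited result, so the plan is to outline the first--moment argument one would run. The idea is a union bound carried out separately for each of the four expansion requirements (L1)--(L2) on the variable side and (R1)--(R2) on the check side, working in the configuration--model description of the standard ensemble. First I would realize $\cG$ by attaching $\Dta_V$ sockets to each of the $n$ variable nodes and $\Dta_C$ sockets to each of the $m:=n(1-R)$ check nodes (so that $n\Dta_V=m\Dta_C=:E$, using $R=1-\Dta_V/\Dta_C$) and drawing a uniformly random perfect matching of the $2E$ sockets. For a fixed $S\subseteq V$ with $|S|=s$ and a fixed $T\subseteq C$ with $|T|=t$, the event $N(S)\subseteq T$ forces the $s\Dta_V$ sockets of $S$ into the $t\Dta_C$ sockets of $T$; counting matchings gives
\[
\Pr[N(S)\subseteq T]=\frac{(t\Dta_C)^{\underline{s\Dta_V}}}{E^{\underline{s\Dta_V}}}=\binom{t\Dta_C}{s\Dta_V}\Big/\binom{n\Dta_V}{s\Dta_V},
\]
interpreted as $0$ when $t\Dta_C<s\Dta_V$, and symmetrically (with $(\Dta_V,n)$ and $(\Dta_C,m)$ exchanged) for $\Pr[N(T)\subseteq S]$ when a check set maps into a variable set.

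Next I would set up the union bounds. Condition (L1) fails exactly when some $S$ with $|S|=s\le\lceil\epsilon n\rceil$ has $|N(S)|\le\lceil As\rceil-1$; bounding by a union over $S$ and over $T\supseteq N(S)$ of size $\lceil As\rceil-1$, applying $\binom{N}{K}\le 2^{N h_2(K/N)}$ to the numerator binomials and $\binom{N}{K}\ge (N+1)^{-1}2^{N h_2(K/N)}$ to the denominator, and writing $s=\sigma n$, one gets $\Pr[\text{(L1) fails}]\le\sum_{s}2^{\,n\,g_{\mathrm{L1}}(\sigma)+o(n)}$ with
\[
g_{\mathrm{L1}}(\sigma)=(1-\Dta_V)\,h_2(\sigma)+(1-R)\,h_2\!\Big(\tfrac{A\sigma}{1-R}\Big)+\Dta_V\,\tfrac{A\sigma}{1-R}\,h_2\!\Big(\tfrac{1-R}{A}\Big).
\]
The rate functions $g_{\mathrm{L2}},g_{\mathrm{R1}},g_{\mathrm{R2}}$ are obtained the same way: replace $A$ by $\alpha$ for (L2), and use the check--side bookkeeping ($n\leftrightarrow m$, $\Dta_V\leftrightarrow\Dta_C$) with expansion $B$ (resp.\ $\beta$) for (R1) (resp.\ (R2)); in the latter two the natural parameter is the check--set fraction $\tau=t/m$. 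It then suffices to show each $g_\bullet$ is strictly negative on its range, with $n\,g_\bullet\to-\infty$ fast enough near $0$ that the $o(n)$--term sum still vanishes (the $s=O(1)$ terms then contribute only $n^{-\Theta(1)}$).

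The key point is that everything reduces to checking the \emph{endpoints} of the four size ranges, and that the hypotheses on $\Dta_V$ are precisely the conditions that make this reduction valid. Each $g_\bullet$ has $g_\bullet(0)=0$ and $g_\bullet'(0^+)=-\infty$ exactly when the coefficient of $h_2(\sigma)$ is negative, which is where the entries $A+1$, $R+2\alpha$, $(1-R)(B+1)$ come from (the $g_{\mathrm{R2}}$ analogue $(1-R)(\beta+1)$ is implied, since $\beta>1/(1-R)$, by the stronger entry in its block). To get $g_\bullet<0$ on the rest of its range I would show $g_\bullet$ is \emph{convex} there: for instance $g_{\mathrm{L1}}''(\sigma)=\frac{\log_2 e}{\sigma}\,h(\sigma)$ with $h(\sigma)=\frac{\Dta_V-1}{1-\sigma}-\frac{A}{1-A\sigma/(1-R)}$, and the equation $h(\sigma)=0$ is equivalent to a linear equation in $\sigma$, so $h$ has at most one zero on $(0,(1-R)/A)$; since $h(0)=\Dta_V-1-A>0$ (this is $\Dta_V>A+1$) and $h(\epsilon)>0$ (this is exactly the entry $\tfrac{(A+1)(1-R)-A\epsilon(2-R)}{1-R-A\epsilon}$), we get $h>0$ on all of $[0,\epsilon]$, hence $g_{\mathrm{L1}}$ convex there. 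A convex function vanishing at $0$ lies below the chord to $(\sigma_{\max},g_\bullet(\sigma_{\max}))$, so it is $<0$ on $(0,\sigma_{\max}]$ once $g_\bullet(\sigma_{\max})<0$; imposing $g_\bullet(\sigma_{\max})<0$ at the four endpoints $\sigma_{\max}\in\{\epsilon,\ \tfrac{1-R}{2\alpha},\ \vartheta(1-R),\ \tfrac12\}$ and solving for $\Dta_V$ yields the four entropy--ratio entries of \eqref{eq:Dta_V_condns} (e.g.\ $g_{\mathrm{L1}}(\epsilon)<0$ rearranges to $\Dta_V>\frac{h_2(\epsilon)+(1-R)h_2(A\epsilon/(1-R))}{h_2(\epsilon)-\frac{A\epsilon}{1-R}h_2((1-R)/A)}$, the denominator being positive because $x\mapsto h_2(x)/x$ is decreasing and $\epsilon<(1-R)/A$), while the analogous endpoint--convexity conditions for $g_{\mathrm{L2}},g_{\mathrm{R1}},g_{\mathrm{R2}}$ produce $R+2\alpha$, $\tfrac{B+1-\vartheta B(2-R)}{\frac{1}{1-R}-\vartheta B}$ and $\tfrac{(2+\beta R)(1-R)}{2-\beta(1-R)}$ respectively. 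Each of the four failure probabilities is then $o(1)$ once $\Dta_V$ exceeds the corresponding block of \eqref{eq:Dta_V_condns}, and a final union bound over the four conditions finishes the argument.

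The routine part is the entropy bookkeeping; the two delicate points I would expect to take care over are (i) the configuration--model estimate above, including the degenerate regime $t\Dta_C\approx s\Dta_V$ and the ceilings $\lceil As\rceil-1$, and (ii) the convexity analysis of each $g_\bullet$, which is the only place the less transparent entries of \eqref{eq:Dta_V_condns} (those encoding convexity at the endpoint) are genuinely used --- without them one controls only the endpoints and the behaviour near $0$, not the whole interval. I expect (ii) to be the main obstacle.
\end{IEEEproof}
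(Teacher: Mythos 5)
The paper itself gives no proof of this lemma --- it is quoted from di~Pietro's thesis (Lemma~3.3) --- and your sketch is precisely the standard argument used there: a first-moment union bound in the configuration model over small variable/check subsets, entropy bounds on the binomial coefficients, and negativity of the resulting exponent established via its behaviour near zero, convexity on the relevant interval, and a check at the endpoint, which is exactly how the entries of \eqref{eq:Dta_V_condns} arise (your identification of the endpoint and convexity conditions checks out, e.g.\ $g_{\mathrm{L1}}(\epsilon)<0$ and $h(\epsilon)>0$ do rearrange to the fourth and ninth entries; the only slip is attributing $R+2\alpha$ to the near-zero coefficient, whereas it is the convexity-at-endpoint condition for (L2) and implies the weaker near-zero requirement $\Dta_V>\alpha+1$, as you in fact state later). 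So the proposal is correct and takes essentially the same route as the cited source.
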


 \subsection{{The $(\cG,\lba)$ LDA Ensemble}}
 Let $\lba>0$, and $1> R> 0$ be two constants, and $n\in\{ 1,2,3,\ldots \}$.
 Let $p$ be the smallest prime number greater than $n^\lba$.\footnote{In our proofs, we take $p=n^\lba$, and $k=nR$ for convenience, but choosing $p$ to be the smallest prime number greater than $n^\lba$,
 and $k=\lceil nR\rceil$
 will not change any of the results.} Let $\Dta_C:=\Dta_V/(1-R)$.
 Let us pick a $(\Dta_V,\Dta_C)$-regular bipartite graph $\cG$ with $n$ variable nodes.
 Throughout the paper, we assume that the parameters of $\cG$ satisfy the hypotheses of Lemma~\ref{lemma:goodgraphs}, and that $\cG$ is $(\alpha, A,\beta, B,\epsilon,\vartheta)$-good.
 Let $\hH$ denote the $n(1-R)\times n$ parity check matrix corresponding to the Tanner graph $\cG$.
 We describe the LDA ensemble obtained using the Tanner graph $\cG$, which will henceforth be called the $(\cG,\lba)$ LDA ensemble.
 
 We construct a new $n(1-R)\times n$ matrix, $H$, by replacing the $1$'s in $\hH$ with independent random variables uniformly distributed over $\Fp$.
For $1\leq i\leq n(1-R)$ and $1\leq j\leq n$,  let $h'_{i,j}$ be $n^2(1-R)$ i.i.d.\  random variables, each uniformly distributed over $\Fp$, and let $\widehat{h}_{i,j}$ be the $(i,j)$th entry of $\hH$. 
Then, the $(i,j)$th entry of $H$, denoted $h_{i,j}$, is given by $h_{i,j}=\widehat{h}_{i,j}h'_{i,j}$. Therefore, $h_{i,j}$ is equal to $h'_{i,j}$ if $\widehat{h}_{i,j}$ is $1$, and zero otherwise. 
For example, if
\[
  \hH=\begin{pmatrix}
     1 & 1 & 0 & 0 & 1 & 0\\
     0 & 1 & 0 & 1 & 0 & 1\\
     0 & 0 & 1 & 1 & 1 & 0\\
     1 & 0 & 1 & 0 & 0 & 1
    \end{pmatrix},
\]
then
\begin{equation}
  H=\begin{pmatrix}
     h_{11}' & h_{12}' & 0 & 0 & h_{15}' & 0\\
     0 & h_{22}' & 0 & h_{24}' & 0 & h_{26}'\\
     0 & 0 & h_{33}' & h_{34}' & h_{35}' & 0\\
     h_{41}' & 0 & h_{43}' & 0 & 0 & h_{46}'
    \end{pmatrix}.
\label{eq:H_eg}
\end{equation}

Note that the ``skeleton matrix'' $\hH$ is fixed beforehand, and the only randomness in $H$ is in the coefficients.
%begin tcol
  \tcol{
Also observe that since $h_{ij}'$ is chosen uniformly at random from $\Fp$, the $(i,j)$th entry of $H$ is zero with positive probability even if  $\widehat{h}_{ij}=1$. 
However, if $p$ grows faster than $n$, then we can use the union bound to show that for a fixed $\widehat{H}$, the probability of having a zero coefficient is small, i.e., $\Pr[\exists (i,j) \text{ such that }h_{i,j}'=0 \text{ and }\widehat{h}_{ij}=1 ]\to 0$ as $n\to\infty$.
  }%end tcol
The matrix $H$ is therefore the parity check matrix of an $n$-length $(\Dta_V,\Dta_C)$ regular LDPC code $\cC$ over $\Fp$ with high probability (if $\lba>1$). The LDA lattice $\L$ is obtained by applying Construction A to the code $\cC$,
i.e., $\L=\{\x\in \Z^n: \x\equiv \mathbf{c}\bmod p, \text{ for some } \mathbf{c}\in \cC \}$. Equivalently, if $\Phi$ denotes the natural embedding of $\Fp^n$ into $\Z^n$, then $\L=\Phi(\cC)+p\Z^n$.

\begin{figure}
% \setcaptionmargin{5mm}
% \onelinecaptionstrue
 \begin{center}
  \includegraphics[width=7cm]{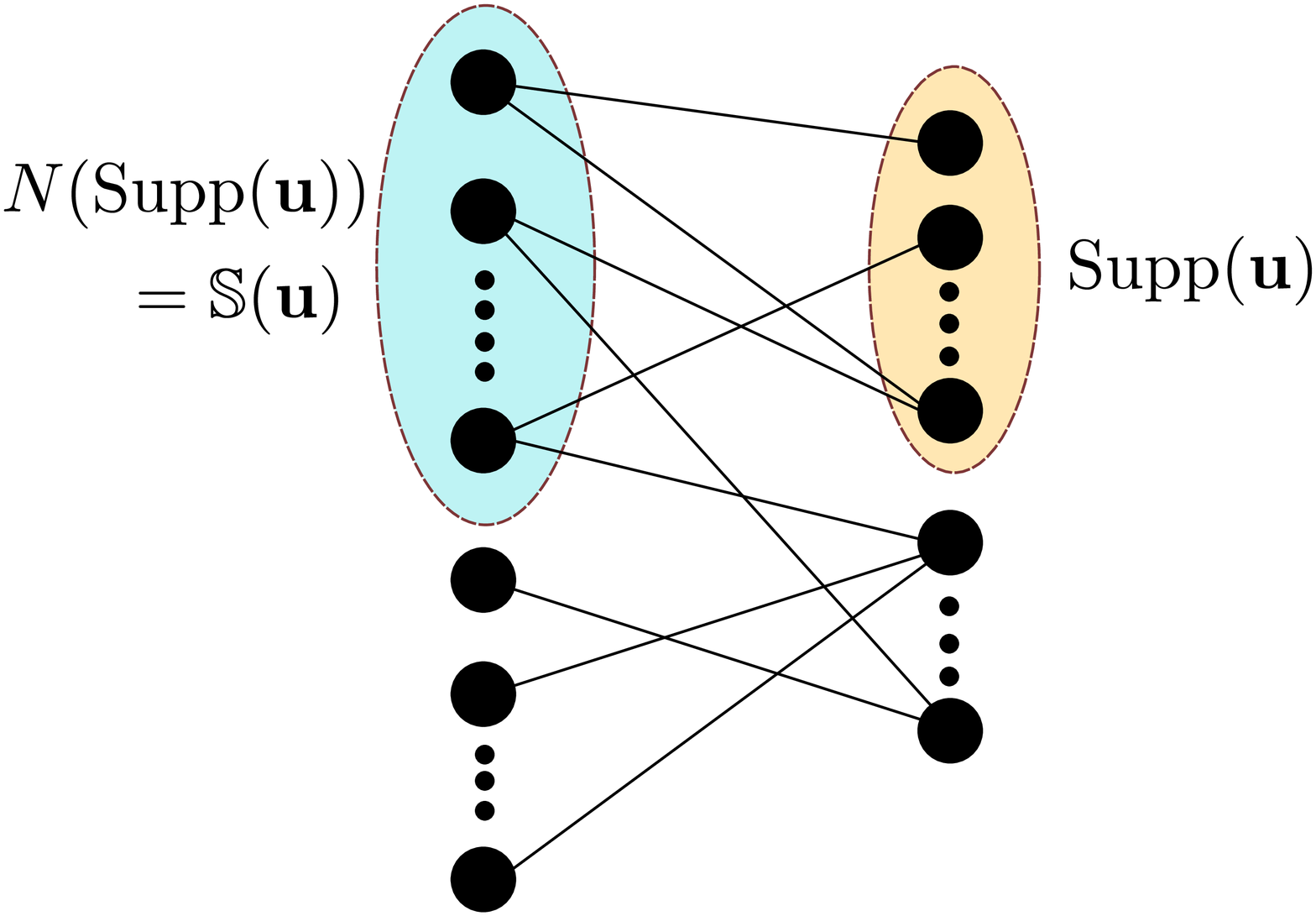}
% \captionstyle{normal}
\end{center}
\caption{Nodes corresponding to $\mathrm{Supp}(\u)$ and $\mathbb{S}(\u)$.}
\label{fig:Su_graph}
\end{figure}
% \begin{figure}
%  \begin{center}
%   \resizebox{7cm}{!}{\input{Su_matrix.pdf_t}}
%  \end{center}
%\caption{Illustration of $\mathbb{S}(\u)$ for $\mathrm{Supp}(\u)=\{2,3\}$.}
%\label{fig:Su_matrix}
% \end{figure}

For a given $\u\in \Fp^{n(1-R)}$, let us define $\mathbb{S}(\u)$ to be the set of all variable nodes that participate in the check equations $i$ for which the $i$th entry of $\u$ (i.e., $u_i$) is nonzero. 
Formally, $\mathbb{S}(\u):= \cup_{i\in \mathrm{Supp}(\u)}\mathrm{Supp}(\widehat{\h}_i)$. Equivalently, $i\in \mathbb{S}(\u)$ iff there exists 
$1\leq j\leq n(1-R)$ such that $u_j\neq 0$ and $\widehat{h}_{j,i}\neq 0$. This is illustrated in Fig.~\ref{fig:Su_graph}.

The rest of the article will be dedicated to proving the following theorem:
\begin{theorem}
 Let $A>2(1+R)$, $B>2(1+R)/(1-R)$, 
\[
 \epsilon=\frac{1-R}{A+1-R} \; \text{ and } \; \vartheta=\frac{1}{B(1-R)+1}.
\]
Suppose that $\Dta_V$ satisfies  (\ref{eq:Dta_V_condns}), and the corresponding $\cG$  is $(\alpha, A,\beta, B,\epsilon,\vartheta)$-good. Let
\begin{align}
 \lba &>\max \Bigg\{ \frac{1}{R},\frac{1}{1-R},\frac{2}{A-2(1-R)},\frac{2}{B(1-R)-2(1+R)},2\left( 1-\frac{1}{AB-1}-\frac{1}{A} \right)^{-1}, &\notag \\
                    &\hspace{6cm} \frac{1}{2(\alpha-1+R)} ,\frac{2B+3/2}{B(1-R)-1}  \Bigg\}. &\label{eq:lambda_conditions_main}
\end{align}
If we pick  $\L$ at random from the $(\cG,\lba)$ LDA ensemble, then the probability that $\L$ is simultaneously good for packing, channel coding, and MSE quantization 
tends to $1$ as $n\to\infty$. Moreover, the probability that $\L^*$ is also simultaneously good for packing, tends to $1$ as $n\to\infty$.
 \label{theorem:LDA_simultaneousgoodness}
\end{theorem}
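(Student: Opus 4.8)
The statement packages the four goodness claims that are proved separately in Sections~\ref{sec:channelcoding}--\ref{sec:LDA_dualpacking}, so the plan is to establish each and then combine by a union bound. First invoke Lemma~\ref{lemma:goodgraphs}: under~(\ref{eq:Dta_V_condns}) a random $(\Dta_V,\Dta_C)$-regular graph is $(\alpha,A,\beta,B,\epsilon,\vartheta)$-good with probability $1-o(1)$, so fix one such $\cG$; from then on the only randomness is in the coefficients $h'_{i,j}$. Record the elementary facts that $p\Z^n\subseteq\L\subseteq\Z^n$, that $H$ has full rank $n(1-R)$ and no spurious zero coefficients with probability $1-o(1)$ (two more union-bound items, using $\lba>1$, which~(\ref{eq:lambda_conditions_main}) forces), so that $\vol(\L)=p^{n(1-R)}$ and $\reff(\L)=(1+o(1))\,p^{1-R}\sqrt{n/(2\pi e)}$ --- polynomial in $n$ since $p\asymp n^{\lba}$ --- and dually $\Z^n\subseteq\L^*\subseteq(1/p)\Z^n$, $p\L^*=\Phi(\cC^\perp)+p\Z^n$ with $\cC^\perp=\mathrm{rowspace}(H)$, so $\reff(p\L^*)=(1+o(1))\,p^{R}\sqrt{n/(2\pi e)}$. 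For each of the four properties I will exhibit a failure event over the coefficient randomness with probability $o(1)$; a final union bound over the constantly many events then finishes the proof.

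\textbf{Packing and channel coding (Sections~\ref{sec:channelcoding}--\ref{sec:LDApacking}).} Both come from a first-moment count of short lattice vectors. Put $r_n=(1-\dtn)\reff(\L)$ with $\dtn\downarrow0$, and bound $\bE\,|(\L\setminus\{\0\})\cap r_n\cB|$ by partitioning $\Z^n\cap r_n\cB$ according to $m=|\Supp(\x)|$: a vector $\x$ with support $S$, $|S|=m$, not in $p\Z^n$ lies in $\L$ exactly when $H\x\equiv\0$, which --- because the coefficients feeding the checks in $N(S)$ are i.i.d.\ uniform on $\Fp$ and distinct rows of $H$ use disjoint coefficients --- happens with probability $p^{-|N(S)|}$, while vectors in $p\Z^n$ have norm $\ge p\gg r_n$ and drop out. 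Thus
\[
 \bE\,|(\L\setminus\{\0\})\cap r_n\cB|\ \le\ \sum_{m=1}^{n}\binom{n}{m}\,\bigl|\{\x\in\Z^m\cap r_n\cB:\Supp(\x)=[m]\}\bigr|\;p^{-\min_{|S|=m}|N(S)|}.
\]
The expansion hypotheses enter through lower bounds on $|N(S)|$: $|N(S)|\ge\alpha|S|$ for $|S|\le\lceil n(1-R)/(2\alpha)\rceil$ by (L2), $|N(S)|\ge A|S|$ for $|S|\le\lceil\epsilon n\rceil$ by (L1), and --- by monotonicity of $N(\cdot)$ --- $|N(S)|\ge A\lceil\epsilon n\rceil=\Theta(n)$ for all larger $|S|$. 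With $|\Z^m\cap r_n\cB|\le(Cr_n/\sqrt m)^m$ and $r_n\asymp p^{1-R}\sqrt n$, each summand tends to $0$: the small- and moderate-$|S|$ terms using the several lower bounds on $\lba$ in~(\ref{eq:lambda_conditions_main}), and the $|S|=\Theta(n)$ terms because $p^{-\Theta(n)}=n^{-\Theta(\lba n)}$ overwhelms the $2^{O(n)}$ combinatorial factors. Hence w.h.p.\ $\L$ has no nonzero point in $r_n\cB$, so $\rpack(\L)\ge r_n/2$ and $\rpack(\L)/\reff(\L)\to1/2$. Channel coding goodness is handled by the (more delicate) variant of this argument in Section~\ref{sec:channelcoding}, following~\cite{lda_ita13}: one counts lattice points closer to $\z^{(n)}$ than to $\0$, using that $\|\z^{(n)}\|\le\sqrt{(1+\delta)n\sigma^2}$ w.h.p.\ by semi norm-ergodicity, to conclude $\Pr[\z^{(n)}\notin\cV(\L)]\to0$ whenever $2\pi e\sigma^2<\vol(\L)^{2/n}$, for every semi norm-ergodic (in particular i.i.d.\ Gaussian) noise.

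\textbf{Packing of the dual (Section~\ref{sec:LDA_dualpacking}).} A nonzero point of $\L^*$ is $1/p$ times a nonzero point of $p\L^*=\Phi(\cC^\perp)+p\Z^n$; since $\reff(p\L^*)\asymp p^{R}\sqrt n\ll p$ (using $\lba(1-R)>1/2$), such a point inside $\reff(p\L^*)(1-\dtn)\cB$ must be the shortest integer representative of a nonzero codeword $H^{T}\u$ of $\cC^\perp$. For fixed $\u$ with $\Supp(\u)=T\subset C$, the $j$th coordinate of $H^{T}\u$ equals $\sum_{i\in N(j)\cap T}u_i h'_{i,j}$, which is uniform on $\Fp$ when $j\in\Su$ and $0$ otherwise, and these coordinates are independent across $j$ because distinct variable nodes use disjoint coefficients; so $H^{T}\u$ is, conditionally on $\u$, uniform over the vectors supported on $\Su=N(T)$. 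A first-moment bound over $\u$ then gives, with $\rho=\reff(p\L^*)(1-\dtn)$,
\[
 \Pr[\,\exists\,\u\ne\0:\ H^{T}\u\ \text{has a representative in}\ \rho\cB\,]\ \le\ \sum_{T\subset C}(p-1)^{|T|}\,\frac{|\Z^{|N(T)|}\cap\rho\cB|}{p^{|N(T)|}},
\]
and now the \emph{right}-expansion bounds $|N(T)|\ge B|T|$ (by (R1)) and $|N(T)|\ge\beta|T|$ (by (R2)), rising to $n$ for $|T|$ near $n(1-R)$, make every summand negligible under $B>2(1+R)/(1-R)$, $\beta>1/(1-R)$ and the matching lower bounds $\lba>2/(B(1-R)-2(1+R))$ and $\lba>(2B+3/2)/(B(1-R)-1)$. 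It is precisely this step that forces the full ``superexpander'' hypothesis, since only here do the check nodes need to expand. Consequently $\rpack(\L^*)\ge\rho/(2p)=(1-o(1))\reff(\L^*)/2$ w.h.p.

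\textbf{MSE quantization (Section~\ref{sec:LDA_MSEquantization}), and conclusion.} Since $G(\L)\ge1/(2\pi e)$ always, it suffices to show $\bE\Vert\u\Vert^2\le(1+o(1))\reff(\L)^2$ for $\u$ uniform on $\cV(\L)$. Splitting this expectation at radius $(1+\dtn)\reff(\L)$, the bulk contributes at most $(1+\dtn)^2\reff(\L)^2$ and the tail at most $\rcov(\L)^2\,\Pr[\Vert\u\Vert>(1+\dtn)\reff(\L)]\le\frac{1}{4}p^2n\cdot\Pr[\Vert\u\Vert>(1+\dtn)\reff(\L)]$, using $\rcov(\L)\le\rcov(p\Z^n)=\frac{1}{2}p\sqrt n$ from $p\Z^n\subseteq\L$. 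It then remains to show that the $\L$-periodic fraction of space lying at distance $>(1+\dtn)\reff(\L)$ from $\L$ is $o(p^{-2R})$ --- a polynomially small bound suffices. A naive first- or second-moment count of lattice points near a generic point blows up exponentially here, so this covering-type estimate must be treated by a refinement of the Erez--Litsyn--Zamir argument (see~\cite{Erez05}), in which the relevant combinatorial sums --- organized through the differences $\L\cap2\rho\cB$ exactly as in the packing count, hence again controlled by the graph's expansion --- are shown to be close. I expect this step, together with the dual-packing bound, to be the main technical obstacle; the other two properties follow the well-trodden first-moment route. Finally, a union bound over the failure of $\cG$ to be good, the failure of $H$ to have full rank or to have a spurious zero coefficient, and the four failure events above shows that $\L$ is simultaneously good for packing, channel coding, and MSE quantization while $\L^*$ is good for packing, all with probability $1-o(1)$, as claimed.
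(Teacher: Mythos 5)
Your overall architecture (prove each goodness property separately, then union-bound, with the randomness of $\cG$ handled first via Lemma~\ref{lemma:goodgraphs} and the full-rank event via Lemma~\ref{lemma:LDA_fullrank}) is exactly the paper's, and your treatments of packing, channel coding and dual packing follow the same first-moment routes used in Sections~\ref{sec:channelcoding}--\ref{sec:LDA_dualpacking} (for the dual you still owe the large-support regimes, where the paper needs Lemma~\ref{lemma:expander_left} and the degree bound $|N(T)|\ge|T|/(1-R)$ together with the careful choice of $\zeta_n$ and $C_1$, not just (R1)/(R2)). The genuine gap is in the MSE quantization part. You correctly reduce it to showing that, uniformly over the centre point, the probability (over the ensemble) that the lattice has no point within radius $(1+o(1))\reff(\L)$ is polynomially small of order $o(n^{-2\lba R})$ — this is precisely the paper's Lemma~\ref{lemma:dxL_bound}, needed so that the worst-case term $\tfrac{1}{4}p^2 n$ is killed — but you then leave this estimate unproved, declaring it ``the main technical obstacle'' and asserting that a first- or second-moment count of lattice points near a generic point ``blows up exponentially.'' That assertion is wrong, and it is exactly where the proof lives: the paper bounds $\Pr[d(\x,\L)>\reff(\L)(1+n^{-\omega})]$ by Chebyshev's inequality applied to $X_\rho$, the number of lattice points in $\x+\rho\cB$, whose mean is at least $\sqrt{\mathcal{E}(\rho)}\ge\exp\{n^{1-\omega}-o(1)\}$ and whose variance is controlled by di~Pietro's three expansion-driven terms (\ref{eq:term1})--(\ref{eq:term3}); the strengthened hypotheses $A>2(1+R)$, $B>2(1+R)/(1-R)$ and the corresponding entries of (\ref{eq:lambda_conditions_main}) are imposed precisely so that each term is at most $\mathcal{E}(\rho)\,n^{-2\lba R-\dta}$, giving $\Pr[X_\rho=0]\le 3n^{-2\lba R-\dta}(1+o(1))$. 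Without this (or an equivalent) argument, your covering-type estimate is an unproved assertion and the MSE claim — and hence the theorem — does not follow.

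Two smaller points. First, the ensemble-level bound of Lemma~\ref{lemma:dxL_bound} is for a fixed deterministic $\x$, whereas your decomposition integrates over a point uniform in $\cV(\L)$, which depends on the random lattice; you need the exchange step of Lemma~\ref{lemma:ELX_EUL} (passing through $U$ uniform on $[0,p)^n$ and using $\L$-periodicity of $d(\cdot,\L)$) to make that legitimate. Second, having bounded $\bE_{\L}[G(\L)]\le\frac{1}{2\pi e}(1+o(1))$, you still need the concentration step: the paper exploits $G(\L)>\frac{1}{2\pi e}$ pointwise and a Markov-type argument with threshold $\gamma=\sqrt{\delta(n)}$ to convert the expectation bound into ``good for MSE quantization with probability $\to1$,'' which is what the theorem asserts; your proposal stops at the expectation.
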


We will prove each of the goodness properties in separate sections. The conditions on the parameters of the lattice to ensure goodness for channel coding are stated in Theorem~\ref{theorem:LDA_awgn}.
Goodness for packing is discussed in Corollary~\ref{theorem:LDA_packing}, and MSE quantization in Theorem~\ref{theorem:LDA_MSEquantization}.
Sufficient conditions for the packing goodness of the duals of LDA lattices are given in Theorem~\ref{theorem:LDA_dualpacking}.
% Theorem~\ref{theorem:LDA_dualpacking} s
The above theorem can then be obtained by a simple application of the union bound.
But before we proceed to the main results, we will discuss some useful lemmas that we will need later on in the proofs. 

%%%%%%%%%%%%%%%%%%%%%%%%%%%%%%%%%%%%%%%%%%%%%%%%%%%%%%%%%%%%%%%%%%%%%%%%%%%%
%%%%%%%%%%%%%%%%%%%%%%%%%%%%%%%%%%%%%%%%%%%%%%%%%%%%%%%%%%%%%%%%%%%%%%%%%%%
\section{Some Preliminary Lemmas}\label{sec:prelim_lemmas}

In this section, we record some basic results that will be used in the proofs.
% We start with the following elementary fact:
% \begin{lemma}
%  Let $\gamma>0$, and $s$ be a positive integer. Then,
%  \[
%   \sum_{j=s}^{\infty}n^{-j\gamma}=n^{-s\gamma}(1+o(1)).
%  \]
% % where $o(1)\to 0$ as $n\to\infty$.
% \label{lemma:basic_sum}
% \end{lemma}
Recall that $V_n$ is the volume of a unit ball in $n$ dimensions.
We have the following upper bound on the number of integer points within a ball of radius $r$:
\begin{lemma}[Corollary of \cite{Ordentlich}, Lemma~1]
 Let $r>0$, $\y\in\R^n$, and $\cB$ denote the unit ball in $n$ dimensions. Then, 
 %begin tcol
     \tcol{
 \[
    V_n \left( \max\left\{0, r-\frac{\sqrt{n}}{2}\right\} \right)^n\leq|\Z^n\cap (\y+r\cB)|\leq  V_n \left( r+\frac{\sqrt{n}}{2} \right)^n.
 \]
   }%end tcol
Furthermore, if $m\leq n$, then 
\[
     |\{ \x\in \Z^n\cap r\cB: |\mathrm{Supp}(\x)|\leq m \}|\leq \nchoosek{n}{m} V_m \left( r+\frac{\sqrt{m}}{2} \right)^m.
\]
\label{lemma:Zn_cap_rB}
\end{lemma}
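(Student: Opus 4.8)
The plan is to use the standard unit-cube packing/covering comparison. To each $\x\in\Z^n$ I would attach the cube $Q(\x):=\x+[-\tfrac{1}{2},\tfrac{1}{2}]^n$; these cubes have pairwise disjoint interiors and unit volume, and every point of $Q(\x)$ lies within Euclidean distance $\tfrac{\sqrt{n}}{2}$ of $\x$. Hence, if $\x\in\Z^n\cap(\y+r\cB)$, then $Q(\x)\subseteq\y+\bigl(r+\tfrac{\sqrt{n}}{2}\bigr)\cB$, so the disjoint union of these cubes over all such $\x$ is contained in a ball of radius $r+\tfrac{\sqrt{n}}{2}$; comparing volumes gives $|\Z^n\cap(\y+r\cB)|\le V_n\bigl(r+\tfrac{\sqrt{n}}{2}\bigr)^n$, which is the upper bound.

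For the matching lower bound I would run the comparison in the covering direction. If $r\le\tfrac{\sqrt{n}}{2}$ the asserted bound is $0$ and there is nothing to prove, so assume $r>\tfrac{\sqrt{n}}{2}$. Given $\z\in\y+\bigl(r-\tfrac{\sqrt{n}}{2}\bigr)\cB$, round each coordinate of $\z$ to a nearest integer to obtain $\x\in\Z^n$ with $\|\z-\x\|\le\tfrac{\sqrt{n}}{2}$; then $\|\x-\y\|\le\|\x-\z\|+\|\z-\y\|\le r$, so $\x\in\Z^n\cap(\y+r\cB)$ and $\z\in Q(\x)$. Therefore $\y+\bigl(r-\tfrac{\sqrt{n}}{2}\bigr)\cB$ is covered by the cubes $Q(\x)$, $\x\in\Z^n\cap(\y+r\cB)$, and comparing volumes yields $V_n\bigl(r-\tfrac{\sqrt{n}}{2}\bigr)^n\le|\Z^n\cap(\y+r\cB)|$, completing the first displayed inequality.

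For the support-restricted estimate I would split according to which coordinates are permitted to be nonzero: any $\x$ with $|\Supp(\x)|\le m$ lies in the $m$-dimensional coordinate subspace $\R^S:=\{\x\in\R^n:\Supp(\x)\subseteq S\}$ for some $S\subseteq\{1,\dots,n\}$ with $|S|=m$, and there are $\nchoosek{n}{m}$ such $S$. Inside a fixed $\R^S$, the slice $\R^S\cap r\cB$ is an $m$-dimensional ball of radius $r$ about $\0$ and $\R^S\cap\Z^n$ is a copy of $\Z^m$, so the upper bound just proved (applied in dimension $m$ with centre $\0$) gives at most $V_m\bigl(r+\tfrac{\sqrt{m}}{2}\bigr)^m$ such points. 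A union bound over the $\nchoosek{n}{m}$ subspaces then yields the claim; points lying in several $\R^S$ are over-counted, which only strengthens the inequality.

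I do not expect any substantive obstacle here: the whole argument is pure volume comparison. The only points needing minor care are the degenerate regime $r\le\tfrac{\sqrt{n}}{2}$ in the lower bound (where the claim is vacuous) and the harmless over-counting of low-support points across the coordinate subspaces in the last part.
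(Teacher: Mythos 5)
Your argument is correct, and it is essentially the intended one: the paper gives no proof of this lemma but cites Lemma~1 of the Ordentlich--Erez reference, whose content is exactly your unit-cube packing/covering volume comparison (the covering radius of $\Z^n$ being $\sqrt{n}/2$), and the support-restricted ``corollary'' part is obtained precisely by your union bound over the $\nchoosek{n}{m}$ coordinate subspaces with the $m$-dimensional upper bound applied in each.
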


Recall the randomized construction of the parity check matrix $H$ from the $(\alpha, A,\beta, B,\epsilon,\vartheta)$-good graph $\cG$, described in the previous section.
Also recall that for $\u\in \Fp^{n(1-R)}$, $\mathbb{S}(\u)$ is the set of all variable nodes that participate in the check equations $i$ for which  $u_i\neq 0$.
We have the following result which describes the distribution of $H^T\u$.
\begin{lemma}
 Let $\u\in \Fp^{n(1-R)}$, and $\x\in \Fp^n$. Then, 
 \begin{equation*}
  \Pr[H^T\u=\x]=\begin{cases}
                 \frac{1}{p^{|\Su|}} &\text{ if }\mathrm{Supp}(\x)\subset \Su \\
                 0 & \text{else.}
                \end{cases}
 \end{equation*}
\label{lemma:PrHu_x}
\end{lemma}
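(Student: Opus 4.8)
The plan is to compute the distribution of $H^T\u$ one coordinate at a time and exploit the independence of the random coefficients across distinct columns of $H$. Writing out the $j$th coordinate, $(H^T\u)_j=\sum_{i=1}^{n(1-R)}\widehat{h}_{i,j}\,h'_{i,j}\,u_i$, the first observation is that this expression involves only the random variables $\{h'_{i,j}:1\le i\le n(1-R)\}$ sitting in the $j$th column of $H$. Since these collections of random variables are disjoint for distinct $j$, and all the $h'_{i,j}$ are globally independent, the coordinates $(H^T\u)_1,\dots,(H^T\u)_n$ are mutually independent. Hence $\Pr[H^T\u=\x]=\prod_{j=1}^n\Pr[(H^T\u)_j=x_j]$, and it suffices to analyse each factor.

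For the ``else'' case, I would note that if $j\notin\Su$ then every term $\widehat{h}_{i,j}h'_{i,j}u_i$ in the sum vanishes deterministically: either $u_i=0$ (when $i\notin\Supp(\u)$), or $\widehat{h}_{i,j}=0$ (which, by the definition $\Su=\cup_{i\in\Supp(\u)}\Supp(\widehat{\mathbf{h}}_i)$, holds for every $i\in\Supp(\u)$ when $j\notin\Su$). So $(H^T\u)_j=0$ with probability $1$. Consequently, if $\Supp(\x)\not\subset\Su$ then some coordinate $x_j$ with $j\notin\Su$ is nonzero, the corresponding factor is $0$, and $\Pr[H^T\u=\x]=0$.

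For the main case $\Supp(\x)\subset\Su$: the factors with $j\notin\Su$ equal $1$ by the argument just given, since there $x_j=0$. For $j\in\Su$, by definition there is some $i^\star\in\Supp(\u)$ with $\widehat{h}_{i^\star,j}=1$. Conditioning on all the other coefficients $\{h'_{i,j}:i\ne i^\star\}$, the quantity $(H^T\u)_j$ equals $h'_{i^\star,j}\,u_{i^\star}$ plus a constant; since $u_{i^\star}\ne 0$ in the field $\Fp$ and $h'_{i^\star,j}$ is uniform over $\Fp$, the product $h'_{i^\star,j}u_{i^\star}$ is uniform over $\Fp$, hence so is $(H^T\u)_j$. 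As this conditional law does not depend on the conditioning, $(H^T\u)_j$ is unconditionally uniform over $\Fp$, giving $\Pr[(H^T\u)_j=x_j]=1/p$. Multiplying the $|\Su|$ nontrivial factors yields $\Pr[H^T\u=\x]=p^{-|\Su|}$.

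I do not expect a genuine obstacle here; the only point needing a little care is the independence step — verifying that the coefficient sets governing distinct output coordinates really are disjoint (they live in distinct columns of $H$), so that the coordinatewise factorisation is legitimate, and that within a single coordinate one can isolate a single uniformly distributed coefficient multiplied by a nonzero field element.
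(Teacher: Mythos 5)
Your proposal is correct and follows essentially the same route as the paper's proof: analyse $(H^T\u)_j$ coordinate by coordinate, observe that it vanishes deterministically for $j\notin\Su$, that it is uniform over $\Fp$ for $j\in\Su$ because a nonzero field multiple of a uniform coefficient appears in the sum, and that the coordinates are independent since they involve disjoint sets of the $h'_{i,j}$. If anything, your conditioning argument isolating $h'_{i^\star,j}u_{i^\star}$ is slightly more explicit than the paper's appeal to a ``nontrivial linear combination'' being uniform, but the substance is identical.
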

\begin{proof}
Let $\y:= H^T\u$.
The $j$th entry of $\y$ is given by $y_j=\sum_{i=1}^{n(1-R)}h_{ij}u_i$.
 Consider any $j\in (\Su)^c$. From the definition of $\Su$, it is easy to see that the $j$th variable node does not participate in any of the parity check equations indexed by $\mathrm{Supp}(\u)$.
 Hence, $h_{ij}=0$ whenever $u_i\neq 0$. 
 %(see Fig.~\ref{fig:Su_matrix} to get a better picture). 
 Therefore, $y_j=0$. On the other hand, if $j\in \Su$, then there exists at least one $i$ such that $h_{ij}\neq 0$. 
 So, $y_j=\sum_{i\in\mathrm{Supp}(\u)}h_{ij}u_i$, being a  nontrivial linear combination of independent and uniformly distributed random variables, is also uniformly distributed over $\Fp$. 
 Moreover, it is easy to see that the $y_j$'s are independent. Therefore, 
 \[
  \Pr[y_j=a]=\begin{cases}
              1/p & \text{ if } j\in \Su\\
              0  & \text { if } j\notin \Su \text{ and }a\neq 0\\
              1 & \text{ if } j\notin \Su \text{ and }a=0.
             \end{cases}
 \]
 
%  For e.g., if we take $\u=[0\;u_2\;u_3\;0\;0]^T$, and the $H$ in (\ref{eq:H_eg}), then
%  \[
%   H^T\u=[ u_3h_{13}\;u_2h_{22}\;u_3h_{33}\;0\;u_2h_{52}\;u_3h_{63}\;0\;u_2h_{82}\;u_3h_{93}\;u_2h_{10,2}]^T.
%  \]
This completes the proof.
\end{proof}

Recall that $H$ defines a linear code over $\Fp$, where $p$ is the smallest prime greater than $n^\lba$.
 The following lemma, proved in Appendix~A, gives a lower bound on the probability of a randomly chosen $H$ not having full rank.
 \begin{lemma}
 If $B>2+(1+\dta)/\lba$ for some $\delta>0$, then 
\[
 \Pr[H \text{ is not full-rank }]\leq n^{-(2\lba+\dta)}(1+o(1)).
\]
\label{lemma:LDA_fullrank}
\end{lemma}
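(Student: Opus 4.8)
The plan is to bound the failure probability by the expected number of nonzero vectors $\u \in \Fp^{n(1-R)}$ in the left kernel of $H$, i.e., satisfying $H^T\u = \0$. Indeed, $H$ fails to be full-rank precisely when such a $\u$ exists, so a union bound gives
\[
 \Pr[H \text{ not full-rank}] \;\le\; \sum_{\u \neq \0} \Pr[H^T\u = \0].
\]
By Lemma~\ref{lemma:PrHu_x}, $\Pr[H^T\u = \0] = p^{-|\Su|}$. Hence the sum depends on $\u$ only through $|\Su|$, and the key quantity to control is how small $|\Su|$ can be as a function of $|\Supp(\u)|$. Grouping the vectors by $t := |\Supp(\u)| \in \{1,\dots,n(1-R)\}$, there are $\binom{n(1-R)}{t}(p-1)^t \le \binom{n(1-R)}{t} p^t$ such vectors, and for each of them $|\Su| = |N(\Supp(\u))|$ in the Tanner graph $\cG$ (this is exactly the definition of $\Su$). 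So the bound becomes
\[
 \Pr[H \text{ not full-rank}] \;\le\; \sum_{t=1}^{n(1-R)} \binom{n(1-R)}{t}\, p^{\,t}\, \min_{\substack{T \subset C \\ |T| = t}} p^{-|N(T)|}.
\]

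The next step is to split the range of $t$ according to the two right-expansion properties of the $(\alpha,A,\beta,B,\epsilon,\vartheta)$-good graph $\cG$ — namely (R1) and (R2) — which lower-bound $|N(T)|$. For small $t$ (say $t \le \vartheta n(1-R)$), property (R1) gives $|N(T)| \ge B t$, so the summand is at most $\binom{n(1-R)}{t} p^{-(B-1)t}$; here I would bound $\binom{n(1-R)}{t} \le (e n(1-R)/t)^t$ and use $p \ge n^\lba$ to see that the $t$-th term behaves like $\bigl(C n\, n^{-\lba(B-1)}\bigr)^t$ for a constant $C$, which decays geometrically in $t$ once $\lba(B-1) > 1$; summing the geometric series leaves the $t=1$ term dominant, of order $n \cdot n^{-\lba(B-1)} = n^{-(\lba(B-1)-1)}$. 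The hypothesis $B > 2 + (1+\dta)/\lba$ is exactly what makes $\lba(B-1) - 1 \ge \lba + \dta = 2\lba + \dta - \lba$... more precisely $\lba(B-1)-1 > \lba + \dta + (\lba - 1)\cdot 0$; one checks $\lba(B-1) - 1 > 2\lba + \dta$ iff $\lba(B-2) > 1 + \dta$ iff $B > 2 + (1+\dta)/\lba$, giving the stated $n^{-(2\lba+\dta)}$ rate. For the large-$t$ range, $\vartheta n(1-R) < t \le n(1-R)$, I would split once more: for $t \le n(1-R)/2$ use (R2), $|N(T)| \ge \beta t$ with $\beta > 1/(1-R)$, and for $t > n(1-R)/2$ note $N(T)$ is large (a constant fraction of $n$, in fact all but at most $n(1-R)/2$... using (R2) on the complement or a direct counting argument), so in both sub-ranges the exponent $p^{-|N(T)|}$ is of the form $p^{-c n}$ for a positive constant $c$, which crushes the at-most-$2^{n(1-R)} p^{n(1-R)} = 2^{n(1-R)} n^{\lba n(1-R)}$ vectors; here a cruder bound $p^{-|N(T)|+t}$ and the fact that $|N(T)| - t$ grows linearly in $n$ (since $\beta - 1 > 0$, resp.\ the constant-fraction gap) makes the whole tail superpolynomially small, hence absorbed in the $(1+o(1))$ factor.

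Collecting the pieces: the small-$t$ range contributes $n^{-(2\lba+\dta)}(1+o(1))$ and the large-$t$ range contributes $o(n^{-(2\lba+\dta)})$, which together give the claimed bound. I expect the main obstacle to be the bookkeeping in the large-$t$ regime — specifically, verifying that $|N(T)| - |T|$ is bounded below by a positive constant times $n$ uniformly over $t \in (\vartheta n(1-R),\, n(1-R)]$, using (R2) together with the trivial bound $|N(T)| \le n$; one has to be slightly careful near $t \approx n(1-R)$ where (R2) alone only gives $|N(T)| \ge \beta n(1-R)/2$, so a complementary-set or direct-degree-counting argument is needed to ensure the exponent stays linear. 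Once that linear gap is in hand, the super-polynomial decay of the tail is immediate and the rest is the geometric-series estimate above.
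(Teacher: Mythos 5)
Your overall strategy --- a union bound over nonzero $\u\in\Fp^{n(1-R)}$ with $\Pr[H^T\u=\0]=p^{-|\Su|}$ from Lemma~\ref{lemma:PrHu_x}, followed by a case split on $t=|\Supp(\u)|$ using (R1), (R2), and a degree-counting bound for the largest supports --- is exactly the structure of the paper's proof, and your handling of the two large-$t$ regimes is sound (the paper closes the regime $t>n(1-R)/2$ with the counting fact $|N(T)|\geq |T|/(1-R)$, which supplies the linear gap $|N(T)|-t\geq tR/(1-R)$ you were worried about). The genuine gap is in the small-$t$ regime. You union over \emph{all} vectors of support size $t$, i.e.\ $(p-1)^t$ coefficient choices per support, so your dominant $t=1$ term is of order $n\cdot p\cdot p^{-B}=n^{\,1-\lba(B-1)}$, and the equivalence you ``check'' is wrong: $\lba(B-1)-1>2\lba+\dta$ is equivalent to $\lba(B-3)>1+\dta$, i.e.\ $B>3+(1+\dta)/\lba$, not $B>2+(1+\dta)/\lba$. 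Under the stated hypothesis your computation therefore only delivers $n^{-(\lba+\dta)}(1+o(1))$, falling short of the claimed $n^{-(2\lba+\dta)}$ by a factor $n^{\lba}$ --- and the stronger rate is precisely what the paper needs downstream (it is invoked in the proof of Lemma~\ref{lemma:MSEgood_1}), so this is not a cosmetic loss.

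The missing idea is to quotient out scalar multiples before applying the union bound: if $H^T\u=\0$ then $H^T(c\u)=\0$ for every $c\in\Fp\setminus\{0\}$, and $|\Su|$ depends only on $\Supp(\u)$, so it suffices to union over one representative per projective class. For a fixed support $S$ with $|S|=t$ this leaves at most $p^{t-1}$ linear combinations, each vanishing with probability $p^{-|N(S)|}$; this is exactly how the paper bounds $\Pr[\chi_S=1]\leq p^{t-1}/p^{|N(S)|}$. The saved factor of $p=n^{\lba}$ turns your $t=1$ term into $O(n^{1-\lba B})$, and $1-\lba B<-(2\lba+\dta)$ is precisely the stated hypothesis $B>2+(1+\dta)/\lba$. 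With that one correction (and your large-$t$ estimates left as they are), your argument coincides with the paper's proof.
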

% The hypotheses of Theorem~\ref{theorem:LDA_simultaneousgoodness} ensure that $B>2(1+R)/(1-R)$. The hypothesis of Lemma~\ref{lemma:LDA_fullrank}
% is then satisfied if $\lba>(1+\delta)(1-R)/(4R)$. This is guaranteed, since we want $\lba>1/R$ for Theorem~\ref{theorem:LDA_simultaneousgoodness}, and as we will see later,
% any $\delta>0$ suffices for our proof.
% \textbf{EDIT: }As we will see later, we require $\lba>1/(1-R)$ for the lattices to be good for MSE quantization. The requirement that $B$ must satisfy can be reduced to $B>3-R$, which is not very restrictive.
We now proceed to prove the various goodness properties of LDA lattices.

%%%%%%%%%%%%%%%%%%%%%%%%%%%%%%%%%%%%%%%%%%%%%%%%%%%%%%%%%%%%%%%%%%%%%%%%%%%%
%%%%%%%%%%%%%%%%%%%%%%%%%%%%%%%%%%%%%%%%%%%%%%%%%%%%%%%%%%%%%%%%%%%%%%%%%%%%
\section{Goodness for Channel Coding}\label{sec:channelcoding}

% Ordentlich and Erez~\cite{Ordentlich} defined the notion of semi norm-ergodic noise as follows:
% \begin{definition}[\cite{Ordentlich}]
%  An $n$-length random vector $\z$ having a second moment per dimension $\sigma^2:=\frac{1}{n}\mathbb{E}[\Vert\z\Vert^2]$ is said to be semi norm-ergodic if
%  for every $\delta_1>0$ and $\delta_2>0$, and for all sufficiently large $n$,
%  \[
%   \Pr[\z\notin (\sqrt{(1+\delta_1)n\sigma^2})\cB]\leq \delta_2.
%  \]
%  \label{def:seminorm_ergodic}
% \end{definition}
% 
% As remarked in~\cite{Ordentlich}, any zero-mean i.i.d.\  noise is semi norm-ergodic.
% % Given a lattice $\L$, we define the lattice quantizer, $Q_{\L}:\R^n\to\L$, to be the function
% % which maps a point in $\R^n$ to the closest (in terms of Euclidean distance) lattice point.
Recall that a sequence of  lattices $\{ \Ln \}$ is good for coding in presence of semi norm-ergodic noise if for any sequence of semi norm-ergodic noise vectors $\{\z^{(n)}\}$, with second moment per dimension
equal to $\sigma^2:= \frac{1}{n}\mathbb{E}[\Vert\z^{(n)}\Vert^2]$, 
\[
 \Pr[\z^{(n)}\notin \cV(\Ln)]\to 0 \text{ as }n\to\infty
\]
as long as $(\text{vol}(\Ln))^{2/n}>2\pi e \sigma^2$ for all sufficiently large $n$. But we have 
\[
(\text{vol}(\Ln))^{2/n}=(\reff(\Ln))^2V_n^{2/n}=(\reff(\Ln))^2\frac{2\pi e}{n}(1+o(1))
\]
using Stirling's approximation. Therefore, we can equivalently say that a sequence of lattices is good for coding in presence of semi norm-ergodic noise if $\Pr[\z^{(n)}\notin \cV(\Ln)]\to 0$ as $n\to\infty$
as long as $\reff(\Ln)\geq \sqrt{n\sigma^2}(1-o(1))$. 
Note that if the noise is assumed to be i.i.d.\  Gaussian, then the above definition is weaker than the definition of AWGN (or Poltyrev) goodness defined in~\cite{Erez05},
since the probability $\Pr[\z^{(n)}\notin \cV(\Ln)]$ is not required to go to zero exponentially in $n$. However, the above definition covers a much wider class
of noise distributions. In particular, the ``effective noise'' that is present in the equivalent modulo-lattice additive noise channel in the compute-and-forward protocol~\cite{Nazer11} is semi norm-ergodic, as discussed in~\cite{Ordentlich}.

The following result was proved by di Pietro:
\begin{theorem}[\cite{diPietrothesis}, Theorem~3.2]
 Let $\L$ be a lattice chosen uniformly at random from a $(\cG,\lba)$ LDA ensemble, where $\cG$   is $(\alpha, A,\beta, B,\epsilon,\vartheta)$-good, and (\ref{eq:Dta_V_condns}) is satisfied. If
 \[
  \lba>\max\left\{\frac{1}{2(\alpha-1+R)},\frac{3}{2(A-1+R)},\frac{1}{B(1-R)-1} \right\},
 \]
then the probability that $\L$ is good for coding in presence of semi norm-ergodic noise tends to 1 as $n\to\infty$.
\label{theorem:LDA_awgn}
\end{theorem}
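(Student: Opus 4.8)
The plan is to run the classical union-bound argument for the goodness of random Construction-A lattices, the new feature being that for the $(\cG,\lba)$ LDA ensemble the probability that a prescribed integer vector lies in $\L$ is governed by the expansion of the Tanner graph $\cG$, rather than being the uniform value $p^{-n(1-R)}$ of a generic random linear code. First I would reduce to a ball: as noted after Definition~\ref{def:seminorm_ergodic}, the hypothesis $(\vol(\Ln))^{2/n}>2\pi e\sigma^2$ is equivalent to $\reff(\Ln)\ge\sqrt{n\sigma^2}(1-o(1))$, and diagonalizing over the parameter $\delta$ in the definition of a semi norm-ergodic sequence yields $\dtn\downarrow 0$ with $\Pr[\|\z^{(n)}\|>r_n]\to 0$, where $r_n:=\sqrt{(1+\dtn)n\sigma^2}\le\reff(\Ln)(1+o(1))$. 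Since $\z^{(n)}\notin\cV(\Ln)$ forces some $\x\in\Ln\setminus\{\0\}$ with $\langle\x,\z^{(n)}\rangle>\frac12\|\x\|^2$, for a fixed admissible $\cG$ and a fixed $\z$ with $\|\z\|\le r_n$ I would bound
\[
  \Pr_H\!\left[\z\notin\cV(\Ln)\right]\;\le\;\sum_{\x\in\Z^n\setminus\{\0\}\,:\,\langle\x,\z\rangle>\|\x\|^2/2}\Pr_H[\x\in\Ln],
\]
then take $\bE_{\z}$ and finally apply Markov's inequality over $H$, discarding via Lemma~\ref{lemma:LDA_fullrank} the rare event that $H$ is not full-rank (so that $\vol(\Ln)=p^{n(1-R)}$ and the stated formula for $\reff(\Ln)$ holds).

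By the same calculation as in Lemma~\ref{lemma:PrHu_x}, applied to $H\x$ in place of $H^{T}\u$, one has $\Pr_H[\x\in\Ln]=\Pr_H[H\bar\x=\0]=p^{-|N(\Supp(\bar\x))|}$ whenever $\bar\x:=\x\bmod p\ne\0$, while any $\x\in p\Z^n\setminus\{\0\}$ has $\|\x\|\ge p>2r_n$ for large $n$ and so contributes nothing. I would therefore split the sum according to $m:=|\Supp(\bar\x)|=|\Supp(\x)|\in\{1,\ldots,n\}$, bounding the number of integer points of support size $m$ in the relevant region by Lemma~\ref{lemma:Zn_cap_rB} (which contributes roughly $\binom{n}{m}V_m(2r_n)^m$) and $\Pr_H[\x\in\Ln]$ by $p^{-N_{\min}(m)}$ with $N_{\min}(m):=\min_{|S|=m}|N(S)|$. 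For small and moderate $m$ --- the part genuinely special to LDA lattices --- the left expansion of $\cG$ supplies the decay: (L1) gives $N_{\min}(m)\ge Am$ for $m\le\epsilon n$; (L2) and monotonicity of $N(\cdot)$ give $N_{\min}(m)\ge\alpha m$ for $m\le n(1-R)/(2\alpha)$ and $N_{\min}(m)\ge n(1-R)/2$ for larger $m$ in this range.

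For $m$ close to $n$ one passes to the complement: every check node outside $N(\Supp(\bar\x))$ has all its neighbours in $V\setminus\Supp(\bar\x)$, so (R1)/(R2) force $|C\setminus N(\Supp(\bar\x))|$ to be small, hence $N_{\min}(m)\ge n(1-R)-O(n-m)$; the exponent $|N(\Supp(\bar\x))|$ is then nearly $n(1-R)$, so $\L$ behaves like a uniformly random linear code in $\Z^n$ and the count is handled by the classical Minkowski--Hlawka/Siegel estimate together with a spherical-cap bound for the half-space $\{\x:\langle\x,\z\rangle>\|\x\|^2/2\}$, using $\reff(\Ln)\ge\sqrt{n\sigma^2}(1-o(1))$. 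Plugging all of this in, the $m$-th term becomes $n^{-\Omega(m)}$, summably in $m$, precisely when $\lba$ is large enough that $\lba(A-1+R)>3/2$, that $\lba(\alpha-1+R)>1/2$, and that a third inequality involving $B(1-R)$ holds --- the three coming respectively from (L1), (L2), and (R1)/(R2), and accounting for the three quantities inside the $\max$ in the statement.

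The step I expect to be the main obstacle is exactly the high-weight regime together with the recovery of the sharp constant $2\pi e\sigma^2$ in the volume condition: replacing the half-space $\{\x:\langle\x,\z\rangle>\|\x\|^2/2\}$ by the ball $2r_n\cB$ discards a factor $2^n$ and is not good enough, so the half-space contribution must be estimated directly by a spherical-cap argument, and since a general semi norm-ergodic $\z$ need not be spherically symmetric (unlike i.i.d.\ Gaussian noise) this has to be done through a thin-shell argument --- conditioning on $\|\z\|$ --- which costs only a sub-polynomial factor but must be interleaved carefully with the graph-expansion estimates above. Assembling the two regimes, taking $\bE_{\z}$, and applying Markov's inequality over $H$ together with Lemma~\ref{lemma:LDA_fullrank} then gives $\Pr[\Ln\text{ is good for coding in presence of semi norm-ergodic noise}]\to 1$, as claimed.
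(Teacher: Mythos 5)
You should first be aware that the paper does not actually prove Theorem~\ref{theorem:LDA_awgn}: it is quoted from di~Pietro's thesis, and Section~\ref{sec:channelcoding} only records the reduction that it suffices to show $\sum_{\x\in \Z^n\cap (r_n\cB+\z)\setminus p\Z^n}\Pr[\x\in\L]\to 0$ for every $\z$ in the noise ball, with $r_n=\reff(\Ln)(1+\delta_n)$. Your overall plan --- a first-moment bound with $\Pr[\x\in\L]=p^{-|N(S)|}$, $S:=\Supp(\x\bmod p)$, and the expansion properties (L1), (L2), (R1), (R2) controlling $|N(S)|$ by support size --- is exactly this reduction, so in outline you are on the same track as the cited proof.

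However, the step you single out as ``the main obstacle'' rests on a geometric misreading, and as proposed it would send you down an unnecessary and unjustified detour. The region $\{\x:\langle \x,\z\rangle>\Vert\x\Vert^2/2\}$ is not a half-space: completing the square shows it is exactly the open ball of radius $\Vert\z\Vert$ centered at $\z$, i.e.\ $\{\x:\Vert\x-\z\Vert<\Vert\z\Vert\}$. Hence union-bounding over $\Z^n\cap(\z+r_n\cB)$ with $r_n\geq\Vert\z\Vert$ loses nothing exponential; there is no factor $2^n$ to recover, no spherical-cap estimate, and no thin-shell conditioning --- Lemma~\ref{lemma:Zn_cap_rB} bounds the number of integer points in a ball with \emph{arbitrary} center, so no spherical symmetry of the semi norm-ergodic noise is ever used. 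This is precisely why the paper's displayed sum runs over $\Z^n\cap(r_n\cB+\z)$. A second, more substantive caution concerns your expansion bookkeeping in the middle support range: the bound $|N(S)|\geq n(1-R)/2$ obtained from (L2) by monotonicity stops beating the point count $2^{O(n)}(n/m)^{m/2}p^{m(1-R)}$ once $m$ exceeds roughly $n/2$, and your complement argument, stated only ``for $m$ close to $n$,'' does not reach down that far. You need the surplus for all $m\gtrsim n/2$: since $|N(T)|\geq |T|/(1-R)$ for every $T\subset C$, the complement $T=C\setminus N(S)$ automatically satisfies $|T|\leq (n-m)(1-R)\leq n(1-R)/2$ there, so (R2) (and (R1) once $|T|\leq\vartheta n(1-R)$) yields $|N(S)|\geq n(1-R)-(n-m)/\beta$, a surplus over the trivial degree-counting bound $m(1-R)$ that is linear in $n-m$; the trivial bound alone is not sufficient. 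Finally, your one-line dismissal of $p\Z^n\setminus\{\0\}$ via $p>2r_n$ amounts to $\lba R>1/2+o(1)$, which does not follow from the three lower bounds on $\lba$ in the statement, so that exclusion needs the same care the cited proof takes rather than being free.
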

For semi norm-ergodic noise $\{\z^{(n)}\}$, we have for every $\delta>0$, $\Pr[\z^{(n)}\notin (\sqrt{(1+\delta)n\sigma^2})\cB]\to 0$ as $n\to\infty$.
To prove that  $\{ \Ln\}$ is good for coding, it is then enough to show the absence of nonzero lattice points within a ball of radius $\sqrt{(1+\delta)n\sigma^2}$ around $\z$, for all $n\sigma^2<(\reff(\Ln))^2$
and all sufficiently large $n$.
In~\cite{diPietrothesis}, di Pietro proved the following statement, thus establishing Theorem~\ref{theorem:LDA_awgn}, and hence showing that LDA 
lattices are good for channel coding: For every $\mathbf{z}\in \sqrt{(1+\delta)n\nsvar}\cB$,
\begin{equation*}
  \sum_{\x\in \Z^n\cap (r_n \cB+\mathbf{z})\backslash p\Z^n}\Pr[\x\in\L]\to 0 \text{ as }n\to\infty,
  \label{eq:LDA_awgngood}
\end{equation*}
where $r_n=\reff(\Ln)(1+\delta_n)$, and $\delta_n\to 0$ as $n\to\infty$.

%%%%%%%%%%%%%%%%%%%%%%%%%%%%%%%%%%%%%%%%%%%%%%%%%%%%%%%%%%%%%%%%%%%%%%%%%%%%%%%%%%
%%%%%%%%%%%%%%%%%%%%%%%%%%%%%%%%%%%%%%%%%%%%%%%%%%%%%%%%%%%%%%%%%%%%%%%%%%%%%%%%%%
\section{Goodness for Packing}\label{sec:LDApacking}
Recall that  $\{\Ln\}$ is good for packing if
\[
 \limsup_{n\to\infty}\frac{\rpack(\Ln)}{\reff(\Ln)}\geq \frac{1}{2}.
\]
% Let $r_n=\reff(\Ln)(1+\delta_n)$, where $\delta_n\to 0$ as $n\to\infty$.
The packing goodness of LDA lattices follows as a corollary to Theorem~\ref{theorem:LDA_awgn}.
\begin{corollary}
 Let $\L$ be a lattice chosen uniformly at random from a $(\cG,\lba)$ LDA ensemble,  where $\cG$   is $(\alpha, A,\beta, B,\epsilon,\vartheta)$-good, and (\ref{eq:Dta_V_condns}) is satisfied.
 Furthermore, let 
 \[
  \lba>\max\left\{\frac{1}{2(\alpha-1+R)},\frac{3}{2(A-1+R)},\frac{1}{B(1-R)-1} \right\}.
 \]
Then, the probability that $\L$ is good for packing tends to 1 as $n\to\infty$.
\label{theorem:LDA_packing}
\end{corollary}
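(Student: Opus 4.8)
The plan is to deduce packing goodness directly from coding goodness, since Theorem~\ref{theorem:LDA_awgn} already establishes (under the stated hypotheses) that a random $\L$ from the $(\cG,\lba)$ ensemble is good for coding in the presence of semi norm-ergodic noise. The standard observation (see~\cite{Erez05}) is that coding goodness for \emph{any} i.i.d.\ semi norm-ergodic noise, in particular i.i.d.\ Gaussian noise, forces the minimum distance of the lattice to be large relative to $\reff(\Ln)$. Concretely, the quantity one wants to lower bound is $\rpack(\Ln) = \tfrac12 d_{\min}(\Ln)$, where $d_{\min}(\Ln)$ is the length of the shortest nonzero vector. So it suffices to show that with probability $\to 1$, $\L$ has no nonzero lattice point in the open ball of radius $(1-o(1))\reff(\Ln)$ around the origin; this gives $\rpack(\Ln) \geq \tfrac12(1-o(1))\reff(\Ln)$, and hence $\limsup_n \rpack(\Ln)/\reff(\Ln) \geq \tfrac12$.

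First I would make precise the reduction. Take i.i.d.\ zero-mean Gaussian noise $\z^{(n)}$ with per-dimension variance $\sigma^2$ chosen so that $n\sigma^2$ is just below $(\reff(\Ln))^2$, say $n\sigma^2 = (1-\eta)(\reff(\Ln))^2$ for a small fixed $\eta>0$; such noise is semi norm-ergodic by the remark following Definition~\ref{def:seminorm_ergodic}. If $\L$ had a nonzero vector $\v$ of length $\ell$, then when $\z^{(n)}$ lands in the halfspace closer to $\v$ than to $\0$ — an event of probability roughly $\tfrac12$ for $\z^{(n)}$ Gaussian once $\ell$ is not too large compared to the typical noise norm — the decoder would fail, i.e.\ $\z^{(n)}\notin\cV(\Ln)$. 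More carefully, conditioned on the existence of such a $\v$, $\Pr[\z^{(n)}\notin\cV(\Ln)] \geq \Pr[\langle \z^{(n)},\v\rangle \geq \ell^2/2]$, which is bounded below by a positive constant whenever $\ell \leq c\sqrt{n\sigma^2}$ for a suitable constant. Combining with Theorem~\ref{theorem:LDA_awgn}, which says this failure probability (averaged over the ensemble) tends to $0$, a Markov/averaging argument shows that the fraction of lattices in the ensemble possessing a nonzero vector of length $\leq c\sqrt{(1-\eta)}\,\reff(\Ln)$ tends to $0$. Alternatively — and this is the cleaner route given what the excerpt actually proves — I would invoke the displayed statement at the end of Section~\ref{sec:channelcoding}: for every $\z \in \sqrt{(1+\delta)n\sigma^2}\cB$, $\sum_{\x\in\Z^n\cap(r_n\cB+\z)\setminus p\Z^n}\Pr[\x\in\L]\to 0$, with $r_n = \reff(\Ln)(1+\delta_n)$. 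Setting $\z = \0$, this directly says the expected number of nonzero, non-$p\Z^n$ lattice points in $r_n\cB$ tends to $0$; combined with the trivial fact that the shortest vectors of $p\Z^n$ have length $p = n^\lba \gg \reff(\Ln) = \Theta(\sqrt n \cdot p^{1-R})$... wait, one must check that $p\Z^n$ contributes nothing inside $r_n\cB$, which holds because $p = n^\lba$ while $\reff(\Ln) \asymp \sqrt{n}\,p^{\,1-R} \ll p$ precisely when $\lba > 1/R$ — exactly the first condition in~(\ref{eq:lambda_conditions_main}) and implied by the hypotheses of the corollary. So by Markov's inequality, with probability $\to 1$ there is no nonzero lattice point in $r_n\cB$, whence $\rpack(\Ln) \geq \tfrac12 r_n / (1+o(1)) = \tfrac12\reff(\Ln)(1-o(1))$.

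The main obstacle — really the only delicate point — is verifying that the ball radius $r_n$ against which coding goodness is proved is genuinely of the form $(1-o(1))\reff(\Ln)$ or larger, so that the packing ratio one extracts is $\geq \tfrac12$ and not something smaller; this is where one uses that in the coding-goodness argument $\sigma^2$ can be pushed arbitrarily close to the Poltyrev limit, i.e.\ $n\sigma^2 \to (\reff(\Ln))^2$. One must also confirm that $p\Z^n$ points are excluded from the relevant ball, which as noted reduces to the elementary volume computation $\reff(\Ln) = \big(\vol(\Ln)\big)^{1/n}/V_n^{1/n}$ together with $\vol(\Ln) = p^{n(1-R)}$ (since $\L$ has index $p^{n(1-R)}$ in $\Z^n$ with probability $\to 1$ by Lemma~\ref{lemma:LDA_fullrank}) and Stirling's approximation for $V_n$, giving $\reff(\Ln) = \Theta(\sqrt{n})\, p^{1-R}$, which is $o(p)$ under $\lba > 1/R$. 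Everything else is a routine application of the union bound and Markov's inequality, and the lower bound $1/2$ comes for free from the $\tfrac12 d_{\min}$ relation; no new randomness analysis of $H$ is needed beyond what Theorem~\ref{theorem:LDA_awgn} already supplies.
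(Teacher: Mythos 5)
Your proposal is correct and takes essentially the same route as the paper: the paper's proof likewise sets $r_n=\reff(\Ln)(1-\delta_n)$, reduces packing goodness to showing that $\sum_{\x\in\Z^n\cap r_n\cB\setminus\{\0\}}\Pr[\x\in\L]\to 0$, and notes that this is exactly the first-moment computation behind (\ref{eq:LDA_awgngood}) specialized to $\z=\0$, so the channel-coding argument carries over \emph{mutatis mutandis}, followed by Markov's inequality. Your additional care in excluding $p\Z^n$ from the ball is a point the paper glosses over rather than a different approach; just note that the exact requirement $\reff(\Ln)\ll p$ corresponds to $\lba>\frac{1}{2R}$ (not $\lba>\frac{1}{R}$), and it is guaranteed by the main theorem's condition (\ref{eq:lambda_conditions_main}) rather than being literally implied by the three conditions displayed in the corollary.
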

\begin{proof}
Let us choose $r_n=\reff(\L)(1-\delta_n)$, where $\delta_n$ is a quantity that goes to $0 $ as $n\to\infty$.
We want to prove that 
\[
 \Pr[\rpack(\L)< r_n/2]\to 0 \text{ as } n\to\infty.
\]
It is enough to show that the probability of any nonzero integer point within $r_n\cB$ belonging to  $\L$ goes to zero as $n\to\infty$, i.e., 
\[
  \sum_{\x\in \Z^n\cap r_n \cB\backslash\{ \0\}}\Pr[\x\in\L]\to 0 \text{ as }n\to\infty
\]
This requirement is similar to (\ref{eq:LDA_awgngood}), and the rest of the proof of packing goodness of LDA lattices follows, \emph{mutatis mutandis}, on similar lines as that for goodness for channel coding.
\end{proof}
%%%%%%%%%%%%%%%%%%%%%%%%%%%%%%%%%%%%%%%%%%%%%%%%%%%%%%%%%%%%%%%%%%%%%%%%%%%%
%%%%%%%%%%%%%%%%%%%%%%%%%%%%%%%%%%%%%%%%%%%%%%%%%%%%%%%%%%%%%%%%%%%%%%%%%%%%
\section{Goodness for MSE Quantization}\label{sec:LDA_MSEquantization}

% The \emph{normalized second moment per dimension} of $\L$ is defined as
% \begin{equation}
%  {G}(\L)=\frac{1}{n\left(\text{vol}\L \right)^{1+2/n}}\int_{\cV(\L)}\Vert \y \Vert^2\: d\y.
% \label{eq:Glambda}
% \end{equation}
In nested lattice coding for power constrained transmission over Gaussian channels, the codebook is generally  the set of all
points of the fine lattice within the fundamental Voronoi region of the coarse lattice.
Hence, the fine lattice determines the codeword points, while the coarse lattice defines the shaping region.
In order to maximize the rate for a given power constraint, we want the shaping region to be approximately spherical.
The loss in rate (penalty for not using a spherical shaping region) is captured by the normalized second moment, $G(\L)$, of the coarse lattice $\L$, and in order to minimize 
this loss, we want $G(\L)$ to be as close to the asymptotic normalized second moment of a sphere as possible.
As defined in Section~\ref{sec:basicdefs},  $\{\Ln\}$ is  good for MSE quantization if ${G}(\Ln)\to\frac{1}{2\pi e}$ as $n\to\infty$.
In this section, we will prove the following result:
\begin{theorem}
 Let $A>2(1+R)$ and $B>2(1+R)/(1-R)$. Fix
\[
 \epsilon=\frac{1-R}{A+1-R} \; \text{ and } \; \vartheta=\frac{1}{B(1-R)+1}.
\]
Suppose that $\Dta_V$ satisfies the conditions of Lemma~\ref{lemma:goodgraphs}, and $\cG$ is $(\alpha, A,\beta, B,\epsilon,\vartheta)$-good. Furthermore, let
\begin{equation}
 \lba>\max \left\{ \frac{1}{R},\frac{1}{1-R},\frac{2}{A-2(1+R)},\frac{2}{B(1-R)-2(1+R)},2\left( 1-\frac{1}{AB-1}-\frac{1}{A} \right)^{-1} \right\}.
\label{eq:lambda_conditions}
\end{equation}
Let $\L$ be randomly chosen from a $(\cG,\lba)$ LDA ensemble.
Then, the probability that $\L$ is good for MSE quantization tends to $1$ as $n\to\infty$.
\label{theorem:LDA_MSEquantization}
\end{theorem}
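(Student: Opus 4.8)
The plan is to prove the quantitative bound $\bE_{\L}[G(\L)]\le\frac{1}{2\pi e}+o(1)$ and then conclude with the universal lower bound $G(\L)\ge\frac{1}{2\pi e}$ and Markov's inequality applied to the nonnegative variable $G(\L)-\frac1{2\pi e}$: if $\bE_{\L}[G(\L)-\frac1{2\pi e}]=o(1)$ then $\Pr[G(\L)-\frac1{2\pi e}>\eta_n]\to0$ for a suitable $\eta_n\to0$, which is the theorem. For the expectation, set $r^\ast:=p^{1-R}/V_n^{1/n}$; by Lemma~\ref{lemma:LDA_fullrank} (whose hypothesis holds since $B>2(1+R)/(1-R)$ and $\lba>1/R$) the matrix $H$ has full rank except with probability $o(1)$, on which event $\mathrm{vol}(\L)=p^{n(1-R)}$ and $\reff(\L)=r^\ast$; on the complementary event one has the crude deterministic bound $G(\L)\le\rcov(\L)^2/(n\,\mathrm{vol}(\L)^{2/n})\le p^2/4$ (using $p\Z^n\subseteq\L$ and $\mathrm{vol}(\L)\ge1$), whose contribution to $\bE_{\L}[G(\L)]$ is at most $\tfrac{p^2}{4}\Pr[\text{not full rank}]=o(1)$. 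Hence, from (\ref{eq:Glambda}), it suffices to prove the single averaged estimate
\[
 \bE_{\L}\!\left[\int_{\cV(\L)}\Vert\z\Vert^2\,d\z\right]\ \le\ p^{n(1-R)}\,(r^\ast)^2\,(1+o(1)),
\]
since dividing by $n\,(p^{n(1-R)})^{1+2/n}$ and using $(r^\ast)^2=p^{2(1-R)}/V_n^{2/n}$ together with Stirling's approximation $nV_n^{2/n}=2\pi e(1+o(1))$ gives $\bE_{\L}[G(\L)\,\mathbf1\{H\text{ full rank}\}]\le\frac1{2\pi e}(1+o(1))$, and hence $\bE_{\L}[G(\L)]\le\frac1{2\pi e}+o(1)$ upon adding the $o(1)$ contribution of the non-full-rank event noted above.

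To bound the averaged integral I will use the tiling identity $\int_{\cV(\L)}\Vert\z\Vert^2\,d\z\le p^{-nR}\int_{\cV(p\Z^n)}\mathrm{dist}(\y,\L)^2\,d\y$ (an equality when $H$ is full rank; it holds because $\cV(p\Z^n)=[-p/2,p/2)^n$ is tiled by $p^{\dim\mathcal C}\ge p^{nR}$ translates of $\cV(\L)$ and $\y\mapsto\y-Q_{\L}(\y)$ is a measure-preserving surjection onto $\cV(\L)$). Writing $N_\rho(\y):=|\{\lambda\in\L\setminus\{\0\}:\Vert\lambda-\y\Vert<\rho\}|$ and $\mathrm{dist}(\y,\L)^2=\int_0^{\Vert\y\Vert^2}\mathbf1\{N_{\sqrt t}(\y)=0\}\,dt$, Tonelli's theorem gives
\[
 \bE_{\L}\!\left[\int_{\cV(\L)}\Vert\z\Vert^2\,d\z\right]\ \le\ p^{-nR}\int_0^{(p\sqrt n/2)^2}\!\Bigg(\int_{\cV(p\Z^n)}\!\Pr\nolimits_{\L}[N_{\sqrt t}(\y)=0]\,d\y\Bigg)dt .
\]
The heart of the matter is the estimate that, for some constant $C$ and all $t\ge0$,
\[
 \int_{\cV(p\Z^n)}\Pr\nolimits_{\L}[N_{\sqrt t}(\y)=0]\,d\y\ \le\ p^n\min\{1,\;C\,(r^\ast)^n t^{-n/2}\}\ +\ \varepsilon_n(t),
\]
where the error $\varepsilon_n(t)$ is lower order, with total contribution $p^{-nR}\int_0^{(p\sqrt n/2)^2}\varepsilon_n(t)\,dt=o\big(p^{n(1-R)}(r^\ast)^2\big)$ (the natural bound $\varepsilon_n(t)=O(p^{n(1-R)+1})$, uniform in $t$, already suffices since $nR\to\infty$). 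Granting this, the inner integral is at most $p^n\big(C^{2/n}(r^\ast)^2+C(r^\ast)^n\!\int_{(r^\ast)^2}^{\infty}t^{-n/2}dt\big)+o\big(p^{n(1-R)+nR}(r^\ast)^2\big)=p^n(r^\ast)^2(1+o(1))$, using $C^{2/n}\to1$ and $\int_{(r^\ast)^2}^\infty t^{-n/2}dt=\tfrac{(r^\ast)^{2-n}}{n/2-1}$ (so $C(r^\ast)^n$ times it equals $\tfrac{2C}{n-2}(r^\ast)^2=o((r^\ast)^2)$), and multiplying by $p^{-nR}$ yields the displayed averaged estimate.

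The remaining step --- and the main obstacle --- is this key estimate, which refines the first-moment computation underlying Theorem~\ref{theorem:LDA_awgn}. For fixed $\y$ one uses Chebyshev's inequality, $\Pr_{\L}[N_\rho(\y)=0]\le\mathrm{Var}_{\L}(N_\rho(\y))/(\bE_{\L}N_\rho(\y))^2$. By the argument of Lemma~\ref{lemma:PrHu_x}, for an integer point $\x\notin p\Z^n$ the event $\{\x\in\L\}$ has probability $p^{-|N(\Supp(\x\bmod p))|}$, equal to $p^{-n(1-R)}$ when $\x\bmod p$ has full support; combining this with the integer-point count in Lemma~\ref{lemma:Zn_cap_rB}, with the expansion conditions (L1)--(R2) to show that the codewords of low weight contribute negligibly to $\sum_{\x\in\Z^n\cap(\rho\cB+\y),\ \x\notin p\Z^n}p^{-|N(\Supp(\x\bmod p))|}$ (this is precisely where the bounds on $A$, $B$, $\alpha$ and on $\lba$ in~(\ref{eq:lambda_conditions}) are used, exactly as in the proof of Theorem~\ref{theorem:LDA_awgn}), and with $\lba>1/(1-R)$ to absorb the $\pm\tfrac{\sqrt n}{2}$ fattening of the ball, one obtains $\bE_{\L}N_\rho(\y)=(\rho/r^\ast)^n(1+o(1))$ uniformly in $\y\in\cV(p\Z^n)$ once $\rho$ exceeds a fixed multiple of $r^\ast$ (for smaller $\rho$ the trivial bound $\Pr_{\L}[N_\rho(\y)=0]\le1$ matches the ``$\min$''). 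The key structural fact for the variance is that $\{\x\in\L\}$ and $\{\x'\in\L\}$ are \emph{independent} unless $\x\bmod p$ and $\x'\bmod p$ are $\Fp$-proportional (each $\Fp$-linearly-independent vector imposes an independent family of $n(1-R)$ constraints on the coefficients of $H$), so that $\mathrm{Var}_{\L}(N_\rho(\y))\le\bE_{\L}N_\rho(\y)+p^{-n(1-R)}\cdot|\{(\x,\x'):\x\ne\x',\ \x,\x'\in\rho\cB+\y,\ \x'\bmod p\in\Fp^\ast(\x\bmod p)\}|$; the number of such ``proportional pairs'', \emph{integrated over $\y\in\cV(p\Z^n)$}, is $O\big((V_n\rho^n)^2 p\big)$ (a ball of radius $\rho$ meets each $p\Z^n$-coset in an amount controlled by $(\rho/p)^n$, and the proportional cosets are spread over the torus), which after division by $(\bE_{\L}N_\rho(\y))^2$ and integration over $\y$ produces only the lower-order error $\varepsilon_n(t)$. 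I expect the real work to be concentrated in (i) establishing the uniform-in-$\y$ first-moment estimate, which amounts to porting the entire expander-based expurgation of low-weight codewords from Theorem~\ref{theorem:LDA_awgn} into this setting, and (ii) making the proportional-pairs bound genuinely negligible --- it is the averaging over $\y$ that removes the spurious factor $p$ a pointwise second-moment bound would introduce, and carrying out this averaging (including for $\rho$ as large as the circumradius $\tfrac{p\sqrt n}{2}$ of $\cV(p\Z^n)$, where a ball can hit a coset more than once) while keeping track of the stated conditions on $\lba,A,B,\alpha$ is the delicate point; minor additional care is needed near the boundary of $\cV(p\Z^n)$ and at the crossover scale $\rho\approx r^\ast$.
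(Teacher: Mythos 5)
Your high-level skeleton coincides with the paper's: bound $\bE_{\L}[G(\L)]$ by conditioning on $H$ being full rank (absorbing the complementary event via the crude $p^2/4$ bound and Lemma~\ref{lemma:LDA_fullrank}), trade the average over $\cV(\L)$ for an average over the cube $[0,p)^n$ by tiling (this is Lemma~\ref{lemma:ELX_EUL}), and finish with the universal bound $G(\L)>1/(2\pi e)$ plus a Markov-type argument. The gap is in the route to your key estimate. You assert that $\{\x\in\L\}$ and $\{\x'\in\L\}$ are independent whenever $\x\bmod p$ and $\x'\bmod p$ are not $\Fp$-proportional, ``each $\Fp$-linearly independent vector imposing an independent family of $n(1-R)$ constraints.'' That is true for Construction A with a dense, uniformly random parity-check matrix, but it is false for the LDA ensemble: the rows of $H$ live on the fixed sparse skeleton $\hH$, so the $i$th check only sees the window $\Supp(\widehat{\h}_i)$, and the pair $(\h_i^T\x,\h_i^T\x')$ is uniform on $\Fp^2$ only when the restrictions of $\x$ and $\x'$ to that window are linearly independent. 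Globally independent integer points whose reductions agree or are proportional on many windows --- equivalently, points for which some combination $a\x+b\x'$ has small support or a small Tanner-graph neighbourhood --- give strongly correlated membership events. Controlling exactly these pairs is the heart of the proof: it is what di~Pietro's variance computation (imported in the paper as the three terms (\ref{eq:term1})--(\ref{eq:term3})) does using the expansion properties (L1)--(R2), and it is precisely why the hypotheses $A>2(1+R)$, $B>2(1+R)/(1-R)$ and the last three entries of (\ref{eq:lambda_conditions}) are needed at all; under your claimed variance bound they would be superfluous.

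A consequence of this is that the strong tail you integrate against, $\Pr[N_{\sqrt t}(\y)=0]\lesssim (r^\ast)^n t^{-n/2}$, is not obtainable by the second-moment method here: the correlated non-proportional pairs force $\mathrm{Var}(X_\rho)$ to be of order $(\bE[X_\rho])^2\, n^{-2\lba R-\delta}$, so Chebyshev yields only a polynomially small, essentially $\rho$-independent failure probability. Fortunately that weaker bound suffices and is exactly what the paper proves (Lemma~\ref{lemma:dxL_bound}): $\Pr[d(\x,\L)>r(1+n^{-\omega})\mid H\text{ full rank}]\le n^{-(2\lba R+\delta)}(1+o(1))$ uniformly in $\x$; since the fallback distance is at most $p\sqrt n/2$ and $p^2n/r^2=O(n^{2\lba R})$, this already gives $\bE[d^2(U,\L)\mid H\text{ full rank}]\le r^2(1+o(1))$, and your layer-cake integration would also close with this bound in place of your $\min\{1,C(r^\ast)^n t^{-n/2}\}$. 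So the fix is to drop the independence claim and carry out (or cite) the expander-based variance estimate; as written, the core step of your argument does not stand.
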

To prove the theorem, we will show that for every positive $\delta_1,\delta_2$, and all sufficiently large $n$,
\begin{equation}
  \Pr\left[G(\L)>\frac{1}{2\pi e}+\delta_1\right]\leq \delta_2.
\label{eq:MSEquantization_main}
\end{equation}
Since $G(\L)>1/(2\pi e)$ for all $\L$~\cite{Erez05}, the above statement guarantees the existence of a sequence of lattices, $\{\Ln\}$, for which $G(\Ln)\to 1/(2\pi e)$ as $n\to\infty$.
 Our proof of the above inequality is based on the techniques used in~\cite{Ordentlich} and~\cite{diPietrothesis}.
 For a lattice $\L$, and $\x\in \R^n$, we define $d(\x,\L):=\min_{\y\in \L}\Vert \x-\y\Vert$ to be the Euclidean distance between $\x$ and the closest point in $\L$ to $\x$.
 For ease of notation, let us define $r:=\reff(\L)$.
% The following lemma, proved in Appendix~B, will be the key to proving (\ref{eq:MSEquantization_main}):
Our proof of inequality (\ref{eq:MSEquantization_main}), and hence Theorem~\ref{theorem:LDA_MSEquantization}, will make use of the following lemmas, which are 
proved in Appendix~B.

\begin{lemma}
 Suppose that the hypotheses of Theorem~\ref{theorem:LDA_MSEquantization} are satisfied. Let $\L$ be drawn uniformly at random from a $(\cG,\lba)$ LDA ensemble, 
 and $X$ be a random vector uniformly distributed over $\cV(\L)$. 
%  Let $H$ be the parity check matrix that corresponds to $\cG$. 
 Then, 
 \begin{equation}
  \bE_{\L}[G(\L)]\leq \bE_{\L,X}\left[\frac{d^{2}(X,\L)}{n(\textnormal{vol}(\L))^{2/n}}\Bigg| H\text{ is full rank}\right] +o(1). 
 \label{eq:MSEgood_lemma1}
 \end{equation}
\label{lemma:MSEgood_1}
\end{lemma}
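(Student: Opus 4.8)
The plan is to start from the definition $G(\L)=\frac{1}{n(\mathrm{vol}(\L))^{1+2/n}}\int_{\cV(\L)}\Vert\y\Vert^2\,d\y$ and rewrite the right-hand side probabilistically. Since $X$ is uniform on $\cV(\L)$, whose volume is $\mathrm{vol}(\L)$, we have $\bE_X[\Vert X\Vert^2]=\frac{1}{\mathrm{vol}(\L)}\int_{\cV(\L)}\Vert\y\Vert^2\,d\y$, so that $G(\L)=\frac{\bE_X[\Vert X\Vert^2]}{n(\mathrm{vol}(\L))^{2/n}}$. The key observation is that for a point $X$ lying in the fundamental Voronoi region $\cV(\L)$, the origin $\0$ is by definition the closest lattice point to $X$ (up to the tie-breaking rule), and hence $\Vert X\Vert = d(X,\L)$ almost everywhere on $\cV(\L)$. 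Therefore $G(\L)=\bE_X\!\left[\frac{d^2(X,\L)}{n(\mathrm{vol}(\L))^{2/n}}\right]$ exactly, for every lattice $\L$ in the ensemble. Taking expectation over the random choice of $\L$ then gives $\bE_\L[G(\L)] = \bE_{\L,X}\!\left[\frac{d^2(X,\L)}{n(\mathrm{vol}(\L))^{2/n}}\right]$.

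The remaining work is to pass from the unconditional expectation to the expectation conditioned on the event $\{H\text{ is full rank}\}$, at the cost of an $o(1)$ term. I would split $\bE_{\L,X}\!\left[\frac{d^2(X,\L)}{n(\mathrm{vol}(\L))^{2/n}}\right]$ according to whether $H$ has full rank or not. On the full-rank event this is exactly the conditional expectation appearing in the statement, times $\Pr[H\text{ full rank}]\le 1$. The contribution of the complementary event must be shown to be $o(1)$: by Lemma~\ref{lemma:LDA_fullrank} its probability is at most $n^{-(2\lba+\dta)}(1+o(1))$ (the hypothesis $B>2+(1+\dta)/\lba$ following from the conditions on $B$ and $\lba$ in Theorem~\ref{theorem:LDA_MSEquantization}), so it suffices to control the conditional value of $\frac{d^2(X,\L)}{n(\mathrm{vol}(\L))^{2/n}}$ on that event. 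Here one uses the deterministic bound $d(X,\L)\le \rcov(\L)$ together with the fact that, for any Construction-A lattice built from the parity-check construction, $\L\supseteq p\Z^n$ (even when $H$ is rank-deficient), so the covering radius is at most that of $p\Z^n$, namely $\rcov(\L)\le \rcov(p\Z^n)=p\sqrt{n}/2$. Likewise $\mathrm{vol}(\L)\le \mathrm{vol}(p\Z^n)=p^n$ is a crude lower bound that is too weak; instead one needs $\mathrm{vol}(\L)\ge \mathrm{vol}(p\Z^n)\cdot p^{-k}=p^{n-k}$ — wait, more carefully, $\mathrm{vol}(\L)=p^{n}/|\cC|$ and $|\cC|\le p^{n-k'}$ where $n-k'=\mathrm{rank}(H)\le n(1-R)$, so $\mathrm{vol}(\L)\ge p^{nR}$. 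Then $\frac{d^2(X,\L)}{n(\mathrm{vol}(\L))^{2/n}}\le \frac{p^2 n/4}{n\, p^{2R}} = \frac{p^{2(1-R)}}{4}$, which grows polynomially in $n$; multiplying by the failure probability $n^{-(2\lba+\dta)}(1+o(1))$ and using $p=\Theta(n^\lba)$ gives a bound of order $n^{2\lba(1-R)-2\lba-\dta}=n^{-2\lba R-\dta}=o(1)$, as required.

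The main obstacle I anticipate is not the identity $G(\L)=\bE_X[d^2(X,\L)/(n(\mathrm{vol}(\L))^{2/n})]$ — that is essentially a restatement of the definition — but rather making the $o(1)$ bookkeeping on the rank-deficient event fully rigorous: one must be careful that the coarse geometric bounds ($\rcov(\L)\le p\sqrt n/2$ from $p\Z^n\subseteq\L$, and $\mathrm{vol}(\L)\ge p^{nR}$ from the rank bound $\mathrm{rank}(H)\le n(1-R)$) do hold deterministically for \emph{every} realization in the ensemble, including degenerate ones, and that the resulting polynomial growth is indeed dominated by the super-polynomially (in the relevant sense) small failure probability from Lemma~\ref{lemma:LDA_fullrank}. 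Checking that the exponent inequality $2\lba(1-R) - (2\lba+\dta) < 0$ reduces to $\lba R + \dta/2 > 0$, which is automatic, closes this gap; but one should state explicitly which of the hypotheses of Theorem~\ref{theorem:LDA_MSEquantization} (in particular $\lba>1/R$ and the lower bound on $B$) are invoked to license the application of Lemma~\ref{lemma:LDA_fullrank}.
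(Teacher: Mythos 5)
Your overall route is the same as the paper's: write $G(\L)=\bE_X\bigl[d^2(X,\L)/(n(\text{vol}(\L))^{2/n})\bigr]$, split $\bE_{\L}[G(\L)]$ over the event that $H$ is full rank and its complement, drop the factor $\Pr[H\text{ is full rank}]\le 1$ on the good event, bound the integrand deterministically on the bad event using $p\Z^n\subseteq\L$ (so $d(X,\L)\le p\sqrt{n}/2$ for every realization), and make the bad event negligible via Lemma~\ref{lemma:LDA_fullrank}; your observation that its hypothesis $B>2+(1+\dta)/\lba$ follows from $B>2(1+R)/(1-R)$ and $\lba>1/R$ matches the paper's choice $\dta<\lba(B-2)-1$.

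There is, however, one step that fails as stated: on the rank-deficient event you claim $\text{vol}(\L)\ge p^{nR}$ from $\mathrm{rank}(H)\le n(1-R)$. The inequality runs the other way. The code has $|\cC|=p^{\,n-\mathrm{rank}(H)}\ge p^{nR}$, hence $\text{vol}(\L)=p^{n}/|\cC|=p^{\mathrm{rank}(H)}$, and rank deficiency makes the lattice \emph{denser}, i.e.\ $\text{vol}(\L)\le p^{n(1-R)-1}$; deterministically nothing better than $\text{vol}(\L)\ge\text{vol}(\Z^n)=1$ holds (in a degenerate realization all random coefficients could vanish and $\L=\Z^n$). This is exactly the crude bound the paper uses: it gives $d^{2}(X,\L)/(n(\text{vol}(\L))^{2/n})\le p^{2}/4\sim n^{2\lba}/4$ with probability $1$, and multiplying by $\Pr[H\text{ is not full rank}]\le n^{-(2\lba+\dta)}(1+o(1))$ still leaves $O(n^{-\dta})=o(1)$ --- which is precisely why the paper demands the exponent $2\lba+\dta$ in Lemma~\ref{lemma:LDA_fullrank} rather than $\lba+\dta$ (see the remark in Appendix~A). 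So your argument closes once the reversed volume bound is replaced by the trivial one; apart from that slip, your proof coincides with the paper's.
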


% We now make some lemmas, which will be used in the proof of Theorem~\ref{theorem:LDA_MSEquantization}. 

% The following lemma is formally proved in Appendix~B.  
\begin{lemma}
 Suppose that the hypotheses of Theorem~\ref{theorem:LDA_MSEquantization} are satisfied. Let $0<\omega<1$. There exists a $\dta>0$ so that for every $\x\in\R^n$,
 \begin{equation}
 \Pr\left[ d(\x,\L)>r\left( 1+\frac{1}{n^\omega} \right)\Bigg| H\text{ is full rank} \right] \leq \frac{1}{n^{2\lba R+\dta}}(1+o(1)).
\label{eq:dxL_bound}
\end{equation}
\label{lemma:dxL_bound}
\end{lemma}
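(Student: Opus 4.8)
The plan is to prove this by a second-moment argument on the number of lattice points of $\L$ lying in a Euclidean ball around $\x$, feeding in the expansion of $\cG$ through two explicit formulas for the probabilities that one, respectively two, prescribed integer vectors belong to $\L$. First some routine reductions. Rounding $\x$ to a nearest integer point changes $d(\x,\L)$ by at most $\sqrt n/2=o(r)$, so it suffices to treat $\x\in\Z^n$. Since $\Pr[H\text{ is full rank}]=1-o(1)$ by Lemma~\ref{lemma:LDA_fullrank}, it is then enough to prove the \emph{unconditional} bound $\Pr\bigl[d(\x,\L)>r(1+n^{-\omega})\bigr]\le n^{-(2\lba R+\dta)}(1+o(1))$ for some $\dta>0$; dividing by $\Pr[H\text{ full rank}]$ gives the conditional statement. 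Put $r':=r(1+n^{-\omega})$ and $N:=|\L\cap(\x+r'\cB)|$, so $\{d(\x,\L)>r'\}=\{N=0\}$; by Chebyshev's inequality $\Pr[N=0]\le\mathrm{Var}(N)/(\bE N)^2$, so everything reduces to showing $\mathrm{Var}(N)/(\bE N)^2\le n^{-(2\lba R+\dta)}(1+o(1))$.

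The estimate for $\bE N$ is the easy half. Arguing exactly as in the proof of Lemma~\ref{lemma:PrHu_x} --- distinct rows of $H$ involve disjoint sets of random coefficients, and a row contributes a uniform element of $\Fp$ to $H\y\bmod p$ unless $\y$ vanishes modulo $p$ on the support of that row --- one obtains, for $\y\in\Z^n$, that $\Pr[\y\in\L]=p^{-|N(\Supp(\y\bmod p))|}$. Combined with Lemma~\ref{lemma:Zn_cap_rB}, this shows $\bE N$ is dominated by the full-support integer points of $\x+r'\cB$: each such point contributes $1/\vol(\L)$ and there are $\vol(\x+r'\cB)(1+o(1))$ of them, points whose support omits fewer than $\Dta_C$ coordinates still have neighbourhood all of $C$, and sparser vectors are suppressed using (L1), (L2) and $\lba>1/(1-R)$ (this is also where $\lba>1/R$ and $\lba>1/(2(\alpha-1+R))$ enter). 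Hence $\bE N=(1+n^{-\omega})^n(1+o(1))$, which, since $0<\omega<1$, grows faster than any power of $n$; in particular $(\bE N)^{-1}\ll n^{-(2\lba R+\dta)}$.

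The heart of the proof is the variance bound. A second use of the reasoning behind Lemma~\ref{lemma:PrHu_x} gives, for $\y_1,\y_2\in\Z^n$ neither of which lies in $p\Z^n$, $\Pr[\y_1\in\L,\ \y_2\in\L]=p^{\,b(\y_1,\y_2)}\Pr[\y_1\in\L]\Pr[\y_2\in\L]$, where $b(\y_1,\y_2)$ counts the check nodes $i$ at which $\y_1$ and $\y_2$, restricted to $N(\{i\})$ and reduced modulo $p$, are both nonzero and $\Fp$-proportional (a pair one of whose members lies in $p\Z^n$ has that member in $\L$ with certainty and contributes nothing to the variance). Writing $p^b-1=\sum_{\emptyset\neq S\subseteq T(\y_1,\y_2)}(p-1)^{|S|}$ with $T(\y_1,\y_2)$ the set of check nodes counted by $b$, and interchanging the order of summation,
\[
\mathrm{Var}(N)=\sum_{\emptyset\neq S\subseteq C}(p-1)^{|S|}\!\!\sum_{\{\y_1,\y_2\,:\,S\subseteq T(\y_1,\y_2)\}}\!\!\Pr[\y_1\in\L]\Pr[\y_2\in\L].
\]
For fixed $S$, the condition $S\subseteq T(\y_1,\y_2)$ forces $\y_1$ to be an $\Fp$-scalar multiple of $\y_2$ on $N(\{i\})$ for every $i\in S$; thus $\y_1$ is determined on $N(S)$ by $\y_2$ together with at most $|S|$ scalars, and --- since $r'<p/2$, which follows from $\lba>1/R$ --- each resulting congruence class has at most one representative in $\x+r'\cB$. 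Bounding the remaining freedom of $\y_1$ off $N(S)$ and the choices of $\y_2$ via Lemma~\ref{lemma:Zn_cap_rB}, and using $\vol(\x+r'\cB)/\vol(\L)=\bE N(1+o(1))$, the full-support part of the sum is at most
\[
(\bE N)^2(1+o(1))\sum_{\emptyset\neq S\subseteq C}(p-1)^{2|S|}\bigl(c_0\,n^{-\lba(1-R)}\bigr)^{|N(S)|}
\]
for an absolute constant $c_0$, the base $c_0 n^{-\lba(1-R)}$ being below $1$ because $\lba>1/(1-R)$. The right-expansion conditions now close this sum. For $|S|\le\vartheta n(1-R)$, (R1) gives $|N(S)|\ge B|S|$, so the summand is at most $\bigl(c_0^{\,B}\,n^{\,\lba(2-(1-R)B)}\bigr)^{|S|}$; since $B>2(1+R)/(1-R)$ we have $2-(1-R)B<0$, and the extra room in $B>2(1+R)/(1-R)$ together with $\lba>2/\bigl(B(1-R)-2(1+R)\bigr)$ makes the sum over these $S$ (including the $\binom{n(1-R)}{|S|}$ count, with the $|S|=1$ term dominating) equal to $O\bigl(n^{-(2\lba R+\dta)}\bigr)$ with $\dta>0$ to spare. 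For $|S|>\vartheta n(1-R)$, (R2) gives $|N(S)|\ge\beta|S|$, and after refining the scalar count by the connected-component structure of the overlaps among the neighbourhoods $N(\{i\})$, $i\in S$ (which replaces $(p-1)^{2|S|}$ by essentially $p^{|S|}$), the condition $\beta>1/(1-R)$ forces this part to be exponentially small. Globally $\Fp$-proportional pairs satisfy $\y_1\in\L\Leftrightarrow\y_2\in\L$, but because $r'<p/2$ there are only $O\bigl(p\cdot\vol(\x+r'\cB)\bigr)$ of them, a superpolynomially small fraction of $(\bE N)^2$; pairs involving a non-full-support vector are handled exactly as in the estimate of $\bE N$, through (L1), (L2) and $A>2(1+R)$; and the diagonal $\y_1=\y_2$ contributes $\bE N=(\bE N)^{-1}(\bE N)^2$, again negligible. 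Assembling these pieces yields $\mathrm{Var}(N)/(\bE N)^2\le n^{-(2\lba R+\dta)}(1+o(1))$.

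\textbf{Main obstacle.} The delicate part is the bookkeeping in the variance bound: one must stratify pairs simultaneously by their two support sizes and by the ``correlation set'' $S\subseteq C$ of check nodes where the two vectors become $\Fp$-proportional, count the pairs in each stratum via Lemma~\ref{lemma:Zn_cap_rB} while correctly using $|N(S)|$ --- not $\Dta_C|S|$ --- as the number of pinned-down coordinates of $\y_1$ and correctly accounting for the consistency constraints that overlaps among the $N(\{i\})$ impose on the $|S|$ scalars, and then verify that the resulting double sum converges with a positive margin. It is precisely in forcing this convergence that the hypotheses $A>2(1+R)$, $B>2(1+R)/(1-R)$ and all the lower bounds on $\lba$ in \eqref{eq:lambda_conditions} get used, the strict inequalities supplying the slack $\dta$.
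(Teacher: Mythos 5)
Your overall strategy coincides with the paper's: a second-moment/Chebyshev argument on the number $N$ of points of $\L$ in $\x+r(1+n^{-\omega})\cB$, with $\bE N$ superpolynomially large and $\mathrm{Var}(N)\le (\bE N)^2 n^{-2\lba R-\dta}$. The paper does not rederive the variance bound; it imports di~Pietro's decomposition into the three terms (\ref{eq:term1})--(\ref{eq:term3}) and only verifies that each is $O\bigl(\mathcal{E}(\rho)\,n^{-2\lba R-\dta}\bigr)$ under the strengthened hypotheses. Your single-point and pair-correlation formulas, the small-$|S|$ regime via (R1), and the dismissal of globally $\Fp$-proportional pairs against the superpolynomial size of $(\bE N)^2$ are all sound and match what underlies di~Pietro's computation. (Minor slip: $\lba>1/(2(\alpha-1+R))$ is not among the hypotheses of Theorem~\ref{theorem:LDA_MSEquantization}, though it is in Theorem~\ref{theorem:LDA_simultaneousgoodness}.)

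The genuine gap is in the large-correlation-set regime, which you yourself flag as the heart of the matter but resolve incorrectly. The crude bound $(p-1)^{2|S|}(c_0n^{-\lba(1-R)})^{|N(S)|}$ would need $\beta>2/(1-R)$, which Definition~\ref{defn:goodgraphs} explicitly excludes ($\beta<2/(1-R)$), so the refinement is essential; but your refinement only reduces the scalar count to $(p-1)^{\kappa(S)}$, where the number $\kappa(S)$ of connected components of the overlap pattern can be a constant fraction of $|S|$ (it is only bounded by $n/\Dta_C$), so the summand still behaves like $p^{(1+c)|S|}(c_0n^{-\lba(1-R)})^{|N(S)|}$ with $c$ bounded away from $0$, and closing it via (R2) would need $\beta(1-R)>1+c$, which the hypotheses do not supply. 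Worse, for $|S|>n(1-R)/2$ property (R2) no longer applies to $S$ at all; the only generic bound $|N(S)|\ge|S|/(1-R)$ makes the polynomial gain cancel exactly, leaving exponential counting factors (the $\nchoosek{n(1-R)}{|S|}$ choices of $S$, the $c_0^{|N(S)|}$ slack in the slice count, and the freedom of $\y_1$ on the $n-|N(S)|$ unpinned coordinates) that the merely superpolynomial quantity $(\bE N)^2\approx e^{2n^{1-\omega}}$ cannot absorb once $n-|N(S)|\gtrsim n^{1-\omega}/\ln n$. Concretely, pairs that are proportional on a set $S$ with, say, $|N(S)|=n-\Theta(n^{1-\omega}\ln n)$ escape both your component argument and your globally-proportional argument. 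This ``nearly proportional'' regime is exactly where the paper's term (\ref{eq:term3}) earns its keep: di~Pietro controls it by applying the \emph{left} expansion (L1) to the variable-node support of the non-proportional residual of the pair, which is how the exponent $1+\lba\bigl(\frac{1}{AB-1}+\frac{1}{A}-1\bigr)$ and the hypotheses $A>2(1+R)$, $\lba>2\bigl(1-\frac{1}{AB-1}-\frac{1}{A}\bigr)^{-1}$ enter, combined with the $1/\sqrt{\mathcal{E}(\rho)}$ factor; your sketch attributes this regime to $\beta>1/(1-R)$ alone, which does not close it, so the variance bound as you outline it does not go through.
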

% We also have the following relation, adapted from~\cite{Ordentlich}:
\begin{lemma}
 Let $U$ be a random vector uniformly distributed over $[0,p)^n$, and $X$ be uniformly distributed over $\cV(\L)$. Then,
  \begin{equation}
   \mathbb{E}_{\L}\mathbb{E}_{X}[d^2(X,\L)|H\text{ is full rank}]=\mathbb{E}_{U}\mathbb{E}_{\L}[d^2(U,\L)|H\text{ is full rank}].
  \end{equation}
\label{lemma:ELX_EUL}
\end{lemma}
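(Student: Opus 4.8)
The plan is to verify the identity pointwise in $\L$ and only afterwards average over the ensemble, exploiting the fact that every lattice in the $(\cG,\lba)$ LDA ensemble satisfies $p\Z^n\subseteq\L$ (indeed $\L=\Phi(\cC)+p\Z^n$). Fix one such $\L$. Since $\L-\mathbf{v}=\L$ for every lattice vector $\mathbf{v}\in\L$, the function $\u\mapsto d(\u,\L)$ is invariant under translation by elements of $\L$, that is, $d(\u+\mathbf{v},\L)=d(\u,\L)$ for all $\mathbf{v}\in\L$; in particular $d^{2}(\cdot,\L)$ is periodic with respect to the sublattice $p\Z^n$.

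Next I would pass to the finite torus $T:=\R^n/p\Z^n$, which has volume $\vol(p\Z^n)=p^n$ and on which the finite group $\L/p\Z^n$ (of order $|\L/p\Z^n|=p^n/\vol(\L)$, equal to $p^k$ when $H$ has full rank) acts by translation. Because $p\Z^n\subseteq\L$ and $\cV(\L)$ is a fundamental domain for $\L$ on $\R^n$, the image of $\cV(\L)$ in $T$ is a fundamental domain for this action; and because $d^{2}(\cdot,\L)$ is both $p\Z^n$-periodic and $\L$-periodic, it descends to a $(\L/p\Z^n)$-invariant function on $T$. Summing its integral over the $|\L/p\Z^n|$ translates of $\cV(\L)$ that tile $T$ up to a set of measure zero gives
\[
\int_{[0,p)^n} d^{2}(\u,\L)\,d\u \;=\; |\L/p\Z^n|\int_{\cV(\L)} d^{2}(\u,\L)\,d\u .
\]
Dividing the left-hand side by $\vol([0,p)^n)=p^n$ and the right-hand side by $|\L/p\Z^n|\,\vol(\L)=p^n$ yields $\bE_{U}[d^{2}(U,\L)]=\bE_{X}[d^{2}(X,\L)]$ for this fixed $\L$, and in fact for every lattice in the LDA ensemble, whether or not $H$ has full rank.

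It then remains to take the conditional expectation over the random lattice $\L$, given that $H$ is full rank, of both sides of this pointwise identity, and to interchange the order of the two expectations on the $U$-side. The interchange is legitimate by Tonelli's theorem: $d^{2}(\u,\L)\geq 0$, and the distribution of $U$ (uniform over the fixed cube $[0,p)^n$) does not depend on $\L$. This gives
\[
\bE_{\L}\bE_{X}[d^{2}(X,\L)\,|\,H\text{ is full rank}]=\bE_{\L}\bE_{U}[d^{2}(U,\L)\,|\,H\text{ is full rank}]=\bE_{U}\bE_{\L}[d^{2}(U,\L)\,|\,H\text{ is full rank}],
\]
which is the claimed equality.

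The only step requiring care is the tiling assertion above, together with the attendant bookkeeping: $\cV(\L)$ need not be contained in $[0,p)^n$, so the relevant translates of $\cV(\L)$ wrap around the cube, and one must check that the boundary sets on which the tie-breaking rule in the definition of $\cV(\L)$ is invoked have Lebesgue measure zero, so that they contribute nothing to any of the integrals. These are routine facts about fundamental domains of nested lattices (here, of $p\Z^n\subseteq\L$), and I would invoke them rather than reprove them; everything else is elementary.
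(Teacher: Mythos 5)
Your proposal is correct and follows essentially the same route as the paper's own proof: both rest on the $\L$-translation-invariance of $d(\cdot,\L)$, the tiling of $[0,p)^n$ (modulo $p\Z^n$) by the $|\L/p\Z^n|$ translates of $\cV(\L)$, and a Fubini/Tonelli interchange of the expectations over $U$ and $\L$. The only difference is cosmetic — you establish the pointwise identity $\bE_U[d^2(U,\L)]=\bE_X[d^2(X,\L)]$ for each fixed $\L$ before averaging, whereas the paper swaps the order first and then decomposes the cube per lattice — and your explicit attention to the wrap-around and measure-zero boundary issues is a point the paper silently glosses over.
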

% For the sake of completeness, we have provided a proof of the above statement in Appendix~C. With the above results, we can now prove Theorem~\ref{theorem:LDA_MSEquantization}.
%%%%%%%%%%%%%%%%%%%%%%%%%%%%%%%%%%%%%%%%%%%%%%%%%%%%%%%%%%
\begin{proof}[Proof of Theorem~\ref{theorem:LDA_MSEquantization}]

% With Lemma~\ref{lemma:MSEgood_1}, it is enough if we show that the first term in inequality (\ref{eq:MSEgood_lemma1}) tends to $1/(2\pi e)$ as $n\to\infty$. 
Recall that to prove the theorem, it is enough to prove inequality (\ref{eq:MSEquantization_main}). To this end, we will show that the first term in (\ref{eq:MSEgood_lemma1})
tends to $1/(2\pi e)$ as $n\to\infty$.
% % Let us define $r:=\reff(\L)$. 
We will use Lemma~\ref{lemma:dxL_bound} to bound this term.

Recall that $r=\reff(\L)$.
Since (\ref{eq:dxL_bound}) holds for all $\x\in \R^n$, we can say that for any random vector $U$ (having density function $f$) over $\R^n$,  we have
\begin{align}
  \Pr\big[d(U,\L) >r(1+n^{-\omega})\big|H\text{ is full rank}\big] &=\int_{\R^n} \Pr\big[d(\u,\L)>r(1+n^{-\omega})\big|H\text{ is full rank}\big]f(\u)d\u & \notag \\
                                                                                            &\leq n^{-(2\lba R+\dta)}(1+o(1)). &\notag
\end{align}
Let us define $\rho =r(1+n^{-\omega})$.
For any $\u\in \R^n$, and any Construction-A lattice $\L$, we have $d(\u,\L)\leq p\sqrt{n}/2$.
Then, for any distribution on $U$,
\begin{align}
  \mathbb{E}_{U}\mathbb{E}_{\L}\big[d^2(U,\L)\big|H\text{ is full rank}\big] &\leq \rho^2 \Pr\big[d(U,\L)\leq \rho\big|H\text{ is full rank}\big] &\notag \\
                                                                                                     &\qquad\qquad +\frac{p^2n}{4}\Pr\big[d(U,\L)>\rho\big|H\text{ is full rank}\big] &\notag \\
                                                                                    &\leq \rho^2\left( 1+ \frac{p^2 n}{4\rho^2}\frac{1}{n^{2\lba R+\dta}}(1+o(1))\right) .&\notag
\end{align}
Substituting  $\rho=\frac{n^{\lba (1-R)+1/2}}{\sqrt{2\pi e}}(1+o(1))$,
\begin{align}
  \mathbb{E}_{U}\mathbb{E}_{\L}\big[d^2(U,\L)\big|H\text{ is full rank}\big]  &\leq \rho^2 \left( 1+n^{2\lba+1}  \frac{2\pi e}{4n^{2\lba(1-R)+1}}\frac{1}{n^{2\lba R+\dta}}(1+o(1)) \right) &\notag\\
                                                                                    &= \rho^2 \left(1+ \frac{\pi e}{2n^{\dta}}(1+o(1))\right)&\notag \\
                                                                                    &= r^2 (1+o(1)). &\label{eq:EUL_bound}
\end{align}

From (\ref{eq:EUL_bound}) and Lemma~\ref{lemma:ELX_EUL}, we have 
\[
     \mathbb{E}_{\L}\mathbb{E}_{X}\big[d^2(U,\L)\big|H\text{ is full rank}\big]\leq r^2 (1+o(1)).
\]
Recall that $V_n$ denotes the volume of an $n$-dimensional unit ball. 
Using Stirling's approximation, we get,
\[
 V_n^{1/n}=\left( \frac{\pi^{n/2}}{\Gamma(n/2+1)} \right)^{1/n}= \frac{\sqrt{2\pi e}}{n^{1/2}}(1+o(1)).
\]
 Therefore,
\[
 n(\text{vol}(\L))^{2/n}=(\reff(\L))^2 2\pi e(1+o(1))=r^2 2\pi e(1+o(1))
\]
 and hence,
 \[
   \mathbb{E}_{\L}\mathbb{E}_{X}\left[\frac{d^2(U,\L)}{n(\text{vol}(\L))^{2/n}}\Bigg|H\text{ is full rank}\right]\leq \frac{1}{2\pi e}(1+o(1)).
 \]
 Using this, and Lemma~\ref{lemma:MSEgood_1},
we can write
\begin{equation}
  \mathbb{E}[G(\L)] \leq \frac{1}{2\pi e} (1+\delta(n)), 
  \label{eq:msegood_1}
\end{equation}
where $\delta(n)$ is a quantity that goes to $0$ as $n\to \infty$.
% as required. 
We also have $G(\L)>1/(2\pi e)$ for all $\L$. For any $\gamma>0$, we can write
\begin{align}
   \mathbb{E}[G(\L)]&\geq \frac{1}{2\pi e}\Pr\left[\frac{1}{2\pi e}<G(\L)\leq \frac{1}{2\pi e}+\gamma\right] +\left(\frac{1}{2\pi e}+\gamma\right)\Pr\left[G(\L)> \frac{1}{2\pi e}+\gamma\right]&\notag \\
                                 &= \frac{1}{2\pi e}\left( 1-\Pr\left[G(\L)> \frac{1}{2\pi e}+\gamma\right]\right) +\left(\frac{1}{2\pi e}+\gamma\right)\Pr\left[G(\L)> \frac{1}{2\pi e}+\gamma\right]&\notag \\
                                 &=\frac{1}{2\pi e}+ \gamma \Pr\left[G(\L)> \frac{1}{2\pi e}+\gamma\right], &\notag
\end{align}
and hence,
\begin{equation*}
  \Pr\left[G(\L)> \frac{1}{2\pi e}+\gamma\right]\leq \frac{\mathbb{E}[G(\L)]-1/(2\pi e)}{\gamma}
\end{equation*}
Since the above inequality holds for every $\gamma>0$, we can choose, for e.g., $\gamma=\sqrt{\delta(n)}$, and use (\ref{eq:msegood_1}) to obtain
\[
  \Pr\left[G(\L)>\frac{1}{2\pi e}+\sqrt{\delta(n)}\right]\leq \sqrt{\delta(n)}\to 0 \text{ as } n\to\infty.
\]
Therefore, we can conclude that the probability of choosing an LDA lattice which is good for MSE quantization tends to $1$ as $n\to\infty$. 
\end{proof}

%%%%%%%%%%%%%%%%%%%%%%%%%%%%%%%%%%%%%%%%%%%%%%%%%%%%%%%%%%%%%%%%%%%%%%%%%%%%%%%%%%%%%%%%%%%%%%%%%%%%%%%%
%%%%%%%%%%%%%%%%%%%%%%%%%%%%%%%%%%%%%%%%%%%%%%%%%%%%%%%%%%%%%%%%%%%%%%%%%%%%%%%%%%%%%%%%%%%%%%%%%%%%%%%%%%
%%%%%%%%%%%%%%%%%%%%%%%%%%%%%%%%%%%%%%%%%%%%%%%%%%%%%%%%%%%%%%%%%%%%%%%%%%%%%%%%%%%%%%%%%%%%%%%%%%%%%%%%%%
\section{Packing Goodness of the Duals of LDA Lattices}\label{sec:LDA_dualpacking}
Recall that $\rpack(\L)$ denotes the packing radius of $\L$, and that
a sequence of lattices $\{ \Ln \}$ is good for packing if 
\[
 \frac{\rpack(\Ln)}{\reff(\Ln)}\geq \frac{1}{2} -o(1).
\]

Our motivation for studying the properties of the dual of a lattice comes from~\cite{Vatedka13}, where a nested lattice coding scheme was presented for compute-and-forward in a bidirectional relay network with an untrusted relay. In this problem, two users want to exchange messages with each other, with all communication taking place via an honest-but-curious bidirectional relay.
The users operate under an average transmission power constraint of $P$, and the links between the users and the relay are AWGN channels with noise variance $\nsvar$. 
The messages have to be  reliably exchanged (the probability of decoding error should go to zero asymptotically in the blocklength), but kept secret from the relay. To be more specific,
the signals received by the relay have to be statistically independent of the individual messages. This requirement is also called perfect (or Shannon) secrecy.
It was shown in~\cite{Vatedka13} that if the fine lattices are good for AWGN channel coding, the coarse lattices are good for MSE quantization, and the duals of the coarse lattices
are good for packing, then a rate of $\frac{1}{2}\log_2\frac{P}{\nsvar}-\log_2(2e)$ can be achieved with perfect secrecy.
This motivates us to construct lattices whose duals are good for packing. In this section, we will prove the following result.
\begin{theorem}
 Let $\cG$ be an $(\alpha, A,\beta, B,\epsilon,\vartheta)$-good $(\Dta_V,\Dta_C)$-regular bipartite graph whose parameters satisfy the hypotheses of Lemma~\ref{lemma:goodgraphs}. If 
 \[
     \lba>\max\left\{\frac{1}{2(1-R)}, \:\frac{2B+3/2}{B(1-R)-1}\right\},
 \]
 then the dual of a randomly chosen lattice from a $(\cG,\lba)$ LDA ensemble is good for packing with probability tending to $1$ as $n\to\infty$.
 \label{theorem:LDA_dualpacking}
\end{theorem}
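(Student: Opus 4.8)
The plan is to show that, with high probability, the dual lattice $\dLn$ has no nonzero point of Euclidean norm smaller than $\reff(\dLn)/2$, i.e., no short dual vectors inside a ball of radius roughly $\reff(\dLn)/2$. Since $\reff(\dLn) = 1/\reff(\L)$ (the fundamental volumes are reciprocal) and $\reff(\L)=\Theta(p^{1-R}\sqrt{n}/\sqrt{2\pi e}) = \Theta(n^{\lba(1-R)+1/2})$, the dual effective radius is $\Theta(n^{-\lba(1-R)-1/2})$, so the relevant ball of dual vectors is extremely small. The first step is to recall the explicit description of the dual of a Construction-A lattice: if $\L = \Phi(\cC)+p\Z^n$ where $\cC$ is the code with parity-check matrix $H$, then $\dLn = \frac{1}{p}(\Phi(\cC^\perp) + p\Z^n)$, where $\cC^\perp$ is the dual code over $\Fp$, which is the row space of $H$. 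Hence a nonzero point $\v\in\dLn$ has the form $\v = \frac{1}{p}(\Phi(H^T\u) + p\mathbf{m})$ for some $\u\in\Fp^{n(1-R)}$ and $\mathbf{m}\in\Z^n$; after absorbing, $p\v$ is an integer vector congruent to $H^T\u \bmod p$. So I need: no nonzero integer vector $\x \in \Z^n$ with $\Vert\x\Vert < p\reff(\dLn)/2 = \Theta(n^{-\lba(1-R)-1/2}\cdot n^{\lba})=\Theta(n^{\lba R + 1/2})$ lies in the code $\cC^\perp$ (viewed additively over $p\Z^n$), which is just the statement that $\cC^\perp$ has minimum Euclidean distance (in the Construction-A sense) at least $p\reff(\dLn)/2$.

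The second step is the union bound over potential short dual codewords. By Lemma~\ref{lemma:Zn_cap_rB}, the number of integer points in a ball of radius $\rho_n := p\reff(\dLn)/2 = \Theta(n^{\lba R+1/2})$ is at most $V_n(\rho_n + \sqrt{n}/2)^n$, which I will want to compare against $p^{-k} = p^{-nR} = n^{-\lba nR}$ — the probability that a fixed nonzero $\x\not\in p\Z^n$ lies in $\cC^\perp$ (equivalently, in the row space of $H$), conditioned on $H$ being full rank. This is where the condition $\lba > 1/(2(1-R))$ enters: it makes $\rho_n \ll p$, so that generic $\x$ in the ball are not in $p\Z^n$, and the counting is controlled. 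For the few lattice points that are in $p\Z^n$, note the shortest nonzero such point has norm $p \gg \rho_n$ under the same condition, so these contribute nothing. The crucial subtlety — and the point where the proof parallels the channel-coding argument rather than being trivial — is that $\cC^\perp$ is the row space of the \emph{sparse} matrix $H$, not a uniformly random code; so I cannot simply say $\Pr[\x\in\cC^\perp]=p^{-k}$. Instead I follow di~Pietro's approach: for each $\x$, either $\x$ has large support (more than $\epsilon n$, say), in which case the expansion properties (L1), (L2), (R1), (R2) force $H^T\u = \x$ to constrain $\u$ heavily and the probability is at most $p^{-\Theta(n)}$ via Lemma~\ref{lemma:PrHu_x} (which says $\Pr[H^T\u=\x]=p^{-|\Su|}$ when $\Supp(\x)\subseteq\Su$); or $\x$ has small support $m$, in which case I count such $\x$ using the second bound of Lemma~\ref{lemma:Zn_cap_rB}, namely $\binom{n}{m}V_m(\rho_n+\sqrt m/2)^m$, and bound the per-$\x$ probability by summing over $\u$ with $|\Su|$ large enough (forced by expansion on the check side), getting roughly $p^{m}\cdot p^{-\Theta(\text{something}\cdot\, \text{(size of }\Su))}$. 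Summing the small-support terms over $1\le m\le \epsilon n$ and the large-support term, and using $\reff(\dLn)$ being polynomially small, gives a total that tends to $0$; the conditions $B > 2(1+R)/(1-R)$-type inequalities and $\lba > (2B+3/2)/(B(1-R)-1)$ are precisely what make the exponents work out, exactly as in Theorem~\ref{theorem:LDA_awgn}. Finally, I intersect with the event that $H$ is full rank, which by Lemma~\ref{lemma:LDA_fullrank} holds with probability $1-o(1)$ under the stated hypothesis on $B$ and $\lba$; outside this event I make no claim, which is fine for a ``probability $\to 1$'' statement.

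The main obstacle is the small-support regime: unlike the primal packing problem, where short vectors are at scale $\reff(\L)\approx n^{\lba(1-R)+1/2}$ and one worries about points with support up to $\epsilon n$, here the scale $\rho_n\approx n^{\lba R+1/2}$ is much smaller, and I must check that for every support size $m$ in the dangerous range, the number of candidate integer vectors of that support inside $\rho_n\cB$ times the probability that such a vector is a dual codeword is summably small. The probability side requires the expansion hypotheses on the \emph{check} nodes (properties (R1), (R2)) to guarantee that any $\u$ realizing a given sparse $\x$ as $H^T\u$ must itself have large enough ``shadow'' $\Su$, so that $p^{-|\Su|}$ beats the $p^{|\Supp(\u)|}$ entropy of choosing $\u$ — this is the reverse of the direction used for primal channel-coding goodness, and getting the right inequality ($\lba > (2B+3/2)/(B(1-R)-1)$) is the delicate bookkeeping. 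Everything else — the dual-of-Construction-A identity, the reciprocal effective radii, Lemma~\ref{lemma:LDA_fullrank}, and the $\lba>1/(2(1-R))$ condition ruling out $p\Z^n$ points — is routine and can be cited or dispatched in a line. I would organize the proof as: (i) reduce to absence of short nonzero vectors in $\cC^\perp$; (ii) discard $p\Z^n$ points; (iii) split by support size and union-bound using Lemmas~\ref{lemma:Zn_cap_rB} and \ref{lemma:PrHu_x} together with the expansion properties; (iv) verify the exponents using the hypotheses on $\lba$ and $B$; (v) intersect with full-rank via Lemma~\ref{lemma:LDA_fullrank}.
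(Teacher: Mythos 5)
Your proposal follows essentially the same route as the paper: reduce via the dual-of-Construction-A identity to showing the Construction-A lattice built from $\cC^{\perp}$ (the row space of $H$) has no nonzero point in a ball of radius $\approx p^{R}/V_n^{1/n}=\Theta(n^{\lba R+1/2})$, then union-bound over $\u\in\Fp^{n(1-R)}\setminus\{\0\}$ and over integer points of bounded support using Lemma~\ref{lemma:PrHu_x}, Lemma~\ref{lemma:Zn_cap_rB}, and the expansion properties split by $|\Supp(\u)|$, with $\lba>1/(2(1-R))$ ruling out $p\Z^n$ points and $\lba>(2B+3/2)/(B(1-R)-1)$ handling the sparse regime --- exactly the paper's Appendix~C argument. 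The only blemish is the intermediate claim $\reff(\L^{*})=1/\reff(\L)$ (the correct relation carries a factor $V_n^{-2/n}$), but this slip cancels in your arithmetic and you land on the correct scale, so the sketch is sound.
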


\begin{proof}
If $\L$ is a lattice obtained by applying Construction A to a linear code $\cC$, and if $\L^*$ is the dual of $\L$, then, $\frac{1}{p}\L^*$ is obtained 
 by applying Construction A to the dual code, $\cC^{\perp}$ (see~\cite[Lemma~27]{Vatedka13} for a proof). 
 To show that the duals of LDA lattices are good for packing, it is enough to show that the Construction-A lattices generated by the duals
 of the nonbinary LDPC codes ($\cC$) are good for packing.
 
 Note that $H$ (a parity check matrix for $\cC$) is a generator matrix for $\cC^{\perp}$. Let $\L'$ be the lattice obtained by applying Construction A on $\cC^{\perp}$.
 We will prove that $\L'$ is good for packing.
 The lattice $\L'$ contains $p\Z^n$ as a sublattice, and the nesting
 ratio is $p^{n(1-R)}$ if $H$ is full-rank. The volume of $\cV(\L')$ is equal to the ratio of the volume of $p\Z^n$ to the nesting ratio, and hence,
 \[
     \vol(\L')=\frac{p^n}{p^{n(1-R)}}=p^{nR}.
 \]
Recall that $V_n$ is the volume of the unit ball in $n$ dimensions. The effective radius of $\L'$ can therefore be written as,
\begin{equation}
     \reff(\L')=\frac{p^R}{(V_n)^{1/n}}.
     \label{eq:reff_dual}
\end{equation}
Let us define
\begin{equation*}
 r_n:= \frac{p^R}{V_n^{1/n}}\zeta_n,
\label{eq:rn_defn}
 \end{equation*}
where $\zeta_n$ is a term that goes to $1$ as $n\to\infty$, defined as follows:
\begin{equation*}
 \zeta_n=\frac{1}{n^{4/n}}\left(\frac{ C_1}{e(1-R)\ln n} \right)^{\frac{4C_1}{(1-R)\ln n}} \left(1-\frac{C_1}{(1-R)\ln n}\right)^{2}.
\label{eq:zetan_defn}
 \end{equation*}
Here,
\begin{equation}
 C_1:=\frac{\ln\left( \frac{8}{1-(1-R)/(2\alpha)} \right)}{\lba(1-(1-R)/\alpha)}.
\label{eq:C1_defn}
 \end{equation}

 We want to prove that the probability $\Pr[\rpack(\L')<\reff(\L')/2]\to 0$ as $n\to\infty$.
 We will show that the probability of finding a nonzero lattice point within a ball of radius $r_n$ centered at $\0$ goes to zero as $n\to\infty$.
 
Since $p\Z^n$ is always a sublattice of $\L'$, we must ensure that $\reff(\L')<p$. Substituting for $\reff(\L')$ from (\ref{eq:reff_dual}), we can see that $\reff(\L')<p$ is satisfied
 for all sufficiently large $n$ as long as $\lba>\frac{1}{2(1-R)}$, which is guaranteed by the hypothesis of Theorem~\ref{theorem:LDA_dualpacking}. 

 We want
 \[
    \Pr\left[ \exists \u\in \Fp^{n(1-R)}\backslash\{ \0 \}:H^T\u  \in (\Z^n\cap r_n\cB)\bmod p\Z^n \right]\to 0 \text{ as }n\to\infty.
 \]
Instead, we will prove the following (stronger) statement.
 \[
     \sum_{\u\in\Fp^{n(1-R)}\backslash\{  \0\}}\Pr\left[ H^T\u \in (\Z^n\cap r_n\cB)\bmod p\Z^n \right]\to 0 \text{ as }n\to\infty.
 \]
 The summation in the above statement can be expanded as follows:
 \begin{align}
   &\sum_{\u\in\Fp^{n(1-R)}\backslash\{  \0\}}\Pr\left[ H^T\u  \in (\Z^n\cap r_n\cB)\bmod p\Z^n \right]&\notag \\
   &\quad =\sum_{t=1}^{n(1-R)}\sum_{\substack{\u\in\Fp^{n(1-R)}\\ |\mathrm{Supp}(\u)|=t}}\Pr\left[ H^T\u  \in (\Z^n\cap r_n\cB)\bmod p\Z^n \right]&\notag \\
   &\quad =\sum_{t=1}^{n(1-R)}\sum_{\substack{\u\in\Fp^{n(1-R)}\\ |\mathrm{Supp}(\u)|=t}}\sum_{s=1}^{n}\sum_{\substack{\x\in\Z^n\cap r_n\cB \\ |\mathrm{Supp}(\x)|=s}}\Pr[H^T\u \equiv \x\bmod p]. &\label{eq:prpack_1}
 \end{align}

Fix $\u\in \Fp^{n(1-R)}$. Recall, from Section~\ref{sec:LDA_ensemble}, that $\mathbb{S}(\u)$ is the set of all variable nodes that participate in the check equations $i$ for which $u_i\neq 0$. 
  For $S\subset\{ 1,2,\ldots,n \}$, define $\mathbf{1}_S(\Su)$ to be the function that takes the value  $1$ if $\Su=S$, and zero otherwise. 
  Note that this is a deterministic function of $\u$ since $\hH$ is fixed beforehand.
  Let us also define $\mathtt{1}_m(\Su)$ to be the function which takes the value $1$ if $|\Su|=m$, and zero otherwise.
  Using Lemma~\ref{lemma:PrHu_x}, we have
  \begin{align}
   \Pr[H^T \u \equiv \x\bmod p] &= \begin{cases}
								 \frac{1}{p^{|\Su|}} & \text{if }\Supp(\x)\subset \Su \\
								 0                                 &\text{otherwise.}
                                                                \end{cases}  &\notag \\
                                                       &\leq \frac{1}{p^{|\Su|}} &\notag \\
                                                       &= \sum_{m=1}^{n}\mathtt{1}_m(\Su) \frac{1}{p^m}. &\notag
  \end{align}

We use this in (\ref{eq:prpack_1}) to obtain
\begin{align}
  &\sum_{\u\in\Fp^{n(1-R)}\backslash\{  \0\}}\Pr\left[ H^T\u \in (\Z^n\cap r_n\cB)\bmod p\Z^n \right] &\notag \\
  &\quad \leq\sum_{t=1}^{n(1-R)}\sum_{\substack{\u\in\Fp^{n(1-R)}\\ |\mathrm{Supp}(\u)|=t}} \sum_{m=1}^{n}\mathtt{1}_m(\Su)\frac{1}{p^m}
                                                      \sum_{s=1}^{m} \sum_{\substack{\x\in\Z^n\cap r_n\cB \\ |\mathrm{Supp}(\x)|=s}}1&\notag \\
  &\quad =\sum_{t=1}^{n(1-R)}\sum_{\substack{\u\in\Fp^{n(1-R)}\\ |\mathrm{Supp}(\u)|=t}} \sum_{m=1}^{n}\mathtt{1}_m(\Su)
                                                     \frac{1}{p^m}\vert\{\x\in\Z^n\cap r_n\cB : |\mathrm{Supp}(\x)|\leq m\}\vert .&\label{eq:LDAdualpacking_step1}
\end{align}
In Appendix~C, we show that the above quantity goes to zero as $n\to\infty$. Therefore, the probability that the dual of a randomly chosen LDA lattice
is good for packing goes to 1 as $n\to\infty$, completing the proof of Theorem~\ref{theorem:LDA_dualpacking}.
\end{proof}
%%%%%%%%%%%%%%%%%%%%%%%%%%%%%%%%%%%%%%%%%%%%%%%%%%%%%%%%%%%%%%%%%%%%%%%%%%%%%%%%%%%%%
%%%%%%%%%%%%%%%%%%%%%%%%%%%%%%%%%%%%%%%%%%%%%%%%%%%%%%%%%%%%%%%%%%%%%%%%%%%%%%%%%%%%%
\section{Remarks}\label{sec:remarks}

We now make some observations regarding our results and their applications to several problems. 
We first discuss the extension of our results to nested lattices, and then make some remarks regarding the choice of parameters,
before concluding with some open problems.
%  In this paper, we showed that LDA lattices have several ``good'' properties. 
%  This is a key step in showing that these lattices,
% which permit polynomial-time decoding algorithms for reliable communication over the AWGN channel, have several desirable properties.

\subsection{Construction of Nested Lattices}

The main result of our paper, namely Theorem~\ref{theorem:LDA_simultaneousgoodness}, shows that a randomly chosen LDA lattice satisfies the desired ``goodness'' properties with high probability.
In applications such as compute-and-forward, and coding for the power constrained AWGN channel, we need \emph{nested lattices} which satisfy
the necessary properties. Different nested lattice constructions have been proposed~\cite{Erez04,Nazer11,Ordentlich}, and 
we briefly describe the construction by Ordentlich and Erez~\cite{Ordentlich} here, since the results presented in this paper
can be easily extended to nested lattices using their construction.

Choose a $k_c\times n$ parity check matrix, $H_c$, over $\Fp$. Let $\cC_c$ be the linear code that has parity check matrix $H_c$.
Let $H_f$ be the $k_f\times n$ parity check matrix ($k_f<k_c$) that consists of the first $k_f$ rows of $H_c$, and $\cC_f$ denote
the corresponding linear code. Clearly, $\cC_c$ is a subcode of $\cC_f$. If $\L_c$ and $\L_f$ are lattices obtained by applying 
Construction A to $\cC_c$ and $\cC_f$ respectively, then $\L_c\subset \L_f$, with nesting ratio $p^{k_f-k_c}$ if the rows of $H_c$ are linearly independent. 
The parity check matrix $H_c$ can be chosen so that 
the Tanner graphs corresponding to both $\cC_c$ and $\cC_f$ have the required expansion properties~\cite[Section 4.3]{diPietrothesis}. As long as $\lba$ and the 
parameters of the Tanner graph are chosen appropriately, the lattice $\L_c$ satisfies the goodness properties with 
probability tending to $1$ as $n\to\infty$. Also, $\L_f$ satisfies the goodness properties with high probability. Using the union bound,
we can argue that $\L_c$ and $\L_f$ simultaneously satisfy the goodness properties with probability tending to $1$ as $n\to\infty$.

With this construction, we can use Theorem~\ref{theorem:LDA_simultaneousgoodness} to conclude that nested LDA lattices achieve the capacity of the power constrained AWGN channel, 
the capacity of the dirty paper channel, and the rates guaranteed by
the compute-and-forward protocol~\cite{Nazer11}. Furthermore, they can also be used for secure bidirectional relaying,
and achieve the rates guaranteed by~\cite{Vatedka13}.  However, all of this is guaranteed under the assumption of
a \emph{closest lattice-point decoder} being used at the destination/relay. Although these lattices were empirically shown to give
low error probability over the AWGN channel (without power constraints), their performance with
belief propagation decoding still requires further study. 

\subsection{Choice of Parameters and Complexity of the BP Decoder}
Theorem~\ref{theorem:LDA_simultaneousgoodness} gives sufficient conditions on the parameters required to obtain 
the structural goodness properties of a randomly chosen LDA lattice.  In practice, one would want to optimize over 
the parameters in Theorem~\ref{theorem:LDA_simultaneousgoodness} to reduce the decoding complexity. 
At this point, we can only say that the achievability results for the various communication problems
are valid with the assumption that a closest lattice-point decoder is used.
However, in practice, we would want to use a belief propagation decoder instead.
If this is done,
then the decoding complexity would be roughly of the order of $np\log p$ ($p$ messages need to be computed at each node,
 this having complexity $O(p\log p)$, and there are $O(n)$ nodes). Therefore, it is necessary to choose the smallest $p$
for which the conditions of Theorem~\ref{theorem:LDA_simultaneousgoodness} are satisfied. Note that the condition $\lba>\frac{2B+3/2}{B(1-R)-1}$
means that we should always have $\lba>2/(1-R)$.  Choosing $R=1/3$, we can
make the lower bound on $\lba$ close to $3$ by appropriately choosing $A$ and $B$. 
This means that the decoding complexity would be roughly of the order of $n^4\log n$.
Although this means that we can decode in polynomial time, this complexity is still high when compared to the decoding complexity of the lattices presented in~\cite{Sommer,Yan13}. 
%begin tcol
  \tcol{
For practical implementation of LDA lattices, it would be desirable to have the decoding complexity grow as  $O(n)$ or $O(n\log^{\gamma}n)$ for some $\gamma>0$.
For instance, the encoding and decoding complexities of polar lattices~\cite{Yan13} grow as $O(n\log^2n)$.
  }%end tcol
However, it is still not known whether the lattices in~\cite{Sommer,Yan13} have all the ``goodness'' properties that the LDA lattices satisfy. 

%%%%%%%%%%%%%%%%%%%%%%%%%%%%%%%%%%%%%%%%%%%%%%%%%%%%%%%%%%%%%%%%%%%%%%%%%%%%%%%%%%%%%%
%%%%%%%%%%%%%%%%%%%%%%%%%%%%%%%%%%%%%%%%%%%%%%%%%%%%%%%%%%%%%%%%%%%%%%%%%%%%%%%%%%%%
\subsection{Some Future Directions}\label{sec:conclusions}
% In this article, we summarized some ``goodness'' properties of LDA lattices. This is a key step in showing that these lattices,
% which permit low complexity, polynomial time decoding algorithms for reliable communication over the AWGN channel, have several desirable properties.
% With these goodness properties, we can conclude that LDA lattices can be used to achieve the rates guaranteed by
% the compute-and-forward protocol~\cite{Nazer11}. Furthermore, they can also be used for secure bidirectional relaying,
% and achieve the rates guaranteed by~\cite{Vatedka13}. However, these rates are guaranteed under the assumption of
% a closest lattice point decoder being used at the destination/relay. Although these lattices were empirically shown to give
% low error probability over the AWGN channel, the performance with
% belief propagation decoding still requires further study. 

As remarked earlier, the study of BP decoders for LDA lattices requires further investigation, and empirical evidence suggests that
LDA lattices perform well with BP decoding.
Another key point to note is that we required  $p$ to grow polynomially in $n$ to obtain the aforementioned goodness properties.
Large values of $p$ translate to higher BP decoding complexity, and it would be useful to study the 
structural properties of LDA lattices over fields of smaller sizes. Empirical results by~\cite{diPietrothesis,lda_ita13}
suggest that it may be possible to get good error performance over the AWGN channel (without power constraints) even with moderate field sizes. This suggests that it may be possible to tighten the arguments presented in this paper and obtain better bounds on the parameters of the LDA lattices needed to guarantee the desired properties. 

In this article, we did not discuss two important ``goodness'' properties, namely covering goodness,
and secrecy goodness~\cite{Ling13} of LDA lattices. The property of secrecy goodness was
crucially used in designing nested lattice codes for the wiretap channel in~\cite{Ling13},
and for strongly secure bidirectional relaying in~\cite{Vatedka13}.
Whether LDA lattices satisfy these properties is left as future work.

%%%%%%%%%%%%%%%%%%%%%%%%%%%%%%%%%%%%%%%%%%%%%%%%%%%%%%%%%%%%%%%%%%%%%%%%%%%%%%%%%%%%%%%
%%%%%%%%%%%%%%%%%%%%%%%%%%%%%%%%%%%%%%%%%%%%%%%%%%%%%%%%%%%%%%%%%%%%%%%%%%%%%%%%%%%%%%%%%%
\section{Acknowledgements}
The authors would like to thank Gilles Z\'{e}mor for useful discussions. The work of the first author was supported in part by the Tata Consultancy Services Research Scholarship Program.
%%%%%%%%%%%%%%%%%%%%%%%%%%%%%%%%%%%%%%%%%%%%%%%%%%%%%%%%%%%%%%%%%%%%%%%%%%%%%%%%%%%%%
%%%%%%%%%%%%%%%%%%%%%%%%%%%%%%%%%%%%%%%%%%%%%%%%%%%%%%%%%%%%%%%%%%%%%%%%%%%%%%%%%%%%%%
%%%%%%%%%%%%%%%%%%%%%%%%%%%%%%%%%%%%%%%%%%%%%%%%%%%%%%%%%%%%%%%%%%%%%%%%%%%%%%%%%%%%%%%%%
%%%APPENDIX A
%%%%%%%%%%%%%%%%%
\section*{Appendix A: Proof of Lemma~\ref{lemma:LDA_fullrank}}

We will prove that the probability that there is any nontrivial  linear combination of the rows of $H$ equal to zero tends to $0$ as $n\to\infty$.
Let $\h_i$ denote the $i$th row of $H$.
For any $S\subseteq\{ 1,2,\ldots,n(1-R) \}$, we define 
\[
 \chi_S = \begin{cases}
             1 & \text{if there exist }\{ a_i:i\in S,a_i\in \Fp\backslash\{0\} \} \text{ such that } \sum_{i\in S}a_i\mathbf{h}_i=\0 \\
             0 & \text{otherwise.}
          \end{cases}
\]
Let us also define 
\[
  Y = \sum_{s=1}^{n(1-R)}\sum_{\substack{S\subset\{ 1,2,\ldots,n(1-R) \}\\ |S|=s}}\chi_S
\]
Clearly, $H$ is full rank if and only if $Y=0$.
Using Markov's inequality, we see that 
\[
 \Pr[Y\geq 1]\leq \mathbb{E}[Y].
\]
Therefore, it is enough to find an upper bound on the expectation of $Y$. Let 
\[
\eta(S) = |\cup_{i\in S}\mathrm{Supp}(\h_i)|.
\]
In other words, $\eta(S)$ is the number of variable nodes that participate in the parity check equations indexed by $S$. This is also equal to the number of neighbours of  $S$  in $\cG$, i.e., $|N(S)|$.
Observe that there are at most $p^{s-1}$ different linear combinations (not counting scalar multiples) of $s$ rows of $H$. Using Lemma~\ref{lemma:PrHu_x}, the probability that a fixed linear combination of the $S$ rows of $H$ is zero is equal to $1/p^{\eta(S)}$. Using the union bound,
\[
 \Pr[\chi_S=1]\leq \frac{p^{s-1}}{p^{\eta(S)}}.
\]
Therefore, we have
\begin{align}
 \bE[Y] &\leq \sum_{s=1}^{n(1-R)} \sum_{\substack{S\subset\{ 1,2,\ldots,n(1-R) \}\\ |S|=s}}\frac{p^{s-1}}{p^{\eta(S)}} &\notag \\
           & = \sum_{s=1}^{\vartheta n(1-R)} \sum_{\substack{S\subset\{ 1,2,\ldots,n(1-R) \}\\ |S|=s}}\frac{p^{s-1}}{p^{\eta(S)}}+\sum_{s=\vartheta n(1-R)}^{n(1-R)/2} \sum_{\substack{S\subset\{ 1,2,\ldots,n(1-R) \}\\ |S|=s}}\frac{p^{s-1}}{p^{\eta(S)}} &\notag \\
              &  \qquad\qquad\qquad + \sum_{s=n(1-R)/2}^{n(1-R)} \sum_{\substack{S\subset\{ 1,2,\ldots,n(1-R) \}\\ |S|=s}}\frac{p^{s-1}}{p^{\eta(S)}} &\notag 
\end{align}
It is easy to show that for every $S\subset C$, we have $\eta(S)=|N(S)|\geq |S|/(1-R)$.
Using properties (R1) and (R2) of the superexpanders, and the above fact, we have
\begin{align}
   \bE[Y]       & \leq  \sum_{s=1}^{\vartheta n(1-R)}\nchoosek{n(1-R)}{s}\frac{p^{s-1}}{p^{Bs}} +  \sum_{s=\vartheta n(1-R)}^{n(1-R)/2}\nchoosek{n(1-R)}{s}\frac{p^{s-1}}{p^{\beta s}} &\notag \\
         & \qquad \qquad\qquad   +\sum_{s=n(1-R)/2}^{n(1-R)}\nchoosek{n(1-R)}{s} \frac{p^{s-1}}{p^{s/(1-R)}}, &\notag%\label{eq:1}
\end{align}
 We can further simplify this as follows,
\begin{align}
 \bE[Y]   &\leq  \sum_{s=1}^{\vartheta n(1-R)} n^s \frac{n^{\lba (s-1)}}{n^{\lba Bs}} + \sum_{s=\vartheta n(1-R)}^{n(1-R)/2}2^n\frac{n^{\lba(s-1)}}{n^{\lba\beta s}}  +\sum_{s=n(1-R)/2}^{n(1-R)}2^n \frac{n^{\lba(s-1)}}{n^{\lba s/(1-R)}}  & \notag \\
       &\leq \sum_{s=1}^{\vartheta n(1-R)} n^{s(1+\lba (1-B))-\lba} + n^{-c_1 n}(1+o(1))  &\notag \\
      &= n^{(1+\lba (1-B))-\lba}(1+o(1))+ n^{-c_1 n}(1+o(1)),&\label{eq:EX_upperbound}  
\end{align}
for some constant $c_1>0$, since $\beta$ and $1/(1-R)$ are greater than $1$, and $B>1+1/\lba$. Suppose that for some constant $\dta>0$, we have $B>2+(1+\dta)/\lba$. Then, $(1+\lba (1-B))-\lba<-(2\lba+\dta)$, and therefore,
\[
 \bE[Y] \leq n^{-(2\lba+\dta)}(1+o(1)).
\]
Therefore, $\Pr[Y\geq 1]$, and hence the probability that $H$ is not full rank, goes to zero as $n\to\infty$.\qed
% \textbf{EDIT:}As we will see later, we require $\lba>1/(1-R)$ for the lattices to be good for MSE quantization. The requirement that $B$ must satisfy can be reduced to $B>3-R$, which is not very restrictive.\qed

%\subsubsection*{Remark}
{\bf{Remark:}}
To prove that $\bE[Y]\to 0$ in (\ref{eq:EX_upperbound}), it is sufficient to have $B>1+1/\lba$. 
The expected value of $Y$, and subsequently $\Pr[H \text{ is not full-rank}]$ could then be bounded from above by $n^{-(\lba+\dta)}(1+o(1))$.
However, we need $\Pr[H \text{ is not full-rank}]$ to be less than $n^{-(2\lba+\dta)}(1+o(1))$ to prove that LDA lattices are good for MSE quantization (in particular, to show that the second term in (\ref{eq:EGl_ubound}) goes to zero),
and hence we impose the stronger condition that $B>2+1/\lba$.
%%%%%%%%%%%%%%%%%%%%%%%%%%%%%%%%%%%%%%%%%%%%%%%%%%%%%%%%%%%%%%%%%%%%%%%%%%%%%%%%%%%%%%
%%%%%%%%%%%%%%%%%%%%%%%%%%%%%%%%%%%%%%%%%%%%%%%%%%%%%%%%%%%%%%%%%%%%%%%%%%%%%%%%%%%%
\section*{Appendix~B}

\subsection{Proof of Lemma~\ref{lemma:MSEgood_1}}
Recall that $V_n$ denotes the volume of an $n$-dimensional unit ball. Using Stirling's approximation, we get,
\begin{equation*}
 V_n^{1/n}=\left( \frac{\pi^{n/2}}{\Gamma(n/2+1)} \right)^{1/n}= \frac{\sqrt{2\pi e}}{n^{1/2}}(1+o(1)).
\label{eq:Vn_Stirlingapprox}
\end{equation*}
For any Construction-A lattice $\L$, we have $p\Z^n\subset \L$. If $H$ is full-rank, 
then the number of points of $\L$ in $[0,p)^n$, (and therefore, within $\cV(p\Z^n)$) is equal to $p^{nR}$, which is $|\L/p\Z^n|$.
Since $|\L/p\Z^n|=\text{vol}(p\Z^n)/\text{vol}(\L)$,
we get $\text{vol}(\L)=p^{n(1-R)}$. Therefore,
\begin{equation*}
 \reff(\L)=\left( \frac{\text{vol}(\L)}{V_n} \right)^{1/n}=\frac{n^{\lba (1-R)+1/2}}{\sqrt{2\pi e}}(1+o(1)).
\end{equation*}
% Let us define 
% \begin{equation}
%      r:=\frac{p^{1-R}}{V_n^{1/n}}=\frac{n^{\lba (1-R)+1/2}}{\sqrt{2\pi e}}(1+o(1)),
%      \label{eq:MSE_r_def}
% \end{equation}
%  so that $r/\reff(\L)\to 1$ as $n\to\infty$. 
% Let us define $r:=\reff(\L)$.

For any $\x\in\R^n$, we have
$d(\x,\L)=\min_{\y\in\L}\Vert \y-\x \Vert$ to be the distance between $\x$ and the closest point in $\L$ to $\x$.
Recall that $X$ is a random vector uniformly distributed over the fundamental Voronoi region of $\L$.
The normalized second moment of $\L$ is then equal to 
\[
 G(\L)=\mathbb{E}_X\left[ \frac{d^2(X,\L)}{n(\text{vol}(\L))^{2/n}} \right].
\]
%  Iti s a fact that for every lattice $\L$, $G(\L)> 1/(2\pi e)$.
% We wish to show that $G(\L)\to 1/(2\pi e)$ in probability. 
We can write
\begin{align}
 \bE_{\L}[G(\L)] &= \bE_{\L}[G(\L)| H\text{ is full rank}]\Pr[ H\text{ is full rank}] &\notag \\
                               & \qquad \qquad +  \bE_{\L}[G(\L)| H\text{ is not full rank}]\Pr[ H\text{ is not full rank}] & \notag \\
                           &= \bE_{\L,X}\left[\frac{d^{2}(X,\L)}{n(\text{vol}(\L))^{2/n}}\Bigg| H\text{ is full rank}\right]\Pr[ H\text{ is full rank}] &\notag \\
                               & \qquad  +  \bE_{\L,X}\left[\frac{d^{2}(X,\L)}{n(\text{vol}(\L))^{2/n}}\Bigg| H\text{ is not full rank}\right]\Pr[ H\text{ is not full rank}] &\label{eq:msegoodness1} 
\end{align}
Since $p\Z^n\subset \L$, we have for every $\x\in\R^n$, $d(\x,\L)\leq d(\x,p\Z^n)\leq p\sqrt{n}/2$. Additionally, since $\L\subset \Z^n$, we have $\text{vol}(\L)\geq \text{vol}(\Z^n)=1$. 
Hence, we can say that for any Construction-A lattice,
\begin{equation}
 \frac{d^{2}(X,\L)}{n(\text{vol}(\L))^{2/n}}\leq \frac{p^{2}}{4}
 \label{eq:Glambda_ConstAbound}
\end{equation}
with probability $1$. Let $\delta$ be a positive constant that satisfies $\delta<\lba(B-2)-1$.
From the hypotheses of Theorem~\ref{theorem:LDA_MSEquantization}, we have
$B>2(1+R)/(1-R)$, and $\lba>1/R$. 
This guarantees that $\lba(B-2)-1>4/(1-R)-1>0$, and hence, we can choose a $\delta>0$.
 Using Lemma~\ref{lemma:LDA_fullrank}, we can bound $\Pr[ H\text{ is not full rank}]$ from above by $n^{-2\lba-\delta}$.
% The hypothesis of Lemma~\ref{lemma:LDA_fullrank}
% is then satisfied if $\lba>(1+\delta)(1-R)/(4R)$. This is guaranteed, since we want $\lba>1/R$ for Theorem~\ref{theorem:LDA_simultaneousgoodness}, and as we will see later,
% any $\delta>0$ suffices for our proof.
% \textbf{EDIT: }As we will see later, we require $\lba>1/(1-R)$ for the lattices to be good for MSE quantization. The requirement that $B$ must satisfy can be reduced to $B>3-R$, which is not very restrictive.
Using this and  (\ref{eq:Glambda_ConstAbound}) in (\ref{eq:msegoodness1}), and the fact that $\Pr[ H\text{ is full rank}]\leq 1$, we obtain
\begin{align}
 \bE_{\L}[G(\L)] &\leq \bE_{\L,X}\left[\frac{d^{2}(X,\L)}{n(\text{vol}(\L))^{2/n}}\Bigg| H\text{ is full rank}\right] + \frac{p^2}{4}\frac{1}{n^{2\lba+\dta}} &\notag \\
                   &= \bE_{\L,X}\left[\frac{d^{2}(X,\L)}{n(\text{vol}(\L))^{2/n}}\Bigg| H\text{ is full rank}\right] +o(1), &\label{eq:EGl_ubound}
\end{align}
thus completing the proof.\qed
%%%%%%%%%%%%%%%%%%%%%%%%%%%%%%%%%%%%%%%%%%%%%%%%%%%%%%%%%%%%%%%%%%%%%%%%%%%%%
\subsection{Proof of Lemma~\ref{lemma:dxL_bound}}
Recall that $r:=\reff(\L)$.
We want to show that for some $\delta>0$, the probability $\Pr[ d(\x,\L)>r( 1+n^{-\omega} )| H\text{ is full rank} ]$ goes to zero faster than $n^{-2\lba R+\delta}$.
The proof is along the same lines as di Pietro's proof of existence of lattices that achieve the capacity of the power constrained AWGN channel in~\cite{diPietrothesis}.
The parameters chosen in~\cite{diPietrothesis} were not sufficient to show that the lattices are good for MSE quantization.
We have adapted the proof to show that under stronger conditions (on the parameters of the lattice), we can obtain lattices which are good for MSE quantization.
For $\y\in\Z^n$, define
\[
 \xi_{\y}=\begin{cases}
              1 & \text{if } H\y\equiv \mathbf{0}\bmod p, \\
              0 & \text{otherwise.}
          \end{cases}
\]
Let $\rho=r(1+n^{-\omega})$. Recall that $\x+\rho\cB$ denotes an $n$-dimensional ball centered at $\x$ and having radius $\rho$. We define
\[
 X_\rho :=  \sum_{\y\in \Z^n\cap (\x+\rho\cB)} \xi_{\y},
\]
which is simply the number of  lattice points in $\x+\rho\cB$.
% It is easy to verify that as long as $\lba>1/R$, we have $p\Z^n\cap (\x+\rho\cB)=\{ \mathbf{0} \}$ for all sufficiently large $n$. 
Let us define $\mathcal{E}(\rho)=|\Z^n\cap (\x+\rho\cB)|^2\frac{1}{p^{2n(1-R)}}$. From \cite[p.\ 119]{diPietrothesis}, we have
\begin{equation}
 \mathbb{E}[X_{\rho}] \geq \sqrt{\mathcal{E}(\rho)}.
\label{eq:EX_lbound}
\end{equation}

% \begin{equation}
%  \mathbb{E}[X_{\rho}] \geq |\Z^n\cap B_{n}(\x,\rho)|\frac{1}{p^{n(1-R)}}.
% \end{equation}
 In \cite[pp.\ 122--128]{diPietrothesis}, it was shown that the variance of $X_{\rho}$ can be bounded from above as follows.\footnote{The variance of $X_\rho$ is upper bounded by a sum of three terms, (\ref{eq:term1}), (\ref{eq:term2}), and (\ref{eq:term3}), 
 %which are equations (4.51), (4.56), and (4.60) respectively 
 which was also studied in \cite{diPietrothesis} to show that nested LDA lattices achieve the capacity of the power constrained AWGN channel. We impose stronger constraints on $B$ and $\lba$ so as to ensure that (\ref{eq:dxL_bound}) goes to zero sufficiently fast as $n\to\infty$.}
\begin{align}
 \text{Var}(X_{\rho}) &\leq \sum_{s=1}^{\lfloor \frac{n(1-R)}{A+1-R}\rfloor}n^{s(2-\lba(A-2))} & \label{eq:term1} \\
                                     & \quad +\sum_{\substack{i,j,t \\j\leq n(1-R)/(B(1-R)+1)\\i+j+t=n(1-R)\\i+j>0}}\mathcal{E}(\rho)\left( 1+\frac{Bj}{n-Bj} \right)^{\frac{n-Bj+1}{2}}n^{j(1-\lba(B(1-R)-2))}  &\notag \\
                                     &\qquad\qquad\qquad  \times\left( 1+\frac{Bi}{n-Bi} \right)^{\frac{n-Bi+1}{2}} n^{i(1-\lba(B(1-R)-2))}(1+o(1)) &\label{eq:term2} \\
                                    &\quad +\sum_{\substack{i,j,t \\j\leq n(1-R)/(B(1-R)+1)\\i+j+t=n(1-R)\\i+j>0}}\mathcal{E}(\rho) \left( 1+\frac{Bj}{n-Bj} \right)^{\frac{n-Bj+1}{2}}n^{j\lba(2-B(1-R))} &\notag \\
                                                         &\hspace{4cm}\times n^{(j+t)\left(1+\lba\left(\frac{1}{AB-1}+\frac{1}{A}-1 \right)\right)}\frac{n^{\lba}}{\sqrt{\mathcal{E}(\rho)}}(1+o(1)). & \label{eq:term3}
\end{align}
We show that (\ref{eq:term1}), (\ref{eq:term2}) and (\ref{eq:term3}) are all bounded from above by $\mathcal{E}(\rho)n^{-2\lba R-\dta}(1+o(1))$.

Let 
\begin{equation}
  \delta:=\frac{1}{2}\min\{ \lba(A-2(1+R))-2, \: \lba(B(1-R)-2(1+R))-1 \}.
\label{eq:delta_defn} 
\end{equation}
 The hypotheses of Theorem~\ref{theorem:LDA_MSEquantization} ensure that $\dta>0$.

\subsubsection{The First Term, (\ref{eq:term1})}

We have
\[
 \sum_{s=1}^{\lfloor \frac{n(1-R)}{A+1-R}\rfloor}n^{s(2-\lba(A-2))}= n^{2-\lba(A-2)}(1+o(1)),
\]
provided that the exponent is negative.
As long as $2-\lba(A-2)<-2\lba R-\dta$, we have the first term bounded from above by $n^{-2\lba R-\dta}(1+o(1))$.
This condition is indeed satisfied, since by definition, $\dta<\lba(A-2(1+R))-2$.

\subsubsection{The Second Term, (\ref{eq:term2})}

% For $0<j<\frac{n(1-R)}{B(1-R)+1}$, we can say that $n-Bj>cn$, for some constant $c>0$. 
For all $x>0$, we have $\ln(1+x)\leq x$, and hence
$(1+x)^{1/x}\leq e$. With this, we get
\[
  \left( 1+\frac{Bj}{n-Bj} \right)^{\frac{n-Bj}{2}} \leq e^{Bj/2}. 
\]
% for all sufficiently large $n$.
This implies that
\begin{align*} 
 \left( 1+\frac{Bj}{n-Bj} \right)^{\frac{n-Bj}{2}}n^{j(1-\lba(B(1-R)-2))} &\leq e^{Bj/2} n^{j(1-\lba(B(1-R)-2))} & \\
                                                                                                                          &= (c_1n)^{j(1-\lba(B(1-R)-2))}, &
\end{align*}
where $c_1= e^{B/(2(1-\lba(B(1-R)-2)))}$ is a positive constant. From (\ref{eq:delta_defn}), we have $\dta\leq \frac{1}{2}( \lba(B(1-R)-2(1+R))-1)$, and hence
$1-\lba(B(1-R)-2)\leq -2\lba R-2\dta$. Moreover, $c_1^{-2\lba R-2\dta}n^{-\dta}\leq 1$ for sufficiently large $n$. Hence,
\begin{equation}
  \left( 1+\frac{Bj}{n-Bj} \right)^{\frac{n-Bj}{2}}n^{j(1-\lba(B(1-R)-2))} \leq n^{j(-2\lba R-\dta)}
\end{equation}
for all sufficiently large $n$.
Similarly,
\begin{equation}
 \left( 1+\frac{Bi}{n-Bi} \right)^{\frac{n-Bi}{2}}n^{i(1-\lba(B(1-R)-2))} \leq n^{i(-2\lba R-\dta)}
\end{equation}
for all sufficiently large $n$.
Hence, the second term is bounded from above by
\begin{align}
 \sum_{\substack{i,j,t \\j\leq n(1-R)/(B(1-R)+1)\\i+j+t=n(1-R)\\i+j>0}}\mathcal{E}(\rho) n^{(i+j)(-2\lba R-\dta)} (1+o(1))&=\sum_{\substack{i,j \\j\leq n(1-R)/(B(1-R)+1)\\i+j\leq n(1-R)\\i+j>0}}\mathcal{E}(\rho) n^{(i+j)(-2\lba R-\dta)} (1+o(1))&\notag \\
                                                                                               &\leq \mathcal{E}(\rho) n^{-2\lba R-\dta} (1+o(1)). &\notag
\end{align}
% which follows from Lemma~\ref{lemma:basic_sum}.
% We can therefore conclude that the second term is bounded from above by $n^{-2\lba R-\dta}\mathcal{E}(\rho)(1+o(1))$.

\subsubsection{The Third Term, (\ref{eq:term3})}

Since $B>2/(1-R)$ and $\lba> 2\left(1-\frac{1}{AB-1}-\frac{1}{A}\right)^{-1}$, we have for $j\neq 0$,
\begin{equation}
  \left( 1+\frac{Bj}{n-Bj} \right)^{\frac{n-Bj+1}{2}}n^{j\lba(2-B(1-R))} = o(1), \text{ and}
\label{eq:thirdterm_1} 
\end{equation}
\begin{equation}
   n^{(j+t)\left(1+\lba\left(\frac{1}{AB-1}+\frac{1}{A}-1 \right)\right)} =o(1).
\label{eq:thirdterm_2} 
\end{equation}
If $j=0$, then the above terms are at most $1$.
Now,
\begin{align}
 \sqrt{\mathcal{E}(\rho)} &= |\Z^n\cap (\x+\rho\cB)|\frac{1}{p^{n(1-R)}} &\notag \\
                                           &\geq  V_n \left( \rho-\frac{\sqrt{n}}{2} \right)^n\frac{1}{p^{n(1-R)}} &\label{eq:thirdterm_mid1} \\
                                           & = V_n r^n \left( 1+\frac{1}{n^{\omega}} \right)^n \left( 1-\frac{\sqrt{n}}{2\rho} \right)^n  \frac{1}{p^{n(1-R)}}, &\notag 
\end{align}
where (\ref{eq:thirdterm_mid1}) follows from Lemma~\ref{lemma:Zn_cap_rB}.
But $V_nr^n=p^{n(1-R)}$. Using this, and simplifying, we get
\begin{align}
\sqrt{\mathcal{E}(\rho)}  & \geq p^{n(1-R)} \text{exp}\{n^{1-\omega}\} \text{exp}\left\{ \frac{\sqrt{2\pi e}}{2} n^{-\lba(1-R)}n (1+n^{-\omega})^{-1} \right\}\frac{1}{p^{n(1-R)}}(1+o(1)) &\notag \\
                                           & \geq \exp\{ n^{1-\omega}-o(1) \}. &\label{eq:thirdterm_3}
\end{align}
Therefore, $ 1/\sqrt{\mathcal{E}(\rho)}$ goes to zero faster than any polynomial. Combining (\ref{eq:thirdterm_1}), (\ref{eq:thirdterm_2}), and (\ref{eq:thirdterm_3}), 
we can conclude that (\ref{eq:term3}) is upper bounded by $\mathcal{E}(\rho)n^{-2\lba R -\dta}(1+o(1))$. 
As a consequence, the variance of $X_\rho$ is bounded from above by $3\mathcal{E}(\rho)n^{-2\lba R -\dta}(1+o(1))$.

\subsubsection{Proof of Lemma~\ref{lemma:dxL_bound}}

We have already seen in (\ref{eq:EX_lbound}) that $\bE[X_\rho]\geq \sqrt{\mathcal{E}(\rho)} $ and in the previous subsections, we showed that $\text{Var}(X_\rho)\leq \mathcal{E}(\rho)n^{-2\lba R -\dta}(1+o(1))$.
Therefore,
\begin{align}
 \Pr[d(\x,\L)>\rho] &= \Pr[X_\rho =0]\leq \Pr[X_\rho \leq 0]& \notag \\
                                & = \Pr\big[X_\rho-\bE [X_\rho]\leq -\bE [X_\rho]\big] & \notag \\
                                &\leq \Pr\big[|X_\rho - \bE [X_\rho]|\geq\bE[X_\rho]\big]. &\notag
\end{align}
Using Chebyshev's inequality, we get
\[
 \Pr[d(\x,\L)>\rho] \leq \frac{\text{Var}(X_\rho)}{\left( \bE[X_\rho] \right)^2} \leq \frac{3}{n^{2\lba R+\dta}}(1+o(1)),
\]
completing the proof of Lemma~\ref{lemma:dxL_bound}. \qed
%%%%%%%%%%%%%%%%%%%%%%%%%%%%%%%%%%%%%%%%%%%%%%%%%%%%%%%%%%%%%%%%%%%%%%%%%%%%%%%%%%%%%
%%%%%%%%%%%%%%%%%%%%%%%%%%%%%%%%%%%%%%%%%%%%%%%%%%%%%%%%%%%%%%%%%%%%%%%%%%%%%%%%%%%%%%%
%%%%%%%%%%%%%%%%%%%%%%%%%%%%%%%%%%%%%%%%%%%%%%%%%%%%%%%%%%%%%%%%%%%%%%%%%%%%%%%%%%%%%%%
\subsection{Proof of Lemma~\ref{lemma:ELX_EUL}}

 Recall that $U$ is uniformly distributed over $[0,p)^n$, and $X$ is uniformly distributed over $\cV(\L)$. We have,
\begin{align}
&\mathbb{E}_{U}\mathbb{E}_{\L}[d^2(U,\L)|H\text{ is full rank}] &\notag\\
				   &= \int_{\u\in[0,p)^n}\sum_{\L_1}d^2(\u,\L_1) \frac{Pr[\L=\L_1|H \text{ is full rank}]}{p^n} d\u &\notag \\
                                                                                                                     &= \sum_{\L_1}\int_{\u\in[0,p)^n}d^2(\u,\L_1) \frac{Pr[\L=\L_1|H \text{ is full rank}]}{p^n} d\u &\notag \\
                                                                                                                     &=\sum_{\L_1}\sum_{\z\in \L_1\cap [0,p)^n}\int_{\x\in\cV(\L_1)}d^2(\x+\z,\L_1) \frac{Pr[\L=\L_1|H \text{ is full rank}]}{p^n} d\x .&\notag 
\end{align}
 For all $\z\in\L$, we have $d(\x+\z,\L)=d(\x,\L)$. Hence,
\begin{align}
 \mathbb{E}_{U}\mathbb{E}_{\L}[d^2(U,\L)|H\text{ is full rank}]&=\sum_{\L_1}p^{nR}\int_{\x\in\cV(\L_1)}d^2(\x,\L_1) \frac{Pr[\L=\L_1|H \text{ is full rank}]}{p^n} d\x &\notag \\
                                                                                                                     &=\sum_{\L_1}\int_{\x\in\cV(\L_1)}d^2(\x,\L_1) \frac{Pr[\L=\L_1|H \text{ is full rank}]}{p^{n(1-R)}} d\x &\notag \\
                                                                                                                     &=\mathbb{E}_{\L}\mathbb{E}_{X}[d^2(X,\L)|H\text{ is full rank}]. &\notag
\end{align}
This completes the proof.\qed

%%%%%%%%%%%%%%%%%%%%%%%%%%%%%%%%%%%%%%%%%%%%%%%%%%%%%%%%%%%%%%%%%%%%%%%%%%%%%%%%%%%
%%%%%%%%%%%%%%%%%%%%%%%%%%%%%%%%%%%%%%%%%%%%%%%%%%%%%%%%%%%%%%%%%%%%%%%%%%%%%%%%%%%
%%%%%%%%%%%%%%%%%%%%%%%%%%%%%%%%%%%%%%%%%%%%%%%%%%%%%%%%%%%%%%%%%%%%%%%%%%%%%%%%
\section*{Appendix~C}
The proof proceeds by splitting the summation in (\ref{eq:LDAdualpacking_step1}) into four parts, and showing that each quantity goes to zero as $n\to\infty$. The sum is divided into the following regimes:
\begin{enumerate}
 \item $1\leq t < \vartheta n(1-R)$,
 \item $ \vartheta n(1-R)\leq t < n(1-R)/2$,
 \item $n(1-R)/2 \leq t < (1-R-C_1/\ln n) n-1$,
 \item $(1-R-C_1/\ln n) n -1\leq t \leq n$,
\end{enumerate}
where $C_1$ is as defined in (\ref{eq:C1_defn}).
In each case, we will use the appropriate expansion properties of the underlying Tanner graph to prove the desired result.

\subsection{Case 1: {$1\leq t < \vartheta n(1-R)$}}
We will use property (R1) of the expander graph in this part of the proof. In this case, we have $t=|\mathrm{Supp}(\u)|\leq \vartheta n(1-R)$. Therefore, $|N(\mathrm{Supp}(\u))|=|\mathbb{S}(\u)|\geq B t$,
so that $\mathtt{1}_m(\Su)=0$ for $m<B t$. Consider
\begin{align}
 \phi_1(n) &:= \sum_{t=1}^{\vartheta n(1-R)}\sum_{\substack{\u\in\Fp^{n(1-R)}\\ |\mathrm{Supp}(\u)|=t}} \sum_{m=1}^{n}\mathtt{1}_m(\Su)
                                                     \frac{1}{p^m}\vert\{\x\in\Z^n\cap r_n\cB : |\mathrm{Supp}(\x)|\leq m\}\vert &\notag \\ 
                 &\leq \sum_{t=1}^{\vartheta n(1-R)}\sum_{\substack{\u\in\Fp^{n(1-R)}\\ |\mathrm{Supp}(\u)|=t}} \sum_{m=B t}^{n}
                                                     \frac{1}{p^m}\vert\{\x\in\Z^n\cap r_n\cB : |\mathrm{Supp}(\x)|\leq m\}\vert. &\notag 
\end{align}
Using Lemma~\ref{lemma:Zn_cap_rB}, the above quantity can be bounded from above as
\begin{align}
 \phi_1(n) &\leq \sum_{t=1}^{\vartheta n(1-R)}\sum_{\substack{\u\in\Fp^{n(1-R)}\\ |\mathrm{Supp}(\u)|=t}} \sum_{m=B t}^{n}
                                                     \frac{1}{p^m}\nchoosek{n}{m}V_m\left( r_n+\frac{\sqrt{m}}{2} \right)^m &\notag \\
                   &\leq \sum_{t=1}^{\vartheta n(1-R)}\sum_{\substack{\u\in\Fp^{n(1-R)}\\ |\mathrm{Supp}(\u)|=t}} \sum_{m=B t}^{n}
                                                     \frac{1}{p^m}\nchoosek{n}{m}V_m r_n^{m}\left( 1+\frac{\sqrt{m}}{2r_n} \right)^m &\notag \\
                    &= \sum_{t=1}^{\vartheta n(1-R)}\sum_{\substack{\u\in\Fp^{n(1-R)}\\ |\mathrm{Supp}(\u)|=t}} \sum_{m=B t}^{n}
                                                     \frac{1}{p^m}\nchoosek{n}{m}V_m \frac{p^{mR}}{V_n^{m/n}}\zeta_n^m\left( 1+\frac{\sqrt{m}}{2r_n} \right)^m .&\label{eq:phi1_1} 
\end{align}
Using Stirling's approximation, we get 
\[
  V_m=\frac{\pi^{m/2}}{\Gamma(1+m/2)} \leq \frac{\pi^{m/2}e^m}{(2\pi)^{1/2}m^{m+1/2}},
\]
and
\[
 V_n \geq \frac{\pi^{n/2}e^n}{en^{n+1/2}}.
\]
Therefore,
\begin{equation}
 \frac{V_m}{V_n^{m/n}}\leq c'\left(\frac{n}{m}\right)^{m+1/2}(1+o(1)),
 \label{eq:Vm_Vn_0}
\end{equation}
where $c'$ is a positive constant.
If $m>an$ for some $0<a<1$, then
\begin{equation}
 \frac{V_m}{V_n^{m/n}}\leq c\left(\frac{n}{m}\right)^{m}(1+o(1)),
 \label{eq:Vm_Vn}
\end{equation}
where $c$ is a positive constant.

Observe that $\zeta_n<1$ for all sufficiently large $n$, and $1+\frac{\sqrt{m}}{2r_n}\leq 2$. Using this, and (\ref{eq:Vm_Vn_0}) , the inequality (\ref{eq:phi1_1}) reduces to
\begin{align}
 \phi_1(n) &\leq c'\sum_{t=1}^{\vartheta n(1-R)}\sum_{\substack{\u\in\Fp^{n(1-R)}\\ |\mathrm{Supp}(\u)|=t}} \sum_{m=B t}^{n}
                                                     \frac{1}{p^{m(1-R)}}\nchoosek{n}{m}\left(2\frac{n}{m}\right)^m\left(\frac{n}{m}\right)^{1/2} (1+o(1)) &\notag \\
                   &\leq c'\sum_{t=1}^{\vartheta n(1-R)} \nchoosek{n(1-R)}{t}p^t  \sum_{m=B t}^{n}
                                                     \frac{1}{p^{m(1-R)}}\nchoosek{n}{m}\left(2\frac{n}{m}\right)^m\left(\frac{n}{m}\right)^{1/2} (1+o(1)). &\notag 
\end{align}
Using the inequalities $\nchoosek{n}{k}\leq n^k$ and $n/m\leq n$, we get
\begin{align}
 \phi_1(n)    &\leq c'\sum_{t=1}^{\vartheta n(1-R)} (n(1-R))^tp^t  \sum_{m=B t}^{n}
                                                     \frac{(2n^2)^{m}}{p^{m(1-R)}}n^{1/2} (1+o(1)) &\notag \\
                    &=  c'\sum_{t=1}^{\vartheta n(1-R)} (n(1-R))^tp^t  
                                                     \frac{(2n^2)^{Bt}}{p^{Bt(1-R)}}n^{1/2} (1+o(1)) &\notag \\
                    &= c'\sum_{t=1}^{\vartheta n(1-R)}(2^B(1-R))^t n^{t(1+\lba+2B-\lba B(1-R))}n^{1/2}(1+o(1))&\notag\\
                    &\leq c'\sum_{t=1}^{\vartheta n(1-R)}(2^B(1-R))^t n^{t(3/2+\lba+2B-\lba B(1-R))}(1+o(1)). &\label{eq:phi1_bound}
\end{align}
But we have $3/2+\lba+2B-\lba B(1-R)<0$, because the hypothesis of Theorem~\ref{theorem:LDA_dualpacking} guarantees that
$\lba>\frac{2B+3/2}{B(1-R)-1}$. Using the fact that $\sum_{t=a}^{b}n^t = n^{a}(1+o(1))$, we can conclude that (\ref{eq:phi1_bound}) is bounded from above by 
$(c''n)^{3/2+\lba+2B-\lba B(1-R)}(1+o(1))$ for some constant $c''$, and hence goes to zero as $n\to\infty$.

\subsection{Case 2: {$ \vartheta n(1-R)\leq t < n(1-R)/2$}}
 We will use property (R2) of the expander graph in this part of the proof. Since $|\mathrm{Supp}(\u)|= t < n(1-R)/2$, we have $|N(\mathrm{Supp}(\u))|=|\mathbb{S}(\u)|\geq \beta t$.
 Therefore, $\Pr[\mathbb{S}(\u)=m]=0$ for $m<\beta t$. Proceeding along the same lines as in the previous subsection, we get
 \begin{align}
  \phi_2(n) &:= \sum_{t=\vartheta n(1-R)}^{n(1-R)/2}\sum_{\substack{\u\in\Fp^{n(1-R)}\\ |\mathrm{Supp}(\u)|=t}} \sum_{m=1}^{n}\mathtt{1}_m(\Su)
                                                     \frac{1}{p^m}\vert\{\x\in\Z^n\cap r_n\cB : |\mathrm{Supp}(\x)|\leq m\}\vert &\notag \\ 
%                    &= \sum_{t=\vartheta n(1-R)}^{n(1-R)/2}\sum_{\substack{\u\in\Fp^{n(1-R)}\\ |\mathrm{Supp}(\u)|=t}} \sum_{m=\beta t}^{n}\mathtt{1}_m(\Su)
%                                                      \frac{1}{p^m}\vert\{\x\in\Z^n\cap r_n\cB : |\mathrm{Supp}(\x)|\leq m\}\vert &\notag \\
                   &\leq  \sum_{t=\vartheta n(1-R)}^{n(1-R)/2}\sum_{\substack{\u\in\Fp^{n(1-R)}\\ |\mathrm{Supp}(\u)|=t}} \sum_{m=\beta t}^{n}
                                                     \frac{1}{p^m}\nchoosek{n}{m}\frac{V_m}{V_n^{m/n}}p^{mR}\left(1+\frac{\sqrt{m}}{2r_n}\right)^m(1+o(1)). &\notag 
\end{align}
Using (\ref{eq:Vm_Vn}), and the inequalities $\nchoosek{n}{m}\leq 2^n$ and $1+\frac{\sqrt{m}}{2r_n}\leq 2$,   
\begin{align}
     \phi_2(n)&\leq  c\sum_{t=\vartheta n(1-R)}^{n(1-R)/2}\sum_{\substack{\u\in\Fp^{n(1-R)}\\ |\mathrm{Supp}(\u)|=t}} \sum_{m=\beta t}^{n}
                                                     \frac{1}{p^{m(1-R)}}\nchoosek{n}{m}\left(\frac{n}{m}  \right)^m\left(1+\frac{\sqrt{m}}{2r_n}\right)^m (1+o(1)) &\notag\\
                   &\leq c \sum_{t=\vartheta n(1-R)}^{n(1-R)/2}\sum_{\substack{\u\in\Fp^{n(1-R)}\\ |\mathrm{Supp}(\u)|=t}} \sum_{m=\beta t}^{n}
                                                     \frac{1}{p^{m(1-R)}}2^n\left(\frac{n}{m}  \right)^m2^m(1+o(1)). \notag
\end{align}
Since $n\geq m\geq \beta \vartheta n(1-R)$, we get
\begin{align}
       \phi_2(n) &\leq  c\sum_{t=\vartheta n(1-R)}^{n(1-R)/2}\sum_{\substack{\u\in\Fp^{n(1-R)}\\ |\mathrm{Supp}(\u)|=t}} \sum_{m=\beta t}^{n}
                                                     \frac{1}{p^{m(1-R)}}2^n\left(\frac{1}{\beta \vartheta (1-R)}  \right)^n 2^n(1+o(1)) &\notag\\
                   &\leq  c\sum_{t=\vartheta n(1-R)}^{n(1-R)/2}\nchoosek{n(1-R)}{t}p^t \sum_{m=\beta t}^{n}
                                                     \frac{1}{p^{m(1-R)}}\left(\frac{4}{\beta \vartheta (1-R)}  \right)^n (1+o(1)) &\notag\\
                   &\leq c \sum_{t=\vartheta n(1-R)}^{n(1-R)/2}2^{n(1-R)}p^t 
                                                     \frac{1}{p^{\beta t(1-R)}}\left(\frac{4}{\beta \vartheta (1-R)}  \right)^n (1+o(1)) &\notag\\
                   &\leq c 2^{n(1-R)} 
                                                     \frac{1}{p^{(\beta (1-R)-1)\vartheta n(1-R)}}\left(\frac{4}{\beta \vartheta (1-R)}  \right)^n (1+o(1)), &
 \end{align}
which goes to zero as $n\to\infty$, since $\beta>1/(1-R)$ from Definition~\ref{defn:goodgraphs}.

\subsection{Case 3: {$n(1-R)/2 \leq t < (1-R-C_1/\ln n) n-1$}}
 
We will use the following property of $(\alpha, A, \beta, B,\epsilon,\vartheta)$-good expander graphs: 
\begin{lemma}[\cite{diPietrothesis},Lemma~3.2]
If $S\subset V$ is such that $|N(S)|<n(1-R)/2$, then $|S|\leq |N(S)|/\alpha$.
\label{lemma:expander_left}
\end{lemma}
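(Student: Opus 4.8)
The plan is to derive this lemma purely from the left-expansion property (L2) in Definition~\ref{defn:goodgraphs} together with the trivial monotonicity of neighbourhoods: $S_1\subseteq S_2$ implies $N(S_1)\subseteq N(S_2)$. Recall that (L2) asserts that every $S\subset V$ with $|S|\leq\lceil n(1-R)/(2\alpha)\rceil$ satisfies $|N(S)|\geq\alpha|S|$. The argument is the standard ``expander monotonicity'' trick, so there is no serious obstacle; the only steps needing a little care are the rounding and the choice of a subset of the threshold size.

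First I would dispose of the easy case. If the given $S$ already satisfies $|S|\leq\lceil n(1-R)/(2\alpha)\rceil$, then (L2) applied to $S$ yields $|N(S)|\geq\alpha|S|$, i.e.\ $|S|\leq|N(S)|/\alpha$, which is exactly the claim. (Note that in this case the hypothesis $|N(S)|<n(1-R)/2$ is not even used.)

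The remaining case is $|S|>\lceil n(1-R)/(2\alpha)\rceil$, and here I would argue by contradiction with the hypothesis. Since $|S|$ strictly exceeds the threshold, we may pick a subset $S'\subseteq S$ with $|S'|=\lceil n(1-R)/(2\alpha)\rceil$. Applying (L2) to $S'$, and using $\alpha\geq 1$ together with $\lceil x\rceil\geq x$, gives $|N(S')|\geq\alpha|S'|=\alpha\lceil n(1-R)/(2\alpha)\rceil\geq n(1-R)/2$. Since $S'\subseteq S$ forces $N(S')\subseteq N(S)$, we conclude $|N(S)|\geq|N(S')|\geq n(1-R)/2$, contradicting the assumption $|N(S)|<n(1-R)/2$. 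Hence this case is impossible, so any $S$ with $|N(S)|<n(1-R)/2$ must fall in the easy case, and the lemma follows.

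The only points that merit attention are: (i) the rounding step, where one needs $\alpha\geq 1$ — guaranteed by $1\leq\alpha<A$ in Definition~\ref{defn:goodgraphs} — so that $\alpha\lceil n(1-R)/(2\alpha)\rceil\geq n(1-R)/2$; and (ii) the observation that the strict inequality $|S|>\lceil n(1-R)/(2\alpha)\rceil$ is precisely what licenses extracting a subset $S'$ of exactly the threshold cardinality. Neither is a real difficulty, and any concern about the integrality of $n(1-R)$ is already absorbed by the ceilings appearing in (L2) itself.
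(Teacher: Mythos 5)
Your proof is correct and is essentially the same argument as the paper's, which proves the contrapositive: if $|N(S)|<\alpha|S|$ then (L2) forces $|S|>n(1-R)/(2\alpha)$, whence $|N(S)|\geq \alpha\, n(1-R)/(2\alpha)=n(1-R)/2$. Your case split is just this argument run in the other direction, with the subset-extraction and neighbourhood-monotonicity step (which the paper leaves implicit) spelled out explicitly.
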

\begin{proof}
 Let us prove the contrapositive of the above statement. Suppose that $|S|>|N(S)|/\alpha$. Equivalently, $|N(S)|<\alpha |S|$. 
 This implies that $|S|>n(1-R)/(2\alpha)$, otherwise we would be in violation of property (L2) in Definition~\ref{defn:goodgraphs}.
 But from (L2), we have $|N(S)|\geq \alpha n(1-R)/(2\alpha)=n(1-R)/2$, and this completes the proof.
\end{proof}

\begin{figure}
% \setcaptionmargin{5mm}
% \onelinecaptionstrue
\begin{center}
 \includegraphics[width=7cm]{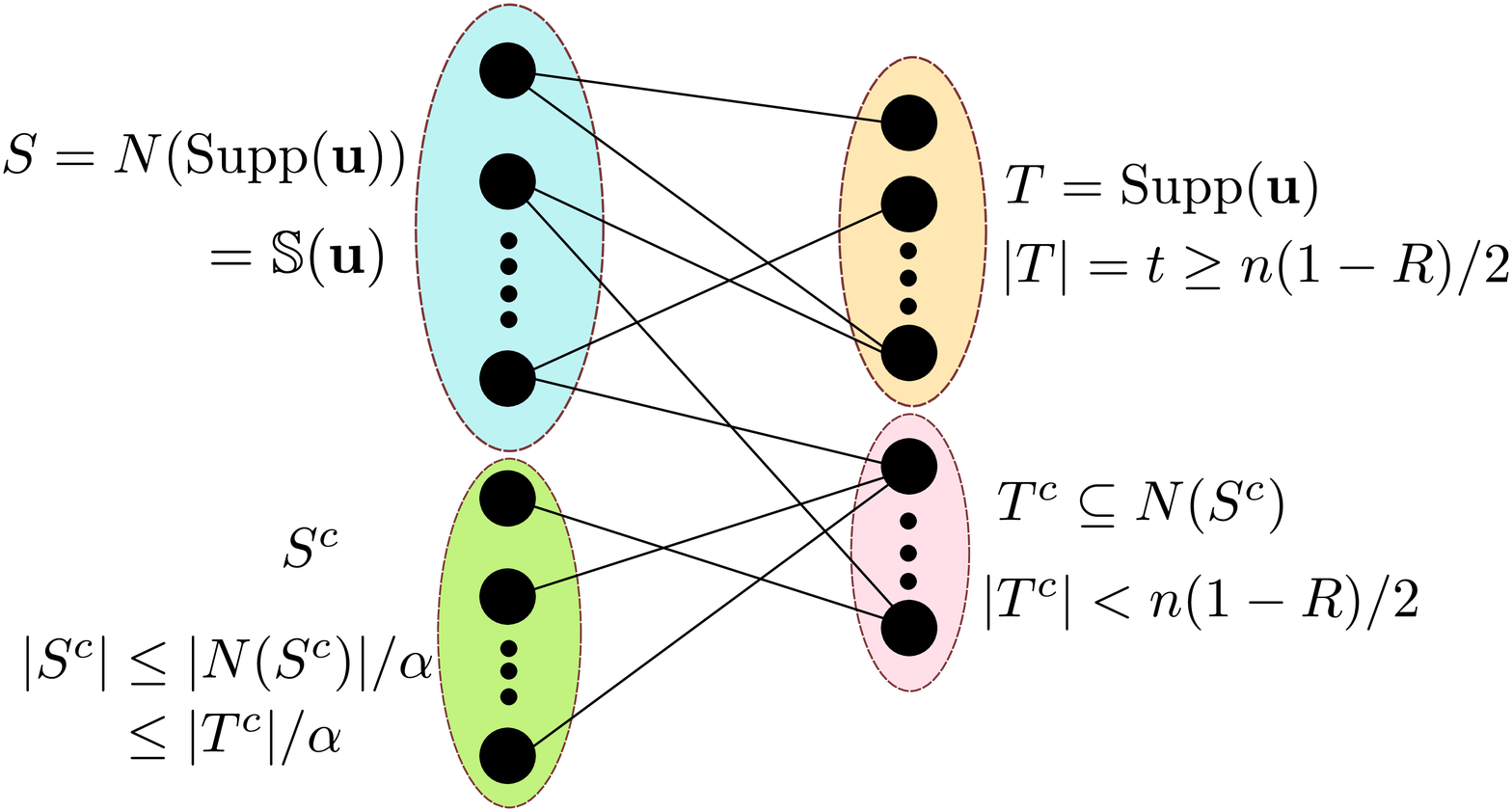}
% \end{center}
\end{center}
% \captionstyle{normal}
\caption{Part 3 of proof.}
\label{fig:graph_part3}
\end{figure}
Since $T:=\mathrm{Supp}(\u)$ has at least  $n(1-R)/2$ vertices, the set $T^c$ has less than $n(1-R)/2$ vertices (see Fig.\ \ref{fig:graph_part3}). If $S:= \mathbb{S}(\u)=N(T)$, then, $S^c$ has does not have any neighbours from $T$.
Hence, $N(S^c)\subset T^c$. But $|T^c|<n(1-R)/2$ must imply that $|S^c|\leq |T^c|/\alpha$, from Lemma~\ref{lemma:expander_left}.
Therefore, $n-|S|\leq (n(1-R)-|T|)/\alpha$, or $|S|\geq n(1-(1-R)/\alpha)+t/\alpha$. This means that $\Pr[\mathrm{Supp}(\u)=m]=0$ for $m<n(1-(1-R)/\alpha)+t/\alpha$.

Consider
\begin{align}
     \phi_3(n) &:= \sum_{t= n(1-R)/2}^{n(1-R-C_1/\ln n)}\sum_{\substack{\u\in\Fp^{n(1-R)}\\ |\mathrm{Supp}(\u)|=t}} \sum_{m=1}^{n}\mathtt{1}_m(\Su)
                                                     \frac{1}{p^m}\vert\{\x\in\Z^n\cap r_n\cB : |\mathrm{Supp}(\x)|\leq m\}\vert &\notag \\ 
                      &\leq \sum_{t= n(1-R)/2}^{n(1-R-C_1/\ln n)}\sum_{\substack{\u\in\Fp^{n(1-R)}\\ |\mathrm{Supp}(\u)|=t}} \sum_{m=n(1-(1-R)/\alpha)+t/\alpha}^{n}
                                                     \frac{1}{p^m}\vert\{\x\in\Z^n\cap r_n\cB : |\mathrm{Supp}(\x)|\leq m\}\vert &\notag 
\end{align}
Following the approach in the previous subsections, the above reduces to
\begin{align}
   \phi_3(n) &\leq  c\sum_{t= n(1-R)/2}^{n(1-R-C_1/\ln n)}\nchoosek{n(1-R)}{t}p^t \sum_{m=n(1-(1-R)/\alpha)+t/\alpha}^{n}
                                                     \frac{1}{p^{m(1-R)}}\nchoosek{n}{m}\left(\frac{n}{m}\right)^m 2^m (1+o(1)) &\notag \\
                    & \leq  c\sum_{t= n(1-R)/2}^{n(1-R-C_1/\ln n)}8^n p^t \sum_{m=n(1-(1-R)/\alpha)+t/\alpha}^{n}
                                                     \frac{1}{p^{m(1-R)}}\left(\frac{n}{m}\right)^n  (1+o(1)), &\notag
\end{align}
where the last step uses the inequality $\nchoosek{n}{k}\leq 2^n$. Since $m\geq n(1-(1-R)/\alpha)+t/\alpha\geq n(1-(1-R)/\alpha +(1-R)/(2\alpha))$, we get 
\begin{align}
 \phi_3(n) &\leq c\sum_{t= n(1-R)/2}^{n(1-R-C_1/\ln n)}8^n p^t \sum_{m=n(1-(1-R)/\alpha)+t/\alpha}^{n}
                                                     \frac{1}{p^{m(1-R)}}\left(\frac{1}{1-(1-R)/\alpha +(1-R)/(2\alpha)}\right)^n  (1+o(1)) &\notag \\
                  &\leq c\sum_{t= n(1-R)/2}^{n(1-R-C_1/\ln n)} \left(\frac{8}{1-(1-R)/(2\alpha)}\right)^n \frac{p^t}{p^{n(1-R)(1-(1-R)/\alpha)+t/\alpha}}(1+o(1)) &\notag\\
                  &=  c\sum_{t= n(1-R)/2}^{n(1-R-C_1/\ln n)} n^{\frac{n\ln (8/(1-(1-R)/(2\alpha)))}{\ln( n)}} \frac{n^{\lba t}}{n^{\lba n(1-R)(1-(1-R)/\alpha)+\lba t/\alpha}}(1+o(1)) .&\label{eq:phi_3_2}
\end{align}
If we have
\[
     \lba n(1-R)\left( 1-\frac{1-R}{\alpha} \right) +\lba t\frac{(1-R)}{\alpha} -\lba t -\frac{n}{\ln n}\ln \left(\frac{8}{1-(1-R)/(2\alpha)}\right) > 1+\delta
\]
for some $\delta>0$, then (\ref{eq:phi_3_2}) is upper bounded by $c n\times n^{-1-\delta}(1+o(1))$, which goes to zero as $n\to\infty$.
Simplifying the above quantity gives us the condition
\[
  t< n(1-R)-n \frac{C_1}{\ln n} -\frac{1+\delta}{\lba(1-(1-R)/\alpha)},
\]
which is satisfied in this regime, and hence, $\phi_3(n)\to 0$ as $n\to\infty$.

\subsection{Case 4: {$(1-R-C_1/\ln n) n-1 \leq t < n$}}
For any subset of parity check nodes, $T\subset C$, we have $|N(T)|\geq |T|/(1-R)$. This is because the number of edges between $T$ and $N(T)$ is $|T|\Dta_V/(1-R)$, 
but the number of edges incident on each node in $N(T)$ from $T$  is at most $\Dta_V$. 
Therefore, we have
\begin{align}
 \phi_4(n) &:= \sum_{t= n(1-R-C_1/\ln n)}^{n}\sum_{\substack{\u\in\Fp^{n(1-R)}\\ |\mathrm{Supp}(\u)|=t}} \sum_{m=1}^{n}\mathtt{1}_m(\Su)
                                                     \frac{1}{p^m}\vert\{\x\in\Z^n\cap r_n\cB : |\mathrm{Supp}(\x)|\leq m\}\vert &\notag \\ 
                   &\leq c\sum_{t= n(1-R-C_1/\ln n)}^{n} \nchoosek{n(1-R)}{t}p^t \sum_{m=t/(1-R)}^{n}\frac{1}{p^{m(1-R)}} \nchoosek{n}{m}\left(\frac{n}{m}\right)^m\zeta_n^m (1+o(1))\notag \\
                   &= c\sum_{t= n(1-R-C_1/\ln n)}^{n} \nchoosek{n(1-R)}{n(1-R)-t}p^t \sum_{m=t/(1-R)}^{n}\frac{1}{p^{m(1-R)}} \nchoosek{n}{n-m}\left(\frac{n}{m}\right)^m\zeta_n^m (1+o(1)).\notag 
\end{align}
Since $\nchoosek{n}{n-k}$ is a decreasing function of $k$ for $k>n/2$, we have
\begin{align}
  \phi_4(n)&\leq c\sum_{t= n(1-R-C_1/\ln n)}^{n} \nchoosek{n(1-R)}{nC_1/\ln n}p^t \sum_{m=t/(1-R)}^{n}\frac{1}{p^{m(1-R)}} \nchoosek{n}{nC_1/((1-R)\ln n)}&\notag\\
  &\qquad\qquad \qquad\qquad \times\left(\frac{n}{n-\frac{nC_1}{(1-R)\ln n}}\right)^m\zeta_n^m (1+o(1)).\notag
\end{align}
Using the inequality $\nchoosek{n}{m}\leq \left(\frac{ne}{m}\right)^m$ and simplifying, we get
\begin{align}
 \phi_4(n) &\leq  c\sum_{t= n(1-R-C_1/\ln n)}^{n} \left(\frac{e(1-R)\ln n}{C_1}\right)^{nC_1/\ln n}p^t \sum_{m=t/(1-R)}^{n}\frac{1}{p^{m(1-R)}} &\notag \\
                 &\qquad\qquad \times\left(\frac{e(1-R)\ln n}{C_1}\right)^{nC_1/((1-R)\ln n)}\left(\frac{1}{1-\frac{C_1}{(1-R)\ln n}}\right)^n\zeta_n^m (1+o(1)).\notag
\end{align}
For all sufficiently large $n$, we have $m\geq n(1-C_1/((1-R)\ln n))>n/2$. Therefore, since $\zeta_n<1$, we have
\begin{align}
 \phi_4(n) &\leq  c\sum_{t= n(1-R-C_1/\ln n)}^{n} \left(\frac{e(1-R)\ln n}{C_1}\right)^{nC_1/\ln n}p^t \sum_{m=t/(1-R)}^{n}\frac{1}{p^{m(1-R)}} &\notag \\
                 &\qquad\qquad \times\left(\frac{e(1-R)\ln n}{C_1}\right)^{nC_1/((1-R)\ln n)}\left(\frac{1}{1-\frac{C_1}{(1-R)\ln n}}\right)^n\zeta_n^{n/2} (1+o(1)) &\notag \\
                 &\leq  c\sum_{t= n(1-R-C_1/\ln n)}^{n} \left(\frac{e(1-R)\ln n}{C_1}\right)^{2nC_1/((1-R)\ln n)}\left(\frac{1}{1-\frac{C_1}{(1-R)\ln n}}\right)^n\zeta_n^{n/2} (1+o(1))\notag\\
                 &\leq c n \left(\frac{e(1-R)\ln n}{C_1}\right)^{2nC_1/((1-R)\ln n)} \left(\frac{1}{1-\frac{C_1}{(1-R)\ln n}}\right)^n\zeta_n^{n/2} (1+o(1)),\notag
\end{align}
which goes to zero as $n\to\infty$ because of our choice of $\zeta_n$. This completes the proof of Theorem~\ref{theorem:LDA_dualpacking}. 
\qed
 %%%%%%%%%%%%%%%%%%%%%%%%%%%%%%%%%%%%%%%%%%%%%%%%%%%%%%%%%%%%%%%%%%%%%%%%%%%%%%%%%%%%%
 %%%%%%%%%%%%%%%%%%%%%%%%%%%%%%%%%%%%%%%%%%%%%%%%%%%%%%%%%%%%%%%%%%%%%%%%%%%%%%%%%%%%%%%%
 %%%%%%%%%%%%%%%%%%%%%%%%%%%%%%%%%%%%%%%%%%%%%%%%%%%%%%%%%%%%%%%%%%%%%%%%%%%%%%%%%%%%%%%%
\newpage

\end{document}